\def\D{\mathcal{D}}
\def\H{\mathcal{H}}
\def\P{\mathcal{P}}
\def\S{\mathfrak{S}}
\def\F{\mathfrak{F}}
\def\C{\mathfrak{C}}
\def\T{\mathfrak{T}}
\def\N{\mathbb{N}}
\newcommand{\supp}{\mathrm{supp}}
\newcommand{\rank}{\mathrm{rank}}
\newcommand{\id}{\mathrm{Id}}
\newcommand{\Tr}{\mathrm{Tr}}
\newcommand{\shs}{\hspace{1pt}}
\newcounter{defin}  \newcounter{lemma}  \newcounter{theorem}
\newcounter{proposition} \newcounter{corol}  \newcounter{remark} \newcounter{example}
\newenvironment{lemma}{\par\refstepcounter{lemma}     \textbf{Lemma \thelemma.} }{\rm\par}
\newenvironment{theorem}{\par\refstepcounter{theorem}     \textbf{Theorem \thetheorem.}\ }{\rm\par}
\newenvironment{proposition}{\par\refstepcounter{proposition}     \textbf{Proposition \theproposition.}\ }{\rm\par}
\newenvironment{corollary}{\par\refstepcounter{corol}     \textbf{Corollary \thecorol.} }{\rm\par}
\newenvironment{remark}{\par\refstepcounter{remark}     \textbf{Remark \theremark.}}{\rm\par}
\newenvironment{example}{\par\refstepcounter{example}     \textbf{Example \theexample.}}{\rm\par}
\begin{document}


\title{The Alicki-Fannes-Winter technique in the quasi-classical settings: advanced version and its applications}

\author{M.E.~Shirokov\footnote{email:msh@mi.ras.ru}\\
Steklov Mathematical Institute, Moscow, Russia}
\date{}
\maketitle
\begin{abstract}
We describe an advanced version of the AFW-technique proposed in \cite{LCB,LLB} which allows us to obtain
lower semicontinuity bounds, continuity bounds and local lower bounds for characteristics of quantum systems and discrete random variables.

We consider applications of the new version of the AFW-technique to several basic characteristics of quantum systems (the von Neumann entropy, the energy-type functionals,
the quantum relative entropy, the conditional entropy and the entanglement of formation).
\end{abstract}

\tableofcontents

\section{Introduction}

In the resent articles \cite{LCB,LLB}, a special modification  of the Alicki-Fannes-Winter technique
widely used for quantitative continuity analysis of characteristics  of quantum systems (see \cite{A&F,W-CB},\cite[Section 3]{QC}) were proposed. The main advantage of this modification (called \emph{quasi-classical} in  \cite{LCB,LLB}) consists in  the fact that it provides a quite  universal way to
obtain semicontinuity bounds and local lower bounds for characteristics  of quantum systems
(resp.  discrete random variables) which, mathematically, are functions on the set of trace class positive operators with unit trace (resp.  the set of probability
distributions).

If $f(x)$ is a function on a metric space $X$ taking values in $(-\infty,+\infty]$ then a \emph{lower semicontinuity bound} for this function at a point $x_1$ with finite $f(x_1)$ is the inequality
\begin{equation}\label{SCB-g}
f(x_1)-f(x_2)\leq \mathrm{SB}_f(\varepsilon|\shs x_1)
\end{equation}
valid for any $x_2$ in $X$ such that $d(x_1,x_2)\leq\varepsilon$, where $d$ is a metric on $X$ and
it is assumed that the function $\mathrm{SB}_f(\varepsilon|\shs x_1)$ may depend on $x_1$ but cannot depend on $x_2$. The function $f(x)$ is
lower semicontinuous at a point $x_1$ if and only if (\ref{SCB-g}) holds with  the function $\mathrm{SB}_f(\varepsilon|\shs x_1)$ tending to zero as $\,\varepsilon\to0$.
In this case we say that the semicontinuity bound (\ref{SCB-g}) is \emph{faithful}.

A \emph{local lower bound} for a function $f(x)$ on a metric space $X$ at a point $x_1$ is the inequality
\begin{equation}\label{LLB-g}
f(x_2)\geq \mathrm{LB}_f(\varepsilon|\shs x_1)
\end{equation}
valid for any $x_2$ in $X$ such that $d(x_1,x_2)\leq\varepsilon$, where $d$ is a metric on $X$ and
it is assumed that the function $\mathrm{LB}_f(\varepsilon|\shs x_1)$ may depends on $x_1$ but cannot depends on $x_2$.  The function $f(x)$ is
lower semicontinuous at a point $x_1$ if and only if (\ref{LLB-g}) holds with  the function $\mathrm{LB}_f(\varepsilon|\shs  x_1)$ tending to $f(x_1)\leq+\infty$ as $\varepsilon\to0$.
In this case we say that the local lower bound (\ref{LLB-g}) is \emph{faithful}. If $f(x_1)<+\infty$ then
a faithful local lower bound (\ref{LLB-g}) can be obtained from any faithful semicontinuity bound (\ref{SCB-g}) and vice versa.
\smallskip

The  Alicki-Fannes-Winter technique in quasi-classical settings proposed in  \cite{LCB,LLB} is
used in these articles to obtain lower semicontinuity and local lower bounds for several important characteristics  of quantum systems
and discrete random variables. The aim of this article is to describe an advanced version of this technique  based on the optimisation of the main lemmas from \cite{LCB,LLB} (this optimisation was motivated by the trick used in the resent article \cite{D++}).
As a result, by applying this new version it is possible to obtain optimal or close-to-optimal semicontinuity and local lower bounds
for characteristics  of quantum systems and discrete random variables. The second part of this article is devoted to description of these new applications.

\section{Preliminaries}

Let $\mathcal{H}$ be a separable Hilbert space,
$\mathfrak{B}(\mathcal{H})$ the algebra of all bounded operators on $\mathcal{H}$ with the operator norm $\|\cdot\|$ and $\mathfrak{T}( \mathcal{H})$ the
Banach space of all trace-class
operators on $\mathcal{H}$  with the trace norm $\|\!\cdot\!\|_1$. Let
$\mathfrak{S}(\mathcal{H})$ be  the set of quantum states (positive operators
in $\mathfrak{T}(\mathcal{H})$ with unit trace) \cite{H-SCI,N&Ch,Wilde}.

Write $I_{\mathcal{H}}$ for the unit operator on a Hilbert space
$\mathcal{H}$ and $\id_{\mathcal{\H}}$ for the identity
transformation of the Banach space $\mathfrak{T}(\mathcal{H})$.

We will use the Mirsky inequality
\begin{equation}\label{Mirsky-ineq+}
  \sum_{i=1}^{+\infty}|\lambda^{\rho}_i-\lambda^{\sigma}_i|\leq \|\rho-\sigma\|_1
\end{equation}
valid for any positive operators $\rho$ and $\sigma$ in $\T(\H)$, where  $\{\lambda^{\rho}_i\}_{i=1}^{+\infty}$
and $\{\lambda^{\sigma}_i\}_{i=1}^{+\infty}$ are the sequences
of eigenvalues of $\rho$ and $\sigma$ arranged in the non-increasing order (taking the multiplicity into account) \cite{Mirsky,Mirsky-rr}.

The \emph{von Neumann entropy} of a quantum state
$\rho \in \mathfrak{S}(\H)$ is  defined by the formula
$S(\rho)=\operatorname{Tr}\eta(\rho)$, where  $\eta(x)=-x\ln x$ if $x>0$
and $\eta(0)=0$. It is a concave lower semicontinuous function on the set~$\mathfrak{S}(\H)$ taking values in~$[0,+\infty]$ \cite{H-SCI,L-2,W}.
The von Neumann entropy satisfies the inequality
\begin{equation}\label{w-k-ineq}
S(p\rho+(1-p)\sigma)\leq pS(\rho)+(1-p)S(\sigma)+h(p)
\end{equation}
valid for any states  $\rho$ and $\sigma$ in $\S(\H)$ and $p\in[0,1]$, where $\,h(p)=\eta(p)+\eta(1-p)\,$ is the binary entropy and both sides can be equal to $+\infty$ \cite{O&P,N&Ch,Wilde}.

We will use the  homogeneous extension of the von Neumann entropy to the positive cone $\T_+(\H)$ defined as
\begin{equation}\label{S-ext}
\tilde{S}(\rho)\doteq(\Tr\rho)S(\rho/\Tr\rho)=\Tr\eta(\rho)-\eta(\Tr\rho)
\end{equation}
for any nonzero operator $\rho$ in $\T_+(\H)$ and equal to $0$ at the zero operator \cite{L-2}.

By using concavity of the von Neumann entropy and inequality (\ref{w-k-ineq}) it is easy to show that
\begin{equation*}
\tilde{S}(\rho)+\tilde{S}(\sigma)\leq \tilde{S}(\rho+\sigma)\leq \tilde{S}(\rho)+\tilde{S}(\sigma)+\tilde{h}(\{\Tr\rho,\Tr\sigma\})
\end{equation*}
for any $\rho$ and $\sigma$ in $\T_+(\H)$, where  $\tilde{h}(\{\Tr\rho,\Tr\sigma\})=\eta(\Tr\rho)+\eta(\Tr\sigma)-\eta(\Tr(\rho+\sigma))$
is the homogeneous extension of the binary entropy to the positive cone in $\mathbb{R}^2$.

The \emph{quantum relative entropy} for two quantum states $\rho$ and
$\sigma$ in $\S(\H)$ is defined as
\begin{equation}\label{URE-def}
D(\rho\shs\|\shs\sigma)=\sum_i\langle
i|\,\rho\ln\rho-\rho\ln\sigma\,|i\rangle,
\end{equation}
where $\{|i\rangle\}$ is the orthonormal basis of
eigenvectors of the state $\rho$ and it is assumed that
$D(\rho\|\shs\sigma)=+\infty$ if $\,\mathrm{supp}\rho\shs$ is not
contained in $\shs\mathrm{supp}\shs\sigma$ \cite{H-SCI,Wilde}.\footnote{The support $\mathrm{supp}\rho$ of an operator $\rho$ in $\T_+(\H)$ is the closed subspace spanned by the eigenvectors of $\rho$ corresponding to its positive eigenvalues.}
If $S(\rho)<+\infty$ then
\begin{equation}\label{re-exp}
D(\rho\shs\|\shs\sigma)=\Tr\rho(-\ln\sigma)-S(\rho),
\end{equation}
where $\Tr\rho(-\ln\sigma)$ is defined according to the rule (\ref{H-fun}) described below. \smallskip

The \emph{quantum conditional entropy} (QCE) of a state $\rho$ of a finite-dimensional bipartite system $AB$ is defined as
\begin{equation}\label{ce-def}
S(A\vert B)_{\rho}=S(\rho)-S(\rho_{B}).
\end{equation}
The function $\rho\mapsto S(A\vert B)_{\rho}$ is  concave and
\begin{equation}\label{ce-ub}
\vert S(A\vert B)_{\rho}\vert\leq S(\rho_A)
\end{equation}
for any state $\rho$ in $\S(\H_{AB})$ \cite{H-SCI,Wilde}. By using concavity of the von Neumann entropy and inequality (\ref{w-k-ineq}) it is easy to show that
\begin{equation}\label{ce-LAA-2}
S(A\vert B)_{p\rho+(1-p)\sigma}\leq p S(A\vert B)_{\rho}+(1-p)S(A\vert B)_{\sigma}+h(p)
\end{equation}
for any states  $\rho$ and $\sigma$ in $\S(\H_{AB})$ and $p\in[0,1]$, where $\,h\,$ is the binary entropy.

Definition (\ref{ce-def}) remains valid for a state $\rho$ of an infinite-dimensional bipartite system $AB$
with finite marginal entropies
$S(\rho_A)$ and $S(\rho_B)$  (since the finiteness of $S(\rho_A)$ and $S(\rho_B)$ are equivalent to the finiteness of $S(\rho)$ and $S(\rho_B)$).
For a state $\rho$ with finite $S(\rho_A)$ and arbitrary $S(\rho_B)$ one can define the QCE
by the formula
\begin{equation}\label{ce-ext}
S(A\vert B)_{\rho}=S(\rho_{A})-D(\rho\shs\|\shs\rho_{A}\otimes\rho_{B})
\end{equation}
proposed and analysed by Kuznetsova in \cite{Kuz} (the finiteness of $S(\rho_{A})$ implies the finiteness of $D(\rho\shs\|\shs\rho_{A}\otimes\rho_{B})$ by the inequality (\ref{MI-UB})). The QCE extented by the above formula to the convex set $\,\{\rho\in\S(\H_{AB})\,\vert\,S(\rho_A)<+\infty\}\,$ possesses all basic properties of the QCE valid in finite dimensions \cite{Kuz}. In particular, it is concave and satisfies inequalities (\ref{ce-ub}) and (\ref{ce-LAA-2}) with possible values $+\infty$ in one or both sides.\smallskip

The \emph{quantum mutual information} of a state $\,\rho\,$ in $\S(\H_{AB})$ is defined as
\begin{equation}\label{mi-d}
I(A\!:\!B)_{\rho}=D(\rho\shs\|\shs\rho_{A}\otimes
\rho_{\shs B})=S(\rho_{A})+S(\rho_{\shs B})-S(\rho),
\end{equation}
where the second formula is valid if $\,S(\rho)\,$ is finite \cite{L-mi}.
Basic properties of the relative entropy show that $\,\rho\mapsto
I(A\!:\!B)_{\rho}\,$ is a lower semicontinuous function on the set
$\S(\H_{AB})$ taking values in $[0,+\infty]$. It is well known that
\begin{equation}\label{MI-UB}
I(A\!:\!B)_{\rho}\leq 2\min\{S(\rho_A),S(\rho_B)\}
\end{equation}
for any state $\rho\in\S(\H_{AB})$ and that factor 2 in (\ref{MI-UB}) can be omitted if $\rho$ is a separable state \cite{L-mi,Wilde}.

The quantum mutual information is not convex or concave, but it satisfies the inequalities
\begin{equation}\label{MI-LAA-1}
I(A\!:\!B)_{p\rho+(1-p)\sigma} \geq p I(A\!:\!B)_{\rho}+(1-p)I(A\!:\!B)_{\sigma}-h(p)
\end{equation}
and
\begin{equation}\label{MI-LAA-2}
I(A\!:\!B)_{p\rho+(1-p)\sigma} \leq p I(A\!:\!B)_{\rho}+(1-p)I(A\!:\!B)_{\sigma}+h(p)
\end{equation}
valid for any states $\rho$ and $\sigma$ in $\S(\H_{AB})$ and  any $p\in(0,1)$  with possible values $+\infty$ in both sides \cite[Section 4.2]{QC}.
\smallskip

Let $H$ be a positive (semi-definite)  operator on a Hilbert space $\mathcal{H}$ (we will always assume that positive operators are self-adjoint). Write  $\mathcal{D}(H)$ for the domain of $H$. For any positive operator $\rho\in\T(\H)$ we will define the quantity $\Tr H\rho$ by the rule
\begin{equation}\label{H-fun}
\Tr H\rho=
\left\{\begin{array}{ll}
        \sup_n \Tr P_n H\rho\;\; &\textrm{if}\;\;  \supp\rho\subseteq {\rm cl}(\mathcal{D}(H))\\
        +\infty\;\;&\textrm{otherwise}
        \end{array}\right.
\end{equation}
where $P_n$ is the spectral projector of $H$ corresponding to the interval $[0,n]$ and ${\rm cl}(\mathcal{D}(H))$ is the closure of $\mathcal{D}(H)$. If
$H$ is the Hamiltonian (energy observable) of a quantum system described by the space $\H$ then
$\Tr H\rho$ is the mean energy of a state $\rho$.

For any positive operator $H$ the set
$$
\C_{H,E}=\left\{\rho\in\S(\H)\,|\,\Tr H\rho\leq E\right\}
$$
is convex and closed (since the function $\rho\mapsto\Tr H\rho$ is affine and lower semicontinuous). It is nonempty if $E> E_0$, where $E_0$ is the infimum of the spectrum of $H$.

The von Neumann entropy is continuous on the set $\C_{H,E}$ for any $E> E_0$ if and only if the operator $H$ satisfies  the \emph{Gibbs condition}
\begin{equation}\label{H-cond}
  \Tr\, e^{-\beta H}<+\infty\quad\textrm{for all}\;\,\beta>0
\end{equation}
and the supremum of the entropy on this set is attained at the \emph{Gibbs state}
\begin{equation*}
\gamma_H(E)\doteq e^{-\beta_H(E) H}/Z_H(E),
\end{equation*}
where the parameter $\beta_H(E)$ is determined by the equation $\Tr H e^{-\beta H}=E\Tr e^{-\beta H}$,
\begin{equation}\label{Z}
  Z_H(E) \doteq \Tr e^{-\beta_H(E)H}
\end{equation}
and it is assumed that $\gamma_H(E)|\varphi\rangle=0$ for any $\varphi\in \D(H)^{\perp}$ \cite{W}. Condition (\ref{H-cond}) can be valid only if $H$ is an unbounded operator having  discrete spectrum of finite multiplicity. It means, in Dirac's notation, that
\begin{equation}\label{H-form}
H=\sum_{k=1}^{+\infty} h_k |\tau_k\rangle\langle\tau_k|,
\end{equation}
where
$\mathcal{T}\doteq\left\{\tau_k\right\}_{k=1}^{+\infty}$ is the orthonormal
system of eigenvectors of $H$ corresponding to the \emph{nondecreasing} unbounded sequence $\left\{h_k\right\}_{k=1}^{+\infty}$ of its eigenvalues
and \emph{it is assumed that the domain $\D(H)$ of $H$ lies within the closure $\H_\mathcal{T}$ of the linear span of $\mathcal{T}$}. In this case
\begin{equation*}
\Tr H \rho=\sum_i \lambda_i\|\sqrt{H}\varphi_i\|^2
\end{equation*}
for any operator $\rho$ in $\T_+(\H)$ with the spectral decomposition $\rho=\sum_i \lambda_i|\varphi_i\rangle\langle\varphi_i|$ provided that
all the vectors $\varphi_i$ lie in $\D(\sqrt{H})=\{ \varphi\in\H_\mathcal{T}\,| \sum_{k=1}^{+\infty} h_k |\langle\tau_k|\varphi\rangle|^2<+\infty\}$. If at least one eigenvector of $\rho$ corresponding to a nonzero eigenvalue does not belong to the set $\D(\sqrt{H})$
then $\Tr H \rho=+\infty$.


For any positive operator $H$ of the form (\ref{H-form}) we will use the function
\begin{equation}\label{F-def}
F_{H}(E)\doteq\sup_{\rho\in\C_{H,E}}S(\rho)
\end{equation}
on $[h_1,+\infty)$ which is finite and concave provided that $\Tr e^{-\beta H}<\infty$  (properties of this function is described in Proposition 1 in \cite{EC}).
If the operator $H$ satisfies the Gibbs condition (\ref{H-cond}) then
$$
F_{H}(E)=S(\gamma_H(E))=\beta_H(E)E+\ln Z_H(E).
$$
It is easy to see that $F_{H}(h_1)=\ln m(E_0)$, where $m(E_0)$ is the multiplicity of $h_1$. By Proposition 1 in \cite{EC} the Gibbs condition (\ref{H-cond}) is equivalent to the following asymptotic property
\begin{equation}\label{H-cond-a}
  F_{H}(E)=o\shs(E)\quad\textrm{as}\quad E\rightarrow+\infty.
\end{equation}

For example, if $\,\hat{N}\doteq a^\dag a\,$ is the number operator of a quantum oscillator then $F_{\hat{N}}(E)=g(E)$, where
\begin{equation}\label{g-def}
 g(x)=(x+1)\ln(x+1)-x\ln x,\;\, x>0,\qquad g(0)=0.
\end{equation}

We will often assume that
\begin{equation}\label{star}
  h_1=\inf\limits_{\|\varphi\|=1}\langle\varphi\vert H\vert\varphi\rangle=0.
\end{equation}
In this case the function $F_H$ satisfies the conditions of the following lemma (cf.\cite[Corollary 12]{W-CB}).\smallskip

\begin{lemma}\label{W-L} \emph{If $f$ is a nonnegative concave function on $\mathbb{R}_+$ then
\begin{equation*}
  xf(z/x)\leq yf(z/y)\quad \textit{ for all }\;y>x>0\;\textit{ and }\;z\geq0.
\end{equation*}}
\end{lemma}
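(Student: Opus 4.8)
The plan is to reduce the inequality to the elementary \emph{super-homogeneity} property enjoyed by every nonnegative concave function, namely that $f(\lambda u)\geq\lambda f(u)$ whenever $\lambda\in[0,1]$ and $u\geq0$. Once this auxiliary bound is in hand, the asserted inequality follows from a single substitution. To establish the auxiliary bound, I would note that $0\in\mathbb{R}_+$ and write $\lambda u=\lambda u+(1-\lambda)\cdot 0$; concavity of $f$ then gives $f(\lambda u)\geq\lambda f(u)+(1-\lambda)f(0)$. Since $f$ is nonnegative we have $f(0)\geq0$, so discarding the last term yields $f(\lambda u)\geq\lambda f(u)$. This is the only place where nonnegativity of $f$ enters.

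Next I would apply this bound with $\lambda=x/y$ and $u=z/x$. For $y>x>0$ we have $\lambda\in(0,1)$, and the product collapses to $\lambda u=(x/y)(z/x)=z/y$, so the auxiliary inequality reads $f(z/y)\geq(x/y)\shs f(z/x)$. Multiplying both sides by $y>0$ gives $y\shs f(z/y)\geq x\shs f(z/x)$, which is precisely the claim. The boundary case $z=0$ should be handled separately but is immediate: there both sides reduce to multiples of $f(0)$, and $x\shs f(0)\leq y\shs f(0)$ follows at once from $y>x>0$ and $f(0)\geq0$.

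I do not anticipate any real obstacle in this argument, as the whole proof is a one-line consequence of the scaling bound. The single conceptual point worth highlighting is that ordinary concavity is upgraded to the homogeneity-type inequality $f(\lambda u)\geq\lambda f(u)$ exactly by using $f(0)\geq0$; without the nonnegativity hypothesis the conclusion would fail, so it is this observation, rather than any computation, that carries the argument.
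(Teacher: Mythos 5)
Your proof is correct: the super-homogeneity bound $f(\lambda u)\geq\lambda f(u)$ obtained from concavity together with $f(0)\geq0$, followed by the substitution $\lambda=x/y$, $u=z/x$, is exactly the standard argument for this fact. The paper itself states the lemma without proof, citing Corollary 12 of Winter's paper, and your argument is the same one used there, so nothing further is needed.
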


For a nonnegative function $f$ on $[0,1]$ we denote  by $f^{\uparrow}$ its non-decreasing
envelope, i.e.
\begin{equation*}
  f^{\uparrow}(x)\doteq \sup_{t\in[0,x]} f(t),\quad x\in[0,1].
\end{equation*}
In particular, we will often use  the \emph{nondecreasing concave} continuous function
\begin{equation}\label{h+}
h^{\uparrow}(x)=\left\{\begin{array}{ll}
        h(x) &\textrm{if}\;\;  x\in[0,\frac{1}{2}]\\
        \ln2\quad& \textrm{if}\;\;  x\in(\frac{1}{2},1].
        \end{array}\right.
\end{equation}
Dealing with a multivariate expression $f(x,y,..)$ in which $x\in[0,1]$ we will assume that
\begin{equation}\label{men-def+}
  \{f(x,y,..)\}^{\uparrow}_x\doteq \sup_{t\in[0,x]} f(t,y,..).
\end{equation}

\section{Advanced version of the Alicki-Fannes-Winter technique in the quasi-classical settings}\label{sec3}

\subsection{New basic lemmas}

In this subsection we describe two general results concerning properties of a function $f$ on a convex subset $\S_0$ of $\S(\H)$ taking values in $(-\infty,+\infty]$ and satisfying the inequalities
\begin{equation}\label{LAA-1}
  f(p\rho+(1-p)\sigma)\geq pf(\rho)+(1-p)f(\sigma)-a_f(p)
\end{equation}
and
\begin{equation}\label{LAA-2}
  f(p\rho+(1-p)\sigma)\leq pf(\rho)+(1-p)f(\sigma)+b_f(p)
\end{equation}
(with possible value $+\infty$ in both sides) for all states $\rho$ and $\sigma$ in $\S_0$ and any $p\in[0,1]$, where $a_f$ and $b_f$ are continuous  functions on $[0,1]$ such that $a_f(0)=b_f(0)=0$.
These  inequalities can be treated, respectively, as weakened forms of concavity and convexity. We will call functions
satisfying both inequalities (\ref{LAA-1}) and (\ref{LAA-2}) \emph{locally almost affine} (briefly, \emph{LAA functions}), since for any such function $f$ the quantity
$\,\vert f(p\rho+(1-p)\sigma)-p f(\rho)-(1-p)f(\sigma)\vert \,$ tends to zero as $\,p\rightarrow 0^+$ uniformly on $\,\S_0\times\S_0$.
For technical simplicity we will also assume that
\begin{equation}\label{a-b-assump}
 \textrm{ the functions}\;\; a_f\;\; \textrm{and}\;\; b_f\;\; \textrm{are concave and non-decreasing on}\;\; \textstyle[0,\frac{1}{2}].
\end{equation}

For any function $f$ on a convex subset $\S_0$ of $\,\S(\H)$ we will use its  homogeneous extension $\tilde{f}$ to the cone $\widetilde{\S}_0$ generated by $\S_0$. It is defined as
\begin{equation}\label{G-ext}
\tilde{f}(\rho)\doteq(\Tr\rho)f(\rho/\Tr\rho),\quad \tilde{f}(0)=0.
\end{equation}

\smallskip

Let $\{X,\F\}$ be a measurable space and $\tilde{\omega}(x)$ a $\F$-measurable $\S(\H)$-valued function on $X$. Denote by $\P(X)$ the set of all probability measures on $X$ (more precisely, on $\{X,\F\}$). We will assume that the function $\tilde{\omega}(x)$ is integrable (in the Pettis sense \cite{P-int}) w.r.t. any measure in $\P(X)$. Consider  the set of states
\begin{equation}\label{q-set}
\mathfrak{Q}_{X,\F,\tilde{\omega}}\doteq\left\{\rho\in\S(\H)\,\left\vert\,\exists\mu_{\rho}\in\P(X):\rho=\int_X\tilde{\omega}(x)\mu_{\rho}(dx)\;\right.\right\}.
\end{equation}
We will call any measure $\mu_{\rho}$ in $\P(X)$ such that $\rho=\int_X\tilde{\omega}(x)\mu_{\rho}(dx)$ a \emph{representing measure} for a state $\rho$ in $\mathfrak{Q}_{X,\F,\tilde{\omega}}$. The state $\int_X\tilde{\omega}(x)\nu(dx)$ for any $\nu\in\P(X)$ will be denoted by $\Omega(\nu)$ for brevity.

We will use the total variation distance between probability measures $\mu$ and $\nu$ in $\P(X)$ defined as
\begin{equation*}
\mathrm{TV}(\mu,\nu)=\sup_{A\in\F}\vert\mu(A)-\nu(A)\vert.
\end{equation*}

Concrete  examples of sets $\mathfrak{Q}_{X,\F,\tilde{\omega}}$ can be found in \cite[Section 3]{LCB}.\smallskip

The following lemma is an advanced version of Lemma 1 in \cite{LCB}. It gives semicontinuity bounds for  LAA functions on a set of quantum states having form (\ref{q-set}).\smallskip

\begin{lemma}\label{g-ob} \emph{Let $\mathfrak{Q}_{X,\F,\tilde{\omega}}$ be the set defined in (\ref{q-set}) and $\S_0$ a convex subset of $\S(\H)$ such that
}\begin{equation}\label{S-prop}
  \varrho\in\S_0\cap\mathfrak{Q}_{X,\F,\tilde{\omega}}\quad \Rightarrow \quad\{\varsigma\in\mathfrak{Q}_{X,\F,\tilde{\omega}}\,\vert\,\exists c>0:c\varsigma\leq \varrho\}\subseteq\S_0.
\end{equation}

\emph{Let $f$ be a function  on the set $\,\S_0$ taking values in $(-\infty,+\infty]$ and satisfying inequalities (\ref{LAA-1}) and (\ref{LAA-2}). Let $\rho$ and $\sigma$ be states in $\,\mathfrak{Q}_{X,\F,\shs\tilde{\omega}}\cap\S_0\,$ with representing measures $\mu_{\rho}$ and $\mu_{\sigma}$ correspondingly. If $\,f(\rho)<+\infty\,$ then
\begin{equation}\label{AFW-1+}
f(\rho)-f(\sigma)\leq \varepsilon C_f(\rho,\sigma\vert\shs\varepsilon)+a_f(\varepsilon)+b_f(\varepsilon),
\end{equation}
where $\varepsilon=\mathrm{TV}(\mu_{\rho},\mu_{\sigma})$,
\begin{equation}\label{C-f}
C_f(\rho,\sigma\vert\shs\varepsilon)\doteq\sup\left\{f(\Omega(\mu))-f(\Omega(\nu))\left\vert\, \mu,\nu\in\P(X),\;\varepsilon\mu\leq\mu_{\rho},\;\varepsilon\nu\leq\mu_{\sigma},\, \mu\,\bot\,\nu\right.\right\}
\end{equation}
and the left hand side of (\ref{AFW-1+}) may be equal to $-\infty$.}\smallskip

\emph{If the function $f$ is nonnegative then inequality (\ref{AFW-1+}) holds with $C_f(\rho,\sigma\vert\shs\varepsilon)$ replaced by}
\begin{equation}\label{C-f+}
C^+_f(\rho\shs\vert\shs\varepsilon)\doteq\sup\left\{f(\Omega(\mu))\,\left\vert\, \mu\in\P(X),\;\varepsilon\mu\leq\mu_{\rho}\right.\right\}.
\end{equation}\end{lemma}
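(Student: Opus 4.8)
The plan is to reduce inequality (\ref{AFW-1+}) to a single pair of applications of the weak concavity/convexity bounds (\ref{LAA-1}) and (\ref{LAA-2}) with the \emph{same} reference state, obtained by extracting the common part of the two representing measures; it is precisely this use of a genuine common component (available thanks to the representing-measure structure) that produces the clean error term $a_f(\varepsilon)+b_f(\varepsilon)$ rather than the larger one of the classical symmetrised argument. Assume first that $0<\varepsilon<1$. Let $r_\rho,r_\sigma$ be the densities of $\mu_\rho,\mu_\sigma$ with respect to $\lambda\doteq\mu_\rho+\mu_\sigma$ and let $\tau\doteq\mu_\rho\wedge\mu_\sigma$ be the measure with $\lambda$-density $\min\{r_\rho,r_\sigma\}$. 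A routine computation gives $\tau(X)=1-\mathrm{TV}(\mu_\rho,\mu_\sigma)=1-\varepsilon$, so the formulae $\mu\doteq\varepsilon^{-1}(\mu_\rho-\tau)$ and $\nu\doteq\varepsilon^{-1}(\mu_\sigma-\tau)$ define probability measures. Their densities (proportional to $(r_\rho-r_\sigma)_+$ and $(r_\sigma-r_\rho)_+$) have disjoint supports, so $\mu\perp\nu$; moreover $\varepsilon\mu=\mu_\rho-\tau\leq\mu_\rho$ and $\varepsilon\nu\leq\mu_\sigma$. Hence $\mu,\nu$ are admissible in the supremum (\ref{C-f}). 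The degenerate cases $\varepsilon=0$ (where $\mu_\rho=\mu_\sigma$, hence $\rho=\sigma$) and $\varepsilon=1$ (where $\rho=\Omega(\mu)$ and $\sigma=\Omega(\nu)$ for $\mu=\mu_\rho$, $\nu=\mu_\sigma$) I would dispose of directly from the definitions.

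Next I would set $\omega_0\doteq\Omega\big(\tau/(1-\varepsilon)\big)$. Since $\nu\mapsto\Omega(\nu)$ is linear and order-preserving (as $\tilde{\omega}(x)\geq0$), the identities $\mu_\rho=\tau+\varepsilon\mu$ and $\mu_\sigma=\tau+\varepsilon\nu$ yield the decompositions
\[
\rho=(1-\varepsilon)\shs\omega_0+\varepsilon\shs\Omega(\mu),\qquad \sigma=(1-\varepsilon)\shs\omega_0+\varepsilon\shs\Omega(\nu),
\]
with a common reference state $\omega_0$. To apply $f$ I must check $\omega_0,\Omega(\mu),\Omega(\nu)\in\S_0$, and this is exactly where hypothesis (\ref{S-prop}) is used. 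Indeed $\omega_0,\Omega(\mu),\Omega(\nu)\in\mathfrak{Q}_{X,\F,\tilde{\omega}}$, while order-preservation gives $(1-\varepsilon)\omega_0\leq\rho$, $\varepsilon\shs\Omega(\mu)\leq\rho$ and $\varepsilon\shs\Omega(\nu)\leq\sigma$; applying (\ref{S-prop}) with $\varrho=\rho$ (and positive constant $1-\varepsilon$, resp.\ $\varepsilon$) places $\omega_0$ and $\Omega(\mu)$ in $\S_0$, and with $\varrho=\sigma$ places $\Omega(\nu)$ in $\S_0$.

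I would then apply (\ref{LAA-2}) with $p=\varepsilon$ to the first decomposition and (\ref{LAA-1}) with $p=\varepsilon$ to the second, obtaining
\[
f(\rho)\leq\varepsilon f(\Omega(\mu))+(1-\varepsilon)f(\omega_0)+b_f(\varepsilon),\qquad f(\sigma)\geq\varepsilon f(\Omega(\nu))+(1-\varepsilon)f(\omega_0)-a_f(\varepsilon).
\]
Eliminating the common quantity $(1-\varepsilon)f(\omega_0)$ between these two bounds gives $f(\rho)-f(\sigma)\leq\varepsilon[f(\Omega(\mu))-f(\Omega(\nu))]+a_f(\varepsilon)+b_f(\varepsilon)$, and since $\mu,\nu$ are admissible, $f(\Omega(\mu))-f(\Omega(\nu))\leq C_f(\rho,\sigma\vert\shs\varepsilon)$, which is (\ref{AFW-1+}). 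For the final assertion, when $f\geq0$ I would instead discard the term $-\varepsilon f(\Omega(\nu))\leq0$ and bound $f(\Omega(\mu))$ by $C^+_f(\rho\shs\vert\shs\varepsilon)$; note that only the single constraint $\varepsilon\mu\leq\mu_\rho$ is needed for this.

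The step requiring care is the elimination of $(1-\varepsilon)f(\omega_0)$, which is literal only when this value is finite. I would handle the extended-real arithmetic as follows. Because $f>-\infty$ everywhere, if $f(\omega_0)=+\infty$ or $f(\Omega(\nu))=+\infty$ then the lower bound for $f(\sigma)$ forces $f(\sigma)=+\infty$, so the left-hand side of (\ref{AFW-1+}) equals $-\infty$ and the inequality holds trivially. If $f(\omega_0)<+\infty$, then the hypothesis $f(\rho)<+\infty$ together with (\ref{LAA-1}) (applied to the first decomposition) forces $f(\Omega(\mu))<+\infty$ as well, so all terms are genuine reals and the two displayed bounds may be chained through $(1-\varepsilon)f(\omega_0)$ without ambiguity. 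This case analysis, rather than the algebraic manipulations, is the only genuinely delicate point of the argument.
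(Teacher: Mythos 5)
Your proof is correct and follows essentially the same route as the paper's: your $\mu,\nu,\omega_0$ built from $\tau=\mu_\rho\wedge\mu_\sigma$ are exactly the paper's $\nu_\pm=\varepsilon^{-1}[\mu_\rho-\mu_\sigma]_\pm$ and $\omega_*$ from the Jordan/Hahn decomposition, and the two decompositions with common component $(1-\varepsilon)\omega_0$ followed by one application each of (\ref{LAA-1}) and (\ref{LAA-2}) is precisely the paper's argument. The only cosmetic difference is that the paper disposes of the infinities by assuming $f(\sigma)<+\infty$ at the outset (the statement being trivial otherwise) and deducing finiteness of $f(\tau_-)$ and $f(\omega_*)$ from the lower bound on $f(\sigma)$, whereas you run the equivalent case analysis on $f(\omega_0)$ and $f(\Omega(\nu))$ directly.
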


\begin{proof} We may assume that $f(\sigma)<+\infty$ and $\varepsilon\in(0,1)$, since otherwise (\ref{AFW-1+}) holds trivially. By the condition we have
$$
2\mathrm{TV}(\mu_{\rho},\mu_{\sigma})=[\mu_{\rho}-\mu_{\sigma}]_+(X)+[\mu_{\rho}-\mu_{\sigma}]_-(X)=2\varepsilon,
$$
where $[\mu_{\rho}-\mu_{\sigma}]_+$ and $[\mu_{\rho}-\mu_{\sigma}]_-$ are the positive and negative parts
of the measure $\mu_{\rho}-\mu_{\sigma}$ (in the sense of Jordan decomposition theorem \cite{Bil}). Since $\,\mu_{\rho}(X)=\mu_{\sigma}(X)=1$, it follows from the above equality that
$\,[\mu_{\rho}-\mu_{\sigma}]_+(X)=[\mu_{\rho}-\mu_{\sigma}]_-(X)=\varepsilon$. Hence, $\nu_\pm\doteq \varepsilon^{-1}[\mu_{\rho}-\mu_{\sigma}]_\pm\in\P(X)$.
Moreover, it is easy to show, by using the definition of $\,[\mu_{\rho}-\mu_{\sigma}]_\pm$ via the Hahn decomposition of $X$, that
\begin{equation}\label{m-r}
\varepsilon\nu_+=[\mu_{\rho}-\mu_{\sigma}]_+\leq \mu_{\rho}\quad \textrm{and} \quad \varepsilon\nu_-=[\mu_{\rho}-\mu_{\sigma}]_-\leq \mu_{\sigma}.
\end{equation}

Modifying the idea used in \cite{D++} consider the states
\begin{equation}\label{tau-s}
\tau_+=\Omega(\nu_+),\quad\tau_-=\Omega(\nu_-)\quad \textrm{and} \quad \omega_*=\Omega(\mu_*),
\end{equation}
where
$$
\mu_*=\frac{\mu_{\rho}-\varepsilon\nu_+}{1-\varepsilon}=\frac{\mu_{\sigma}-\varepsilon\nu_-}{1-\varepsilon}
$$
is a measure in $\P(X)$. Since the inequalities in (\ref{m-r}) imply that $\varepsilon\tau_+\leq\rho$, $\varepsilon\tau_-\leq\sigma$ and $(1-\varepsilon)\omega_*\leq\rho$, these states
belong to the set $\mathfrak{Q}_{X,\F,\tilde{\omega}}\cap\S_0$ due to condition (\ref{S-prop}).
We have
\begin{equation}\label{omega-star}
\rho=\varepsilon\tau_++(1-\varepsilon)\omega_*\quad \textrm{and} \quad\sigma=\varepsilon\tau_-+(1-\varepsilon)\omega_*.
\end{equation}
By applying inequalities (\ref{LAA-1}) and (\ref{LAA-2}) to the decompositions  in (\ref{omega-star}) we  obtain
$$
f(\rho)\leq\varepsilon f(\tau_+)+(1-\varepsilon)f(\omega_*)+b_f(\varepsilon)\quad \textrm{and} \quad f(\sigma)\geq\varepsilon f(\tau_-)+(1-\varepsilon)f(\omega_*)-a_f(\varepsilon).
$$
The last inequality implies the finiteness of $f(\tau_-)$ and $f(\omega_*)$ by the assumed finiteness of $f(\sigma)$. So,
these inequalities show that
\begin{equation}\label{AFW-1}
f(\rho)-f(\sigma)\leq\varepsilon(f(\tau_+)-f(\tau_-))+a_f(\varepsilon)+b_f(\varepsilon).
\end{equation}
This implies inequality (\ref{AFW-1+}) due to the inequalities in (\ref{m-r})
and because $\nu_+\,\bot\,\nu_-$ by the construction.\smallskip

The last claim of the lemma is obvious.
\end{proof}


\begin{remark}\label{g-ob-r} In general, the condition $\mathrm{TV}(\mu_{\rho},\mu_{\sigma})=\varepsilon$ in
Lemma \ref{g-ob} can not be replaced by the condition $\mathrm{TV}(\mu_{\rho},\mu_{\sigma})\leq\varepsilon$, since
the functions $\varepsilon \mapsto C_f(\rho,\sigma\vert\shs\varepsilon)$ and $\varepsilon \mapsto C^+_f(\rho\shs\vert\shs\varepsilon)$ may be decreasing and, hence, special arguments are required to show that the r.h.s. of (\ref{AFW-1+}) is a nondecreasing function of $\varepsilon$.
\end{remark}\smallskip

\begin{remark}\label{S-prop-r} Condition (\ref{S-prop}) in Lemma \ref{g-ob} can be replaced by the condition
\begin{equation}\label{S-prop+}
  \rho,\sigma\in\S_0\cap\mathfrak{Q}_{X,\F,\tilde{\omega}}\quad \Rightarrow \quad \tau_+,\tau_-,\omega_*\in\S_0,
\end{equation}
where $\tau_+$, $\tau_-$ and $\omega_*$ are the states defined in (\ref{tau-s}) via the representing measures $\nu_+$, $\nu_-$ and $\mu_{*}$. In this case
one should correct the definitions of $C_f(\rho,\sigma\vert\shs\varepsilon)$ and $C^+_f(\rho\shs\vert\shs\varepsilon)$
by adding the requirements $\Omega(\mu)\in\S_0$ and $\Omega(\nu)\in\S_0$ in (\ref{C-f}) and the  requirement $\Omega(\mu)\in\S_0$ in (\ref{C-f+}).
\end{remark}\smallskip\medskip

\begin{remark}\label{S-prop-r+}
In some cases it is reasonable to apply inequality (\ref{AFW-1}) (instead of (\ref{AFW-1+})) to obtain more accurate estimates (see the proofs of
Theorems  \ref{main-1} and \ref{main-2} in Section 3.2).
\end{remark}\medskip


For any state $\rho$ with the spectral representation $\,\rho=\sum_i\lambda_i^{\rho}|\varphi_i\rangle\langle\varphi_i|\,$
and arbitrary $\varepsilon>0\,$ introduce the trace class positive operators
\begin{equation}\label{2-op}
\! \rho\wedge\varepsilon I_{\H}\doteq \sum_i \min\{\lambda_i^{\rho},\varepsilon\}|\varphi_i\rangle\langle\varphi_i|,\qquad
[\rho-\varepsilon I_{\H}]_+\doteq \sum_i \max\{\lambda_i^{\rho}-\varepsilon,0\}|\varphi_i\rangle\langle\varphi_i|.
\end{equation}
The operator $[\rho-\varepsilon I_{\H}]_+$ is the positive  part of the Hermitian operator $\rho-\varepsilon I_{\H}$,
the operator $\rho\wedge\varepsilon I_{\H}$ can be defined as  $\rho-[\rho-\varepsilon I_{\H}]_+$. It is clear that $\,\rho\wedge\varepsilon I_{\H}\to0\,$ in the trace norm as $\,\varepsilon\to0^+$.
\smallskip

The following lemma is an advanced version of Lemma 1 in \cite{LLB}. It allows us to obtain semicontinuity bounds and local lower bounds for  nonnegative LAA functions  valid for commuting states, i.e. such states $\rho$ and $\sigma$ in $\,\S(\H)$ that
$[\rho,\sigma]\doteq \rho\,\sigma-\sigma\rho=0$.

\smallskip

\begin{lemma}\label{b-lemma} \emph{Let $\,\S_0$ be a convex subset of $\,\S(\H)$  such that
\begin{equation}\label{S-prop+-+}
 \{\varrho\in\S_0\}\wedge\{\exists c>0:c\varsigma\leq\rho \}\wedge\{[\varrho,\varsigma]=0\}\;\Rightarrow\;\varsigma\in\S_0.
\end{equation}
Let $f$ be a function on $\S_0$  taking values in $[0,+\infty]$ and satisfying inequalities (\ref{LAA-1}) and (\ref{LAA-2}) on $\S_0$ with possible values $+\infty$ in both sides.}\medskip\pagebreak

\noindent A) \emph{If $\,f(\rho)<+\infty\,$ then
\begin{equation}\label{g-ob-c+0}
f(\rho)-f(\sigma)\leq\tilde{f}(\rho\wedge\varepsilon I_{\H})+2a^{\uparrow}_f(\varepsilon)+b^{\uparrow}_f(\varepsilon)
\end{equation}
for any state $\sigma$ in $\,\S_0$ such that $\,[\rho,\sigma]=0\,$ and $\,\frac{1}{2}\|\rho-\sigma\|_1\leq\varepsilon\leq1$,
where $\rho\wedge\varepsilon I_{\H}$ is the operator defined in (\ref{2-op}), $\tilde{f}$ is the homogeneous extension of $\shs f$ defined in (\ref{G-ext}),
\begin{equation*}
 a^{\uparrow}_f(\varepsilon)=\left\{\begin{array}{l}
        a_f(\varepsilon)\;\; \shs\textrm{if}\;\;  \varepsilon\in\shs[0,\frac{1}{2}]\\
        a_f(\frac{1}{2})\;\; \textrm{if}\;\;  \varepsilon\in(\frac{1}{2},1]
        \end{array}\right.\quad \textit{and}\;\;\quad b^{\uparrow}_f(\varepsilon)=\left\{\begin{array}{l}
        b_f(\varepsilon)\;\; \shs\textrm{if}\;\;  \varepsilon\in\shs[0,\frac{1}{2}]\\
        b_f(\frac{1}{2})\;\; \textrm{if}\;\;  \varepsilon\in(\frac{1}{2},1].
        \end{array}\right.
\end{equation*}
}

\noindent B) \emph{ If either $\,f(\rho)<+\infty\,$ or $\,\tilde{f}([\rho-\varepsilon I_{\H}]_+)<+\infty$ for all $\varepsilon\in(0,1]$,
where $[\rho-\varepsilon I_{\H}]_+$ is the operator defined in (\ref{2-op}), then
\begin{equation}\label{g-ob-c+}
f(\sigma)\geq \tilde{f}([\rho-\varepsilon I_{\H}]_+)-2a^{\uparrow}_f(\varepsilon)-b^{\uparrow}_f(\varepsilon)-a_f(1-r_{\varepsilon}),\quad r_{\varepsilon}=\Tr[\rho-\varepsilon I_{\H}]_+,
\end{equation}
for any state $\sigma$ in $\,\S_0$ such that $\,[\rho,\sigma]=0\,$ and $\,\frac{1}{2}\|\rho-\sigma\|_1\leq\varepsilon\leq1$.}
\end{lemma}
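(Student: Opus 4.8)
The plan is to use throughout that $[\rho,\sigma]=0$, so $\rho$ and $\sigma$ share an orthonormal eigenbasis $\{\varphi_i\}$; writing $\rho=\sum_i p_i|\varphi_i\rangle\langle\varphi_i|$ and $\sigma=\sum_i q_i|\varphi_i\rangle\langle\varphi_i|$, all operators in (\ref{2-op}) are diagonal in this basis and everything reduces to manipulations with the sequences $\{p_i\},\{q_i\}$. First I record two elementary termwise facts. Put $t\doteq\tfrac12\|\rho-\sigma\|_1=\sum_i[p_i-q_i]_+\le\varepsilon$, $r_{\varepsilon}=\Tr[\rho-\varepsilon I_{\H}]_+$ and $s\doteq 1-r_{\varepsilon}=\Tr(\rho\wedge\varepsilon I_{\H})$. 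Since each $[p_i-q_i]_+\le t\le\varepsilon$ and $[p_i-q_i]_+\le p_i$, one gets $[\rho-\sigma]_+\le\rho\wedge\varepsilon I_{\H}$ and, because $q_i\ge p_i-[p_i-q_i]_+\ge p_i-\varepsilon$, also $\sigma\ge[\rho-\varepsilon I_{\H}]_+$. Condition (\ref{S-prop+-+}) then guarantees that every normalized ``piece'' I produce below — each dominated by $\rho$ or $\sigma$ and commuting with it — again lies in $\S_0$, so that $f,\tilde f$ are defined on it and (\ref{LAA-1})--(\ref{LAA-2}) apply.

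The basic tool is the almost additivity of the homogeneous extension $\tilde f$ from (\ref{G-ext}): for commuting positive $X,Y$ with $z=\Tr(X+Y)$ and $p=\Tr X/z$, applying (\ref{LAA-1}),(\ref{LAA-2}) to the decomposition $\frac{X+Y}{z}=p\frac{X}{\Tr X}+(1-p)\frac{Y}{\Tr Y}$ (both summands in $\S_0$) and multiplying by $z$ gives
$$\tilde f(X)+\tilde f(Y)-z\,a_f(p)\le\tilde f(X+Y)\le\tilde f(X)+\tilde f(Y)+z\,b_f(p),$$
where $p$ may be replaced by $\min\{p,1-p\}\le\tfrac12$ by exchanging $X,Y$. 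Part B is now immediate: applying the lower bound to $\sigma=[\rho-\varepsilon I_{\H}]_+ + (\sigma-[\rho-\varepsilon I_{\H}]_+)$ (the second summand positive by $\sigma\ge[\rho-\varepsilon I_{\H}]_+$, traces $r_{\varepsilon}$ and $1-r_{\varepsilon}$, and using the orientation $p=1-r_{\varepsilon}$) and discarding the nonnegative $\tilde f(\sigma-[\rho-\varepsilon I_{\H}]_+)$ yields $f(\sigma)\ge\tilde f([\rho-\varepsilon I_{\H}]_+)-a_f(1-r_{\varepsilon})$, which is stronger than (\ref{g-ob-c+}); the extra nonnegative $2a^{\uparrow}_f(\varepsilon)+b^{\uparrow}_f(\varepsilon)$ there only weaken it. If $\tilde f([\rho-\varepsilon I_{\H}]_+)=+\infty$ the inequality forces $f(\sigma)=+\infty$, consistent with (\ref{LAA-1}) read with value $+\infty$.

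For Part A I proceed in two steps, the first producing the $\varepsilon$-dependence, the second replacing the $\sigma$-dependent operator by $\rho\wedge\varepsilon I_{\H}$. Step one is the commuting specialization of Lemma \ref{g-ob}: with $\tau_+=[\rho-\sigma]_+/t$, $\tau_-=[\sigma-\rho]_+/t$, $\omega_*=(\rho-[\rho-\sigma]_+)/(1-t)$ one has $\rho=t\tau_++(1-t)\omega_*$ and $\sigma=t\tau_-+(1-t)\omega_*$; applying (\ref{LAA-2}) to $\rho$, (\ref{LAA-1}) to $\sigma$, subtracting (the shared term $(1-t)f(\omega_*)$ is finite as $f(\rho)<+\infty$) and dropping $-t f(\tau_-)\le0$ gives $f(\rho)-f(\sigma)\le\tilde f([\rho-\sigma]_+)+a^{\uparrow}_f(\varepsilon)+b^{\uparrow}_f(\varepsilon)$, where $\min\{t,1-t\}\le\min\{\varepsilon,\tfrac12\}$ lets me pass to $a^{\uparrow}_f,b^{\uparrow}_f$. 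Step two writes $\rho\wedge\varepsilon I_{\H}=[\rho-\sigma]_+ + G$ with $G\ge0$ (by the step-one termwise inequality) and applies the lower almost-additivity bound with total trace $s\le1$ to get $\tilde f([\rho-\sigma]_+)\le\tilde f(\rho\wedge\varepsilon I_{\H})+s\,a_f(u/s)$, $u=\min\{t,s-t\}\le t\le\varepsilon$. Let $\bar a_f$ be the extension of $a_f$ equal to $a_f(\tfrac12)$ on $[\tfrac12,\infty)$, which is nonnegative concave nondecreasing by (\ref{a-b-assump}); since $u/s\le\tfrac12$ we have $a_f(u/s)=\bar a_f(u/s)$, and Lemma \ref{W-L} (with $x\mapsto x\,\bar a_f(u/x)$ nondecreasing, $z=u$) gives $s\,a_f(u/s)\le\bar a_f(u)\le a^{\uparrow}_f(\varepsilon)$. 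Chaining the two steps yields exactly (\ref{g-ob-c+0}); the factor $2$ before $a^{\uparrow}_f(\varepsilon)$ is the sum of the one contributed by each step, and finiteness of the intermediate $\tilde f([\rho-\sigma]_+)$ follows from $f(\rho)<+\infty$ via almost-superadditivity applied to $\rho=\rho\wedge\varepsilon I_{\H}+[\rho-\varepsilon I_{\H}]_+$.

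The main obstacle is Step two: trading the $\sigma$-dependent main term $\tilde f([\rho-\sigma]_+)$ for the $\sigma$-free $\tilde f(\rho\wedge\varepsilon I_{\H})$ while keeping the error proportional to $a^{\uparrow}_f(\varepsilon)$. A naive ``shared part'' decomposition $\rho=\rho\wedge\varepsilon I_{\H}+[\rho-\varepsilon I_{\H}]_+$ gives only an error controlled by $a_f(\min\{r_{\varepsilon},1-r_{\varepsilon}\})$, which does not reduce to the $\varepsilon$-form and is genuinely worse in examples; the closeness $t\le\varepsilon$ must enter, which is why Step one is needed before the bridge. Making the bridge precise hinges on the scaling Lemma \ref{W-L}, the observation $u=\min\{t,s-t\}\le t\le\varepsilon$ with $s\le1$, and the clamping at $\tfrac12$; the remaining care concerns $+\infty$ values, handled by the sign of the discarded nonnegative terms and by reading (\ref{LAA-1})--(\ref{LAA-2}) with value $+\infty$ allowed.
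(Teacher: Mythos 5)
Your proof is correct. Part A follows essentially the same route as the paper: the decomposition $\rho=t\tau_++(1-t)\omega_*$, $\sigma=t\tau_-+(1-t)\omega_*$ with $\tau_\pm=[\rho-\sigma]_\pm/t$ built in the common eigenbasis, the termwise observation $[\rho-\sigma]_+\le\rho\wedge\varepsilon I_{\H}$, and then Lemma \ref{W-L} together with (\ref{a-b-assump}) to convert the rescaling error $s\,a_f(u/s)$ into $a^{\uparrow}_f(\varepsilon)$ — this is exactly the chain (\ref{imp-ineq})--(\ref{imp-ineq+}) in the paper (your choice of the orientation $u=\min\{t,s-t\}\le s/2$ is in fact slightly more careful than the paper's, which evaluates $a_f$ at $\epsilon/\bar r_\varepsilon$ without clamping). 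Part B, however, is where you genuinely diverge: the paper derives B from A by the argument of Lemma 1B in \cite{LLB}, i.e.\ it combines (\ref{g-ob-c+0}) with the almost-superadditivity of $\tilde f$ applied to $\rho=\rho\wedge\varepsilon I_{\H}+[\rho-\varepsilon I_{\H}]_+$, which is what produces the residual $-2a^{\uparrow}_f(\varepsilon)-b^{\uparrow}_f(\varepsilon)$ in (\ref{g-ob-c+}). You instead observe that commutativity and $\frac12\|\rho-\sigma\|_1\le\varepsilon$ give the operator inequality $\sigma\ge[\rho-\varepsilon I_{\H}]_+$, decompose $\sigma$ itself as $[\rho-\varepsilon I_{\H}]_++R$ with $R\ge0$, and apply (\ref{LAA-1}) once with $p=1-r_\varepsilon$. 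This one-line argument yields $f(\sigma)\ge\tilde f([\rho-\varepsilon I_{\H}]_+)-a_f(1-r_\varepsilon)$, which is strictly stronger than (\ref{g-ob-c+}) (the discarded terms $2a^{\uparrow}_f(\varepsilon)+b^{\uparrow}_f(\varepsilon)$ are nonnegative), needs neither of the two finiteness hypotheses of part B, and handles the value $+\infty$ automatically. What the paper's route buys is only uniformity with \cite{LLB}; your route buys a sharper constant and a shorter proof, so it is a genuine improvement worth recording.
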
\smallskip

\begin{remark}\label{ttt}
 All the terms in the r.h.s. of (\ref{g-ob-c+0}) and (\ref{g-ob-c+}) are well defined and finite. Indeed,
the operators $\rho\wedge\varepsilon I_{\H}$ and $[\rho-\varepsilon I_{\H}]_+$ belong to the cone $\widetilde{\S}_0$ due to condition (\ref{S-prop+-+}).
If $\,f(\rho)<+\infty\,$ then inequality (\ref{LAA-1}) implies that $\tilde{f}(\rho\wedge\varepsilon I_{\H})<+\infty$ and $\tilde{f}([\rho-\varepsilon I_{\H}]_+)<+\infty$, since  $\rho\wedge\varepsilon I_{\H}+[\rho-\varepsilon I_{\H}]_+=\rho$.\smallskip
\end{remark}

\emph{Proof.} Everywhere further we will assume that $f(\sigma)<+\infty$, since otherwise (\ref{g-ob-c+0}) and (\ref{g-ob-c+}) hold trivially.
Assume that $\rho$ and $\sigma$ are commuting states in $\,\S_0$ such that $\,\frac{1}{2}\|\rho-\sigma\|_1=\epsilon\leq\varepsilon\leq1$.  Let $\{\varphi_k\}_{k=1}^{+\infty}$ be an orthonormal
basis in $\H$ such that
\begin{equation*}
\rho=\sum_{k=1}^{+\infty} \lambda^{\rho}_k |\varphi_k\rangle\langle \varphi_k|\quad \textrm{and} \quad \sigma=\sum_{k=1}^{+\infty} \lambda^{\sigma}_k |\varphi_k\rangle\langle \varphi_k|,
\end{equation*}
where $\{\lambda^{\rho}_k\}$ is the sequence of eigenvalues of $\rho$ \emph{arranged in the non-increasing order} and
$\{\lambda^{\sigma}_k\}$ is the corresponding sequence of eigenvalues of $\sigma$ (not nondecreasing, in general).\smallskip

A) Motivating by the trick  used in \cite{D++} introduce the states
\begin{equation*}
\tau_+=\epsilon^{-1}\sum_{k=1}^{+\infty} [\lambda^{\rho}_k-\lambda^{\sigma}_k]_+ |\varphi_k\rangle\langle \varphi_k|,\qquad \tau_-=\epsilon^{-1}\sum_{k=1}^{+\infty} [\lambda^{\rho}_k-\lambda^{\sigma}_k]_- |\varphi_k\rangle\langle \varphi_k|,
\end{equation*}
and
$$
\omega_*=\frac{\rho-\epsilon\tau_+}{1-\epsilon}=\frac{\sigma-\epsilon\tau_-}{1-\epsilon},
$$
where $\,[\lambda^{\rho}_k-\lambda^{\sigma}_k]_+=\max\{\lambda^{\rho}_k-\lambda^{\sigma}_k,0\}$ and $\,[\lambda^{\rho}_k-\lambda^{\sigma}_k]_-=\max\{\lambda^{\sigma}_k-\lambda^{\rho}_k,0\}$.
The states $\tau_+$, $\tau_-$ and  $ \omega_*$ belong to the set $\S_0$ due to the condition (\ref{S-prop+-+}). Then we have
\begin{equation}\label{omega-star+}
\rho=\epsilon\tau_++(1-\epsilon)\omega_*\qquad \textrm{and} \qquad\sigma=\epsilon\tau_-+(1-\epsilon)\omega_*.
\end{equation}
By applying inequalities (\ref{LAA-1}) and (\ref{LAA-2}) to the decompositions  in (\ref{omega-star+}) we  obtain
$$
f(\rho)\leq\epsilon f(\tau_+)+(1-\epsilon)f(\omega_*)+b_f(\epsilon)\qquad \textrm{and} \qquad f(\sigma)\geq\epsilon f(\tau_-)+(1-\epsilon)f(\omega_*)-a_f(\epsilon).
$$
The last inequality implies the finiteness of $f(\tau_-)$ and $f(\omega_*)$ by the assumed finiteness of $f(\sigma)$. So,
these inequalities show that
\begin{equation}\label{imp-ineq}
f(\rho)-f(\sigma)\leq\epsilon(f(\tau_+)-f(\tau_-))+a_f(\epsilon)+b_f(\epsilon)\leq \epsilon f(\tau_+)+a^{\uparrow}_f(\varepsilon)+b^{\uparrow}_f(\varepsilon),
\end{equation}
where the last inequality follows from the nonnegativity of $f$ and the monotonicity of the functions $a^{\uparrow}_f$  and $b^{\uparrow}_f$  on $[0,1]$.
Since $\epsilon\tau_+\leq\rho$, $\tau_+\leq I_{\H}$ and $[\tau_+,\rho]=0$, we have
\begin{equation*}
 \epsilon\tau_+\leq \rho\wedge\epsilon I_{\H}\leq \rho\wedge\varepsilon I_{\H}.
\end{equation*}
So,
by using inequality (\ref{LAA-1}) and taking the nonnegativity of $f$ into account we obtain
\begin{equation}\label{imp-ineq+}
\epsilon f(\tau_+)\leq \bar{r}_{\varepsilon}f(\bar{r}^{-1}_{\varepsilon}\rho\wedge\varepsilon I_{\H})+\bar{r}_{\varepsilon}a_f(\epsilon/\bar{r}_{\varepsilon})\leq \bar{r}_{\varepsilon}f(\bar{r}^{-1}_{\varepsilon}\rho\wedge\varepsilon I_{\H})+a^{\uparrow}_f(\varepsilon),
\end{equation}
where $\bar{r}_{\varepsilon}=1-r_{\varepsilon}=\Tr(\rho\wedge\varepsilon I_{\H})$ and the last inequality follows from Lemma \ref{W-L} and the assumption (\ref{a-b-assump}).
This inequality and (\ref{imp-ineq}) imply (\ref{g-ob-c+0}).\smallskip

B) This claim  is derived from claim A by the same arguments that are used to derive claim B of Lemma 1 in \cite{LLB} from part A of that lemma. $\Box$
\smallskip

\subsection{General results}

Among characteristics of a $n$-partite
finite-dimensional quantum system $A_{1}...A_{n}$ there are many that satisfy inequalities (\ref{LAA-1}) and (\ref{LAA-2})
with the functions $a_f$ and $b_f$ proportional to the binary entropy (defined after (\ref{w-k-ineq})) as well as  the  double  inequality
\begin{equation}\label{Cm}
-c^-_f C_m(\rho)\leq f(\rho)\leq c^+_f C_m(\rho),\quad C_m(\rho)=\sum_{k=1}^m S(\rho_{A_k}),\;\; m\leq n,
\end{equation}
for any state $\rho$ in $\S(\H_{A_{1}..A_{n}})$, where $c^-_f$ and $c^+_f$ are nonnegative numbers.

Following \cite{QC,LCB} introduce the class $L_n^m(C,D\vert\,\S_0)$, $m\leq n$, of functions on a convex subset $\S_0$ of $\S(\H_{A_{1}..A_{n}})$
satisfying inequalities  (\ref{LAA-1}) and (\ref{LAA-2})
with $a_f(p)=d_f^-h(p)$ and $b_f(p)=d_f^+h(p)$ and inequality (\ref{Cm}) for any states in $\S_0$ with the parameters $c^{\pm}_f$ and  $d^{\pm}_f$ such that $c^-_f+c^+_f=C$ and $d^-_f+d^+_f=D$.

If $A_1$,...,$A_n$ are arbitrary infinite-dimensional quantum systems then we define the classes $L_n^m(C,D\vert\,\S_0)$, $m\leq n$, by the same rule assuming that all functions in $L_n^m(C,D\vert\,\S_0)$ take values in $(-\infty,+\infty]$ on $\S_0$  and that all the
inequalities in (\ref{LAA-1}), (\ref{LAA-2}) and (\ref{Cm}) hold with possible infinite values in one or all sides.

We also introduce the class $\widehat{L}^{m}_n(C,D\vert\,\S_0)$ obtained by adding to the class $L^{m}_n(C,D\vert\,\S_0)$ all functions of the form
$$
f(\rho)=\inf_{\lambda}f_{\lambda}(\rho)\quad \textrm{and} \quad f(\rho)=\sup_{\lambda}f_{\lambda}(\rho),
$$
where $\{f_{\lambda}\}$ is some family of functions in $L^{m}_n(C,D\vert\,\S_0)$.

If $\S_0=\S_m(\H_{A_{1}..A_{n}})\doteq\left\{\rho\in\mathfrak{S}(\mathcal{H}_{A_1..A_n})\,\vert\,S(\rho_{A_1}),..., S(\rho_{A_m})<+\infty\shs\right\}$,
then we will denote the classes $L^{m}_n(C,D\vert\,\S_0)$ and $\widehat{L}^{m}_n(C,D\vert\,\S_0)$
by $L^{m}_n(C,D)$ and $\widehat{L}^{m}_n(C,D)$ for brevity.\footnote{The necessity to consider the case $\,\S_0\neq\S_m(\H_{A_{1}..A_{n}})$ will be shown in Section 4.}

For example, the von Neumann entropy belongs to the class $L_1^1(1,1\vert\,\S(\H))$, while the (extended) quantum conditional entropy $S(A_1\vert A_2)$ (defined in (\ref{ce-ext})) lies in $L_2^1(2,1)\doteq L_2^1(2,1\vert\,\S_1(\H_{A_1A_2}))$. This follows from the concavity of these characteristics and inequalities (\ref{w-k-ineq}), (\ref{ce-ub}) and (\ref{ce-LAA-2}). The nonnegativity of $S(A_1\vert A_2)$
on the convex set $\S^1_{\rm sep}$ of separable states in $\S_1(\H_{A_1A_2})$ implies that this function also belongs to the class $L_2^1(1,1\vert\,\S^1_{\rm sep})$. Other examples of characteristics of quantum systems belonging the classes $L^{m}_n(C,D)$ and $\widehat{L}^{m}_n(C,D)$
can be found below and in \cite[Section 4]{QC}.

Now we apply Lemma \ref{g-ob} in Section 3.1 to obtain  continuity and semicontinuity bounds for functions from the classes
$\widehat{L}_n^{m}(C,D\vert\,\S_0)$ under the rank constraint on the marginal states corresponding to the subsystems $A_1,...,A_m$.
The following theorem is an advanced version of Theorem 1 in \cite{LCB}.\smallskip

\begin{theorem}\label{main-1}\emph{ Let  $\,\mathfrak{Q}_{X,\F,\tilde{\omega}}$ be the set of states in $\,\S(\H_{A_1...A_n})$ defined in (\ref{q-set}) via  some $\S(\H_{A_1...A_n})$-valued function $\tilde{\omega}(x)$ and $\S_0$ be a convex subset of $\S(\H_{A_1...A_n})$ possessing property (\ref{S-prop}).\smallskip
}

\noindent A) \emph{If $f$ is a  function  in $\widehat{L}_n^{m}(C,D\vert\,\S_0)$ then
\begin{equation}\label{main+1}
    \vert f(\rho)-f(\sigma)\vert \leq C\varepsilon \ln d_m(\rho,\sigma)+Dh(\varepsilon)
\end{equation}
for any states $\rho$ and  $\sigma$ in $\mathfrak{Q}_{X,\F,\tilde{\omega}}\cap\S_0$ such that
\begin{equation}\label{v-cond}
 \mathrm{TV}(\mu_{\rho},\mu_{\sigma})=\varepsilon
\end{equation}
and $\,d_m(\rho,\sigma)\doteq\max\left\{\prod_{k=1}^{m}\rank\rho_{A_k}, \prod_{k=1}^{m}\rank\sigma_{A_k}\right\}$ is finite, where $\mu_{\rho}$ and $\mu_{\sigma}$ are measures in $\P(X)$ representing the states $\rho$ and $\sigma$.\smallskip
}

\emph{Condition (\ref{v-cond}) can be replaced with the condition $\mathrm{TV}(\mu_{\rho},\mu_{\sigma})\leq\varepsilon$
provided that the r.h.s. of (\ref{main+1}) is replaced with one of the expressions
\begin{itemize}
  \item $\{C\varepsilon \ln d_m(\rho,\sigma)+Dh(\varepsilon)\}^{\uparrow}_{\varepsilon}$
  \item $C\varepsilon \ln d_m(\rho,\sigma)+D h^{\uparrow}(\varepsilon)$,
\end{itemize}
where $\,\{...\}^{\uparrow}_{\varepsilon}$ is defined in (\ref{men-def+}) and $\,h^{\uparrow}$ is the function defined in (\ref{h+}).}
\smallskip

\noindent B) \emph{If $f$ is a nonnegative function in $\widehat{L}_n^{m}(C,D\vert\,\S_0)$ and $\rho$ is a state in $\,\mathfrak{Q}_{X,\F,\tilde{\omega}}\cap\S_0\,$ such that $\,d_m(\rho)\doteq\prod_{k=1}^{m}\rank\rho_{A_k}$ is finite then
\begin{equation}\label{main++1}
    f(\rho)-f(\sigma)\leq C\varepsilon \ln d_m(\rho)+Dh(\varepsilon)
\end{equation}
for any state $\sigma$ in $\mathfrak{Q}_{X,\F,\tilde{\omega}}\cap\S_0$ such that condition (\ref{v-cond}) holds.}
\emph{Inequality (\ref{main++1}) remains valid with $d_m(\rho)$ replaced by $\,d^*_m(\rho,\sigma)=\prod_{k=1}^{m}\rank([\rho-\sigma]_+)_{A_k}$, where $[\rho-\sigma]_+$ is the positive part of the Hermitian operator $\rho-\sigma$.}
\smallskip

\emph{Condition (\ref{v-cond}) can be replaced with the condition $\mathrm{TV}(\mu_{\rho},\mu_{\sigma})\leq\varepsilon$
provided that the r.h.s. of (\ref{main++1}) is replaced with one of the expressions
\begin{itemize}
  \item $\{C\varepsilon \ln d_m(\rho)+Dh(\varepsilon)\}^{\uparrow}_{\varepsilon}$;
  \item $C\varepsilon \ln d_m(\rho)+Dh^{\uparrow}(\varepsilon)$,
\end{itemize}
where $\,\{...\}^{\uparrow}_{\varepsilon}$ is defined in (\ref{men-def+}),  $\,h^{\uparrow}$ is the function defined in (\ref{h+}) and $d_m(\rho)$
can be replaced by $d^*_m(\rho,\sigma)$.}
\end{theorem}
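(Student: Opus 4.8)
The plan is to derive everything from Lemma \ref{g-ob}, converting the abstract quantities $C_f$ and $C^+_f$ into the explicit rank terms by means of the growth bound (\ref{Cm}). For Part A I would first treat a single function $f\in L_n^m(C,D\vert\,\S_0)$ with parameters $c^{\pm}_f,d^{\pm}_f$. Applying (\ref{AFW-1+}) and using $a_f=d^-_fh$, $b_f=d^+_fh$, the last two terms collapse to $(d^-_f+d^+_f)h(\varepsilon)=Dh(\varepsilon)$, so it remains to bound $\varepsilon C_f(\rho,\sigma\vert\shs\varepsilon)$. For any admissible pair $\mu,\nu$ in (\ref{C-f}) the constraint $\varepsilon\mu\leq\mu_{\rho}$ gives, after integrating against the positive operator-valued function $\tilde{\omega}$, the operator inequality $\varepsilon\Omega(\mu)\leq\rho$; hence $\Omega(\mu)_{A_k}\leq\varepsilon^{-1}\rho_{A_k}$ and $\rank\Omega(\mu)_{A_k}\leq\rank\rho_{A_k}$. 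Then by (\ref{Cm}), $f(\Omega(\mu))\leq c^+_f C_m(\Omega(\mu))\leq c^+_f\sum_{k=1}^m\ln\rank\rho_{A_k}\leq c^+_f\ln d_m(\rho,\sigma)$, and symmetrically $-f(\Omega(\nu))\leq c^-_f\ln d_m(\rho,\sigma)$, so that $C_f(\rho,\sigma\vert\shs\varepsilon)\leq C\ln d_m(\rho,\sigma)$. This yields $f(\rho)-f(\sigma)\leq C\varepsilon\ln d_m(\rho,\sigma)+Dh(\varepsilon)$; since the right-hand side is symmetric in $\rho,\sigma$, swapping the two states gives (\ref{main+1}).

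To pass to the whole class $\widehat{L}_n^m$ I would note that the derived one-sided bound $B=C\varepsilon\ln d_m(\rho,\sigma)+Dh(\varepsilon)$ depends only on the aggregate constants $C,D$ and the fixed states $\rho,\sigma$, not on the individual decomposition of an $f_{\lambda}$. Hence for $f=\sup_{\lambda}f_{\lambda}$ one has $f_{\lambda}(\rho)\leq f_{\lambda}(\sigma)+B\leq f(\sigma)+B$ for every $\lambda$, and taking the supremum gives $f(\rho)\leq f(\sigma)+B$; for $f=\inf_{\lambda}f_{\lambda}$ one has $f(\rho)\leq f_{\lambda}(\rho)\leq f_{\lambda}(\sigma)+B$ for every $\lambda$, and taking the infimum gives the same conclusion. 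The replacement of (\ref{v-cond}) by $\mathrm{TV}(\mu_{\rho},\mu_{\sigma})\leq\varepsilon$ is then handled as anticipated in Remark \ref{g-ob-r}: the established bound holds at the true value $\varepsilon'=\mathrm{TV}(\mu_{\rho},\mu_{\sigma})\leq\varepsilon$, so passing to the non-decreasing envelope in $\varepsilon$ gives the first displayed expression, while using $Dh^{\uparrow}$ in place of $Dh$ (the term $C\varepsilon\ln d_m$ being already non-decreasing) gives the second.

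Part B is the same argument with (\ref{C-f+}) in place of (\ref{C-f}): nonnegativity of $f$ lets me keep only the $\mu$-term, and the constraint $\varepsilon\mu\leq\mu_{\rho}$ again forces $\rank\Omega(\mu)_{A_k}\leq\rank\rho_{A_k}$, so $C^+_f(\rho\shs\vert\shs\varepsilon)\leq c^+_f\ln d_m(\rho)\leq C\ln d_m(\rho)$; it is essential here that only $\rho$, and not $\sigma$, enters, which is exactly what makes the bound local. The uniformity $c^+_{f_{\lambda}}\leq C$ again yields the result for $\widehat{L}_n^m$, and the envelope device covers $\mathrm{TV}\leq\varepsilon$.

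The step I expect to be the main obstacle is the sharper replacement of $d_m(\rho)$ by $d^*_m(\rho,\sigma)=\prod_{k=1}^m\rank([\rho-\sigma]_+)_{A_k}$. The witness $\tau_+=\Omega(\nu_+)$ produced by Lemma \ref{g-ob} satisfies $\varepsilon\tau_+\geq[\rho-\sigma]_+$, so its marginal ranks only \emph{dominate} those of $[\rho-\sigma]_+$ --- the wrong direction. To gain the sharper constant I would instead bypass the $\mathfrak{Q}$-framework and apply the defining inequalities (\ref{LAA-1}), (\ref{LAA-2}) directly to the operator decompositions $\rho=\epsilon\hat{\tau}_++(1-\epsilon)\omega$ and $\sigma=\epsilon\hat{\tau}_-+(1-\epsilon)\omega$, where $\epsilon=\Tr[\rho-\sigma]_+=\frac12\|\rho-\sigma\|_1\leq\varepsilon$, $\hat{\tau}_{\pm}=\epsilon^{-1}[\rho-\sigma]_{\pm}$, and $(1-\epsilon)\omega=\rho-[\rho-\sigma]_+=\sigma-[\rho-\sigma]_-$ (one checks $[\rho-\sigma]_{\pm}\leq\rho,\sigma$ respectively, so this common value is a positive operator). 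As indicated in Remark \ref{S-prop-r+}, this uses inequality (\ref{AFW-1}) rather than (\ref{AFW-1+}); the gain is that now $\rank(\hat{\tau}_+)_{A_k}=\rank([\rho-\sigma]_+)_{A_k}$ \emph{exactly}, so (\ref{Cm}) gives $f(\hat{\tau}_+)\leq C\ln d^*_m(\rho,\sigma)$ and the bound $f(\rho)-f(\sigma)\leq\epsilon C\ln d^*_m(\rho,\sigma)+Dh(\epsilon)$ follows, after which $\epsilon\leq\varepsilon$ and the envelope bookkeeping for the $h$-term complete the argument. The point requiring care is that $\hat{\tau}_{\pm}$ and $\omega$ must lie in $\S_0$ for the LAA inequalities to apply: this is automatic for $\S_0=\S_m(\H_{A_1..A_n})$, since each is dominated by a multiple of $\rho$ or $\sigma$ and hence inherits finite marginal entropies, and in the general case it must be read off from the stability hypothesis ((\ref{S-prop}) or its variant in Remark \ref{S-prop-r}).
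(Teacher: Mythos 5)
Your treatment of the main claims of parts A and B is correct and coincides with the paper's proof: the reduction from $\widehat{L}_n^{m}$ to $L_n^{m}$ (the bound depends only on $C$, $D$ and the states), the observation that $\varepsilon\mu\leq\mu_{\rho}$ forces $\varepsilon\shs\Omega(\mu)\leq\rho$ and hence $\rank\Omega(\mu)_{A_k}\leq\rank\rho_{A_k}$, the resulting estimates $C_f(\rho,\sigma\vert\shs\varepsilon)\leq C\ln d_m(\rho,\sigma)$ and $C^+_f(\rho\shs\vert\shs\varepsilon)\leq C\ln d_m(\rho)$ via (\ref{Cm}), the two applications of Lemma \ref{g-ob}, and the envelope device for the relaxed condition $\mathrm{TV}(\mu_{\rho},\mu_{\sigma})\leq\varepsilon$ are all exactly what the paper does.

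The gap is in your argument for the refinement $d_m(\rho)\to d^*_m(\rho,\sigma)$. Your concern about the paper's route is not unreasonable: the witness $\tau_+=\Omega(\varepsilon^{-1}[\mu_{\rho}-\mu_{\sigma}]_+)$ from (\ref{AFW-1}) satisfies $\varepsilon\tau_+-\varepsilon\tau_-=\rho-\sigma$ but in general $\varepsilon\tau_+\neq[\rho-\sigma]_+$, and its marginal ranks can indeed exceed those of $[\rho-\sigma]_+$. But your substitute decomposition does not exist in the generality you need it. The parenthetical claim that $[\rho-\sigma]_+\leq\rho$ (equivalently $[\rho-\sigma]_-\leq\sigma$), which you use to make $(1-\epsilon)\omega=\rho-[\rho-\sigma]_+$ a positive operator, is \emph{false for non-commuting} $\rho$ and $\sigma$: take $\rho=|u\rangle\langle u|$ with $|u\rangle=2^{-1/2}(|0\rangle+|1\rangle)$ and $\sigma=|1\rangle\langle1|$ (both lie in a set $\mathfrak{Q}_{X,\F,\tilde{\omega}}$ with $X=\{1,2\}$); then $[\rho-\sigma]_+=2^{-1/2}|v\rangle\langle v|$ for a unit vector $v$ not proportional to $u$, so $\rho-[\rho-\sigma]_+$ is a difference of rank-one positives with non-parallel ranges and has a negative eigenvalue. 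The Jordan-type decomposition with a \emph{common} third state $\omega$ is available at the level of the representing measures (which is exactly what Lemma \ref{g-ob} exploits) and at the operator level only when $[\rho,\sigma]=0$ (the setting of Lemma \ref{b-lemma} and Corollary \ref{main-1-c}, where $\varepsilon\tau_+=[\rho-\sigma]_+$ and your route and the paper's coincide). In addition, even where the decomposition exists, $\hat{\tau}_{\pm}$ need not belong to $\mathfrak{Q}_{X,\F,\tilde{\omega}}$, so hypothesis (\ref{S-prop}) gives no control of their membership in $\S_0$, and your route yields $h(\epsilon)$ with $\epsilon=\frac{1}{2}\|\rho-\sigma\|_1\leq\varepsilon$ rather than $h(\varepsilon)$, which is not a monotone replacement beyond $\varepsilon=1/2$. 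So this sub-claim remains unproved in your write-up; the paper itself disposes of it with a one-line appeal to Remark \ref{S-prop-r+}, and a complete argument in the general quasi-classical setting would have to be supplied along the lines you attempted but without the false positivity step.
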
\smallskip


\begin{remark}\label{S-prop-r+2}
Both claims of Theorem \ref{main-1} remain valid if $\S_0$ is \emph{any} convex subset of $\S(\H_{A_1...A_n})$
provided that condition (\ref{S-prop+}) holds for the states $\rho$ and $\sigma$. This follows from
the proof of this theorem and Remark \ref{S-prop-r}.
\end{remark}\smallskip

\emph{Proof.} In the proofs of both parts of the theorem we may assume that $f$ is a function in $L_n^{m}(C,D\vert\,\S_0)$ . Indeed,  the expressions in r.h.s. of (\ref{main+1}) and  (\ref{main++1}) depend only on the parameters $C$ and $D$ and the characteristics of the states $\rho$ and $\sigma$.
\smallskip

A) Note first that the condition $d_m(\rho,\sigma)<+\infty$ implies that $f(\rho),f(\sigma)<+\infty$ by inequality (\ref{Cm}).
Since $\,\rank\varrho_{A_k}\leq\rank\rho_{A_k}$ and $\,\rank\varsigma_{A_k}\leq\rank\sigma_{A_k}$, $k=\overline{1,m}$,
for any states $\varrho$ and $\varsigma$ in $\mathfrak{Q}_{X,\F,\shs\tilde{\omega}}$ such that $\varepsilon\varrho\leq\rho$ and $\varepsilon\varsigma\leq\sigma$, inequality (\ref{Cm}) implies that
\begin{equation}\label{p-ineq-1}
\vert f(\varrho)-f(\varsigma)\vert \leq C\ln d_m(\rho,\sigma)
\end{equation}
for any such states $\varrho$ and $\varsigma$.  So, the quantities $C_f(\rho,\sigma\shs\vert \shs\varepsilon)$ and $C_f(\sigma,\rho\shs\vert \shs\varepsilon)$ defined in (\ref{C-f}) does not exceed
the r.h.s. of (\ref{p-ineq-1}). Thus, by applying Lemma \ref{g-ob} twice we obtain (\ref{main+1}).

The last claim of A obviously follows from the first one. It suffices only to note that
\begin{equation}\label{o-ineq}
\{C\varepsilon \ln d_m(\rho)+Dh(\varepsilon)\}_{\varepsilon}^{\uparrow}\leq C\varepsilon \ln d_m(\rho)+Dh^{\uparrow}(\varepsilon)
\end{equation}
because the function $h^{\uparrow}(\varepsilon)$ is non-decreasing on $[0,1]$.\smallskip

B) The condition $d_m(\rho)<+\infty$ implies that $f(\rho)<+\infty$ by inequality (\ref{Cm}).
So, the assertion follows directly from the last claim of Lemma \ref{g-ob}, since $\,\rank\varrho_{A_k}\leq\rank\rho_{A_k}<+\infty$, $k=\overline{1,m}$,
for any state $\varrho$ in $\mathfrak{Q}_{X,\F,\shs\tilde{\omega}}$ such that $\varepsilon\varrho\leq\rho$. Thus,
inequality (\ref{Cm}) implies that
\begin{equation}\label{p-ineq}
f(\varrho)\leq C\ln d_m(\rho)
\end{equation}
for any such state $\varrho$. So, the quantity $C^+_f(\rho\shs\vert \shs\varepsilon)$ defined in (\ref{C-f+}) does not exceed
the r.h.s. of (\ref{p-ineq}).

The possibility to replace
$d_m(\rho)$ with  $\,d^*_m(\rho,\sigma)$ in (\ref{main++1}) can be shown by upgrading the above arguments
with the use of inequality (\ref{AFW-1}) in the proof of Lemma \ref{g-ob} instead of (\ref{AFW-1+}) (see Remark \ref{S-prop-r+}).

The last claim of B obviously follows from the first one and (\ref{o-ineq}). $\Box$ \medskip

Now we apply Lemma \ref{g-ob} in Section 3.1 to obtain  continuity and semicontinuity bounds for functions from the classes
$\widehat{L}_n^{m}(C,D\vert\,\S_0)$ under the energy-type constraint  corresponding to the positive operator $H_m\otimes I_{A_{m+1}}\otimes...\otimes I_{A_n}$, where
\begin{equation*}
H_{m}=H_{A_1}\otimes I_{A_2}\otimes...\otimes I_{A_m}+\cdots+I_{A_1}\otimes... \otimes I_{A_{m-1}}\otimes H_{A_m}
\end{equation*}
is a positive operator on the space $\H_{A_1...A_m}$ determined by positive operators $H_{A_1}$,....,$H_{A_m}$ on the spaces $\H_{A_1}$,....,$\H_{A_m}$ (it is assumed that $H_1=H_{A_1}$). It is essential that the operator $H_{m}$ satisfies conditions (\ref{H-cond}) and (\ref{star}) if all the operators $H_{A_1}$,....,$H_{A_m}$  satisfy these conditions   \cite[Section 3.2.3]{QC}. Note that $\Tr H_{m}\rho=\sum_{k=1}^{m}\Tr H_{A_k}\rho_{A_k}$ for any $\rho$ in $\S(\H_{A_1..A_m})$.
We will use the function
\begin{equation}\label{F-H-m}
F_{H_{m}}(E)=\sup\shs\{ S(\rho)\,\vert \,\rho\in\S(\H_{A_1..A_m}),\,\Tr H_{m}\rho\leq E \}.
\end{equation}
If all the operators $H_{A_1}$,....,$H_{A_m}$  are isometrically  equivalent to each other then it is easy to see that $F_{H_{m}}(E)=mF_{H_{\!A_1}}(E/m)$, where $F_{H_{\!A_1}}$ is the function defined in (\ref{F-def}).\smallskip

The following theorem is an advanced version of Theorem 2 in \cite{LCB}.\smallskip

\begin{theorem}\label{main-2} \emph{Let $H_{\!A_1}$,...,$H_{\!A_m}$ be positive operators on the Hilbert spaces $\H_{A_1}$,....,$\H_{A_m}$ satisfying condition (\ref{H-cond})
and (\ref{star}) and $F_{H_{m}}$ the function defined in (\ref{F-H-m}). Let  $\mathfrak{Q}_{X,\F,\tilde{\omega}}$ be the set of states in $\S(\H_{A_1...A_n})$ defined in (\ref{q-set}) via  $\S(\H_{A_1...A_n})$-valued  function $\tilde{\omega}(x)$ and $\S_0$ a convex subset of $\S(\H_{A_1...A_n})$ with property (\ref{S-prop}).}

\smallskip

\noindent A) \emph{If $f$ is a function in $\widehat{L}_n^{m}(C,D\vert\,\S_0)$ then
\begin{equation}\label{main+2}
    \vert f(\rho)-f(\sigma)\vert \leq C\varepsilon F_{H_{m}}(mE/\varepsilon)+Dh(\varepsilon)
\end{equation}
for any states $\rho$ and $\sigma$ in $\,\mathfrak{Q}_{X,\F,\tilde{\omega}}\cap\S_0\,$ such that
\begin{equation}\label{v-cond+}
 \mathrm{TV}(\mu_{\rho},\mu_{\sigma})=\varepsilon
\end{equation}
and $\,\sum_{k=1}^{m}\Tr H_{A_k}\rho_{A_k},\,\sum_{k=1}^{m}\Tr H_{A_k}\sigma_{A_k}\leq mE$, where $\mu_{\rho}$ and $\mu_{\sigma}$ are measures in $\P(X)$ representing the states $\rho$ and $\sigma$.}\smallskip

\emph{Condition (\ref{v-cond+}) can be replaced with the condition $\,\mathrm{TV}(\mu_{\rho},\mu_{\sigma})\leq\varepsilon$
provided that the r.h.s. of (\ref{main+2}) is replaced with one of the expressions
\begin{itemize}
  \item $\{C\varepsilon F_{H_{m}}(mE/\varepsilon)+Dh(\varepsilon)\}^{\uparrow}_{\varepsilon}$
  \item $C\varepsilon F_{H_{m}}(mE/\varepsilon)+Dh^{\uparrow}(\varepsilon)$,
\end{itemize}
where $\,\{...\}^{\uparrow}_{\varepsilon}$ is defined in (\ref{men-def+}) and  $\,h^{\uparrow}$ is the function defined in (\ref{h+}).}\smallskip

\noindent B) \emph{If $f$ is a nonnegative function in $\widehat{L}_n^{m}(C,D\vert\,\S_0)$ and $\rho$ is a state in $\,\mathfrak{Q}_{X,\F,\tilde{\omega}}\cap\S_0\,$ such that $\, E(\rho)\doteq(1/m)\sum_{k=1}^{m}\Tr H_{A_k}\rho_{A_k}<+\infty\,$  then
\begin{equation}\label{main++2}
    f(\rho)-f(\sigma)\leq C\varepsilon F_{H_{m}}(mE(\rho)/\varepsilon)+Dh(\varepsilon)
\end{equation}
for any state $\sigma$ in $\mathfrak{Q}_{X,\F,\tilde{\omega}}\cap\S_0$ such that condition (\ref{v-cond+}) holds. Inequality (\ref{main++2}) remains valid with $E(\rho)$ replaced by $\,E(\rho,\sigma)\doteq(1/m)\sum_{k=1}^{m}\Tr H_{A_k}([\rho-\sigma]_+)_{A_k}$, where $[\rho-\sigma]_+$ is the positive part of the Hermitian operator
$\rho-\sigma$. If the set $\mathfrak{Q}_{X,\F,\tilde{\omega}}$ consists of commuting states then  (\ref{main++2}) holds with $E(\rho)$ replaced by $\,E(\rho)-E_{\varepsilon}(\rho)$, where $E_{\varepsilon}(\rho)\doteq(1/m)\sum_{k=1}^{m}\Tr H_{A_k}([\rho-\varepsilon I_{A_1...A_n}]_+)_{A_k}$.}

\emph{Condition (\ref{v-cond+}) can be replaced with the condition $\,\mathrm{TV}(\mu_{\rho},\mu_{\sigma})\leq\varepsilon$
provided that the r.h.s. of (\ref{main++2}) is replaced with one of the expressions
\begin{itemize}
  \item $\{C\varepsilon F_{H_{m}}(mE(\rho)/\varepsilon)+Dh(\varepsilon)\}^{\uparrow}_{\varepsilon}$;
  \item $C\varepsilon F_{H_{m}}(mE(\rho)/\varepsilon)+Dh^{\uparrow}(\varepsilon)$,
\end{itemize}
where $\,\{...\}^{\uparrow}_{\varepsilon}$ is defined in (\ref{men-def+}),  $\,h^{\uparrow}$ is the function defined in (\ref{h+}) and $E(\rho)$
can be replaced by $E(\rho,\sigma)$. If the set $\mathfrak{Q}_{X,\F,\tilde{\omega}}$ consists of commuting states then $E(\rho)$ in the above expressions
can be replaced by $\,E(\rho)-E_{\varepsilon}(\rho)$.}
\end{theorem}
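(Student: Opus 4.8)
The plan is to mirror the proof of Theorem \ref{main-1}, replacing the rank estimate $C_m(\varrho)\le\ln d_m(\rho)$ used there by an energy estimate expressed through $F_{H_m}$. As in that proof I would first reduce to the case $f\in L_n^{m}(C,D\,\vert\,\S_0)$, since the right-hand sides of (\ref{main+2})--(\ref{main++2}) involve only $C$, $D$ and characteristics of $\rho$ and $\sigma$, while the classes $\widehat{L}$ are generated from $L$ by pointwise $\inf$ and $\sup$. The one genuinely new ingredient is the inequality
\begin{equation*}
C_m(\varrho)=\sum_{k=1}^{m}S(\varrho_{A_k})\le F_{H_m}\!\Big(\sum_{k=1}^{m}\Tr H_{A_k}\varrho_{A_k}\Big),
\end{equation*}
valid for every state $\varrho$: the product state $\varrho_{A_1}\otimes\cdots\otimes\varrho_{A_m}$ belongs to $\C_{H_m,E'}$ with $E'=\sum_k\Tr H_{A_k}\varrho_{A_k}$ and has entropy $\sum_k S(\varrho_{A_k})$, so the estimate is immediate from the definition (\ref{F-H-m}) of $F_{H_m}$ as the supremum of $S$ over all states of energy at most $E'$.

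Next I would exploit the energy scaling under domination. If $\varepsilon\varrho\le\rho$ then $\varepsilon\varrho_{A_k}\le\rho_{A_k}$ (partial trace preserves positivity), and monotonicity of the functional $\Tr H_{A_k}(\cdot)$ on positive operators, via (\ref{H-fun}), gives $\sum_k\Tr H_{A_k}\varrho_{A_k}\le\varepsilon^{-1}\sum_k\Tr H_{A_k}\rho_{A_k}\le mE/\varepsilon$. For any $\mu,\nu$ admissible in (\ref{C-f}), writing $\varrho=\Omega(\mu)$ and $\varsigma=\Omega(\nu)$ so that $\varepsilon\varrho\le\rho$ and $\varepsilon\varsigma\le\sigma$, the key inequality together with (\ref{Cm}) yields $f(\varrho)-f(\varsigma)\le(c_f^-+c_f^+)F_{H_m}(mE/\varepsilon)=C\,F_{H_m}(mE/\varepsilon)$; in particular $f(\rho)$ and $f(\sigma)$ are finite. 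Hence $C_f(\rho,\sigma\,\vert\,\varepsilon)$ and $C_f(\sigma,\rho\,\vert\,\varepsilon)$ are both bounded by $C\,F_{H_m}(mE/\varepsilon)$, and applying Lemma \ref{g-ob} twice with $a_f=d_f^-h$ and $b_f=d_f^+h$ produces (\ref{main+2}). Part B is the same argument carried through the one-sided quantity $C_f^+(\rho\,\vert\,\varepsilon)$ of (\ref{C-f+}) and the last claim of Lemma \ref{g-ob}, now using only the energy of $\rho$.

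For the versions with $\mathrm{TV}(\mu_\rho,\mu_\sigma)\le\varepsilon$ the envelope expression is automatic from the definition (\ref{men-def+}). For the second expression I would invoke Lemma \ref{W-L} with the nonnegative concave function $F_{H_m}$ — whose domain is $[0,+\infty)$ precisely because of assumption (\ref{star}) — and $z=mE$, which shows that $\varepsilon\mapsto\varepsilon F_{H_m}(mE/\varepsilon)$ is nondecreasing on $(0,1]$; together with $h(\varepsilon)\le h^{\uparrow}(\varepsilon)$ this makes the whole right-hand side nondecreasing, so that the passage from $\mathrm{TV}=\varepsilon$ to $\mathrm{TV}\le\varepsilon$ is legitimate (cf. Remark \ref{g-ob-r}).

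The delicate step, which I expect to be the main obstacle, is the sharpening of part B with $E(\rho,\sigma)$ and, in the commuting case, with $E(\rho)-E_\varepsilon(\rho)$. Here I would replace (\ref{AFW-1+}) by (\ref{AFW-1}): dropping the nonnegative term $f(\tau_-)$ gives $f(\rho)-f(\sigma)\le\varepsilon f(\tau_+)+a_f(\varepsilon)+b_f(\varepsilon)$, and $\varepsilon f(\tau_+)=\tilde f(\varepsilon\tau_+)\le c_f^+\sum_k\tilde S((\varepsilon\tau_+)_{A_k})$ by the homogeneous form of (\ref{Cm}), so everything hinges on controlling the energies $\Tr H_{A_k}(\varepsilon\tau_+)_{A_k}$. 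In the commuting case this is clean: the operator construction of Lemma \ref{b-lemma} gives $\varepsilon\tau_+=[\rho-\sigma]_+$ exactly, whence the $E(\rho,\sigma)$ bound, while Lemma \ref{b-lemma}A combined with the identity $\rho\wedge\varepsilon I_{\H}=\rho-[\rho-\varepsilon I_{\H}]_+$ yields the $E(\rho)-E_\varepsilon(\rho)$ form. For the measure-based construction of Lemma \ref{g-ob}, however, $\varepsilon\tau_+=\Omega([\mu_\rho-\mu_\sigma]_+)$ need not equal $[\rho-\sigma]_+$, so I would have to verify carefully that $\sum_k\Tr H_{A_k}(\varepsilon\tau_+)_{A_k}\le\sum_k\Tr H_{A_k}([\rho-\sigma]_+)_{A_k}$ still holds; this is exactly where the commutativity — equivalently, mutual orthogonality of the family $\{\tilde\omega(x)\}$ — appears to enter, and it is the point I would scrutinise most.
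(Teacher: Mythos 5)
Your proposal follows the paper's proof essentially step for step: the reduction to $L_n^{m}(C,D\vert\,\S_0)$, the product-state estimate $\sum_{k}S(\varrho_{A_k})=S(\varrho_{A_1}\otimes\cdots\otimes\varrho_{A_m})\leq F_{H_m}(mE')$, the energy scaling under $\varepsilon\varrho\leq\rho$ feeding into the quantities $C_f$ and $C_f^{+}$ of Lemma \ref{g-ob}, Lemma \ref{W-L} for the monotonicity needed to pass from $\mathrm{TV}=\varepsilon$ to $\mathrm{TV}\leq\varepsilon$, inequality (\ref{AFW-1}) for the $E(\rho,\sigma)$ refinement, and the observation $\varepsilon\varrho\leq\rho\wedge\varepsilon I=\rho-[\rho-\varepsilon I]_+$ for the commuting case. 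The one step you flag as delicate --- comparing $\Tr H_{A_k}(\varepsilon\tau_+)_{A_k}$ with $\Tr H_{A_k}([\rho-\sigma]_+)_{A_k}$ when $\varepsilon\tau_+=\Omega([\mu_\rho-\mu_\sigma]_+)$ need not coincide with $[\rho-\sigma]_+$ --- is precisely the point the paper itself disposes of with a one-line remark about ``upgrading the above arguments'' via (\ref{AFW-1}), so your scrutiny there is warranted but does not constitute a divergence from the paper's route.
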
\smallskip

\begin{remark}\label{main-r-1}
Since the operator $H_m$ satisfies conditions (\ref{H-cond}), the equivalence of (\ref{H-cond}) and (\ref{H-cond-a}) show that the
r.h.s. of (\ref{main+2}) and  (\ref{main++2})  tends to zero as $\,\varepsilon\to0$.
\end{remark}\smallskip


\begin{remark}\label{S-prop-r++}
Both claims of Theorem \ref{main-2} remain valid if $\S_0$ is \emph{any} convex subset of $\S(\H_{A_1...A_n})$
provided that condition (\ref{S-prop+}) holds for the states $\rho$ and $\sigma$. This follows from
the proof of this theorem and Remark \ref{S-prop-r}.
\end{remark}\smallskip

\emph{Proof.}  In the proofs of both parts of the theorem we may assume that $f$ is a function in $L_n^{m}(C,D\vert\,\S_0)$. Indeed, the expressions in r.h.s. of (\ref{main+2}) and  (\ref{main++2}) depend only on the parameters $C$ and $D$ and the characteristics of the states $\rho$ and $\sigma$.\smallskip

A) Since $\Tr H_{m}[\vartheta_{A_1}\otimes...\otimes\vartheta_{A_m}]=\sum_{k=1}^m \Tr H_{A_k}\vartheta_{A_k}$, we have
$$
\sum_{k=1}^m S(\vartheta_{A_k})=S(\vartheta_{A_1}\otimes...\otimes\vartheta_{A_m})\leq F_{H_{m}}(mE)
$$
for any state $\vartheta\in \S_0$ such that $\Tr H_{m}\vartheta_{A_1..A_m}=\sum_{k=1}^m\Tr H_{A_k}\vartheta_{A_k}\leq mE$. Hence, for any such state $\vartheta$  inequality (\ref{Cm}) implies that
\begin{equation}\label{F-p-2+}
-c_f^-F_{H_{m}}(mE)\leq f(\vartheta)\leq c_f^+F_{H_{m}}(mE).
\end{equation}
It follows, in particular, that $f(\rho)$ and $f(\sigma)$ are finite.

Assume that $\varrho$ and $\varsigma$ are states in $\mathfrak{Q}_{X,\F,\shs\tilde{\omega}}\cap\S_0$ such that $\varepsilon\varrho\leq\rho$ and $\varepsilon\varsigma\leq\sigma$. Then
\begin{equation}\label{en-est}
\varepsilon\sum_{k=1}^m \Tr H_{A_k}\varrho_{A_k}\leq\sum_{k=1}^m \Tr H_{A_k}\rho_{A_k}\leq mE
\end{equation}
and
$$
\varepsilon\sum_{k=1}^m \Tr H_{A_k}\varsigma_{A_k}\leq\sum_{k=1}^m \Tr H_{A_k}\sigma_{A_k}\leq mE.
$$
Thus, it follows from (\ref{F-p-2+}) that
$$
\vert f(\varrho)-f(\varsigma)\vert \leq CF_{H_{m}}(mE/\varepsilon).
$$
So, the quantities $C_f(\rho,\sigma\shs\vert \shs\varepsilon)$ and $C_f(\sigma,\rho\shs\vert \shs\varepsilon)$ defined in (\ref{C-f})  do not exceed
$CF_{H_{m}}(mE/\varepsilon)$. Thus, by applying Lemma \ref{g-ob} twice we obtain (\ref{main+2}).

The last claim of A  follows from the first one. It suffices to note that
\begin{equation}\label{o-ineq+}
\{C\varepsilon F_{H_{m}}(mE/\varepsilon)+Dh(\varepsilon)\}^{\uparrow}_{\varepsilon}\leq C\varepsilon F_{H_{m}}(mE/\varepsilon)+Dh^{\uparrow}(\varepsilon)
\end{equation}
because the functions  $\varepsilon\mapsto \varepsilon F_{H_{m}}(mE/\varepsilon)$ and $h^{\uparrow}(\varepsilon)$ are non-decreasing on $[0,1]$ (the
first function is non-decreasing by Lemma \ref{W-L}, since the function $F_{H_{m}}$ is concave on $\mathbb{R}_+$).\smallskip

B) The main assertion of part B follows from the last claim of Lemma \ref{g-ob}, since the arguments from the proof of part A show
that $f(\rho)<+\infty$ and that
\begin{equation}\label{p-ineq-2}
f(\varrho)\leq CF_{H_{m}}(mE(\rho)/\varepsilon)
\end{equation}
for any state $\varrho$ in $\mathfrak{Q}_{X,\F,\shs\tilde{\omega}}\cap\S_0$ such that $\varepsilon\varrho\leq\rho$.
So, the quantity $C^+_f(\rho\shs\vert \shs\varepsilon)$ defined in (\ref{C-f+}) does not exceed
the r.h.s. of (\ref{p-ineq-2}).

The possibility to replace
$E(\rho)$ by $\,E(\rho,\sigma)$ in (\ref{main++2}) can be shown by upgrading the above arguments
with the use of inequality (\ref{AFW-1}) in the proof of Lemma \ref{g-ob} instead of (\ref{AFW-1+}) (see Remark \ref{S-prop-r+}).

If a state $\varrho$ commutes with the state $\rho$ then the condition $\,\varepsilon\varrho\leq\rho$
implies that
$$
\varepsilon\varrho\leq \rho\wedge\varepsilon I_{A_1...A_n}=\rho-[\rho-\varepsilon I_{A_1...A_n}]_+,
$$
since $\varrho\leq I_{A_1...A_n}$. By incorporating this  to the estimates in (\ref{en-est}) it is easy to prove the corresponding claim of part B.\smallskip

The last claim of B  follows from the first one and (\ref{o-ineq+}). $\Box$\smallskip

Theorems \ref{main-1} and \ref{main-2} allow us to refine the continuity bounds for functions from the classes $\widehat{L}_n^{m}(C,D)$
presented in \cite[Section 4]{QC} in the case of commuting states, i.e. such states $\rho$ and $\sigma$ in $\,\S(\H_{A_1..A_n})$ that
$[\rho,\sigma]\doteq \rho\,\sigma-\sigma\rho=0$.
Indeed, since for any commuting states $\rho$ and $\sigma$ in $\,\S(\H_{A_1..A_n})$ there exists an orthonormal
basis $\{\vert k\rangle\}_{k\in\N}$ in $\H_{A_1..A_n}$ in which these states are diagonizable, they belong to the set
$\mathfrak{Q}_{X,\F,\tilde{\omega}}$, where $X=\N$, $\F$ is the $\sigma$-algebra of all subsets of $\N$ and $\tilde{\omega}(k)=\vert k\rangle\langle k\vert$.
In this case the set $\P(X)$ can be identified with the set of probability distributions on $\N$, the representing measures $\mu_{\rho}$ and $\mu_{\sigma}$
correspond to the probability distributions formed by the eigenvalues of $\rho$ and $\sigma$  and
$\mathrm{TV}(\mu_{\rho},\mu_{\sigma})=\textstyle\frac{1}{2}\|\rho-\sigma\|_1$.\smallskip

Thus, Theorem \ref{main-1} implies the following\smallskip

\begin{corollary}\label{main-1-c} \emph{Let  $\S_0$ be a convex subset of $\,\S(\H_{A_1...A_n})$ such that}
\begin{equation}\label{S-prop+++}
 \{\varrho\in\S_0\}\wedge\{\exists c>0:c\varsigma\leq\varrho \}\wedge\{[\varrho,\varsigma]=0\}\;\Rightarrow\;\varsigma\in\S_0.
\end{equation}

\noindent A) \emph{If $f$ is a  function  in $\widehat{L}_n^{m}(C,D\vert\,\S_0)$  then
\begin{equation}\label{main+1-c}
    \vert f(\rho)-f(\sigma)\vert \leq C\varepsilon \ln d_m(\rho,\sigma)+Dh(\varepsilon)
\end{equation}
for any states $\rho$ and  $\sigma$ in $\,\S_0$ s.t. $\,d_m(\rho,\sigma)\doteq\max\left\{\prod_{k=1}^{m}\rank\rho_{A_k}, \prod_{k=1}^{m}\rank\sigma_{A_k}\right\}$ is finite, $\,[\rho,\sigma]=0\,$ and
\begin{equation}\label{v-cond++}
 \frac{1}{2}\|\rho-\sigma\|_1=\varepsilon.
\end{equation}
If $\,m=n=1$ then (\ref{main+1-c}) holds with $d_m(\rho,\sigma)$ replaced by $\,d_*(\rho,\sigma)-1$, where $d_*(\rho,\sigma)$ is the dimension of the minimal subspace containing the supports of
$\rho$ and $\sigma$.}\smallskip

\emph{Condition (\ref{v-cond++}) can be replaced with the condition $\frac{1}{2}\|\rho-\sigma\|_1\leq\varepsilon$
provided that the r.h.s. of (\ref{main+1-c}) is  replaced with one of the expressions\pagebreak
\begin{itemize}
  \item $\{C\varepsilon \ln d_m(\rho,\sigma)+Dh(\varepsilon)\}^{\uparrow}_{\varepsilon}$
  \item $C\varepsilon \ln d_m(\rho,\sigma)+Dh^{\uparrow}(\varepsilon)$,
\end{itemize}
where $\,\{...\}^{\uparrow}_{\varepsilon}$ is defined in (\ref{men-def+}) and  $\,h^{\uparrow}$ is the function defined in (\ref{h+}). If $\,m=n=1$ then  $d_m(\rho,\sigma)$ in the above expressions can be replaced by $\,d_*(\rho,\sigma)-1$.}\smallskip

\noindent B) \emph{If $f$ is a nonnegative function in $\widehat{L}_n^{m}(C,D\vert\,\S_0)$ and $\rho$ is a state in $\,\S_0$ such that $\,d_m(\rho)\doteq\prod_{k=1}^{m}\rank\rho_{A_k}$ is finite then
\begin{equation}\label{main++1-c}
    f(\rho)-f(\sigma)\leq C\varepsilon\ln d_m(\rho)+Dh(\varepsilon)
\end{equation}
for any state $\sigma$ in $\,\S_0$ such that  $\,[\rho,\sigma]=0\,$ and condition (\ref{v-cond++}) holds.}
\emph{Inequality (\ref{main++1-c}) remains valid with $d_m(\rho)$ replaced by $\,d^*_m(\rho,\sigma)=\prod_{k=1}^{m}\rank([\rho-\sigma]_+)_{A_k}$, where $[\rho-\sigma]_+$ is the positive part of the Hermitian operator
$\rho-\sigma$.}\smallskip

\emph{Condition (\ref{v-cond++}) can be replaced with the condition $\frac{1}{2}\|\rho-\sigma\|_1\leq\varepsilon$
provided that the r.h.s. of (\ref{main++1-c}) is replaced with one of the expressions
\begin{itemize}
  \item $\{C\varepsilon \ln d_m(\rho)+Dh(\varepsilon)\}^{\uparrow}_{\varepsilon}$
  \item $C\varepsilon \ln d_m(\rho)+Dh^{\uparrow}(\varepsilon)$,
\end{itemize}
where $\,\{...\}^{\uparrow}_{\varepsilon}$ is defined in (\ref{men-def+}),  $\,h^{\uparrow}$ is the function defined in (\ref{h+}) and $d_m(\rho)$ can be replaced by $d^*_m(\rho,\sigma)$.}

\end{corollary}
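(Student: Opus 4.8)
The plan is to realize the pair of commuting states as elements of a quasi-classical set $\mathfrak{Q}_{X,\F,\tilde{\omega}}$ and then invoke Theorem \ref{main-1} almost verbatim. First I would fix commuting states $\rho$ and $\sigma$ in $\S_0$, choose a common orthonormal eigenbasis $\{|k\rangle\}_{k\in\N}$ of $\H_{A_1...A_n}$, and set $X=\N$, $\F$ equal to the $\sigma$-algebra of all subsets of $\N$, and $\tilde{\omega}(k)=|k\rangle\langle k|$, exactly as in the paragraph preceding the corollary. Then $\mathfrak{Q}_{X,\F,\tilde{\omega}}$ is the set of all states diagonal in $\{|k\rangle\}$, it contains $\rho$ and $\sigma$, their representing measures $\mu_{\rho},\mu_{\sigma}\in\P(X)$ are the eigenvalue distributions, and $\mathrm{TV}(\mu_{\rho},\mu_{\sigma})=\frac{1}{2}\|\rho-\sigma\|_1=\varepsilon$ by the identification already recorded there.

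The one genuine verification is that the abstract hypothesis (\ref{S-prop}) on $\S_0$ holds for this particular $\mathfrak{Q}_{X,\F,\tilde{\omega}}$. Since every state in $\mathfrak{Q}_{X,\F,\tilde{\omega}}$ is diagonal in $\{|k\rangle\}$, any two such states commute; hence if $\varrho\in\S_0\cap\mathfrak{Q}_{X,\F,\tilde{\omega}}$ and $\varsigma\in\mathfrak{Q}_{X,\F,\tilde{\omega}}$ satisfies $c\varsigma\leq\varrho$ for some $c>0$, then $[\varrho,\varsigma]=0$, so assumption (\ref{S-prop+++}) yields $\varsigma\in\S_0$. Thus (\ref{S-prop}) is satisfied, and I can apply parts A and B of Theorem \ref{main-1}: substituting $\varepsilon=\mathrm{TV}(\mu_{\rho},\mu_{\sigma})=\frac{1}{2}\|\rho-\sigma\|_1$ turns (\ref{main+1}) into (\ref{main+1-c}) and (\ref{main++1}) into (\ref{main++1-c}), while the $\mathrm{TV}\leq\varepsilon$ variants together with their non-decreasing-envelope reformulations and the replacement of $d_m(\rho)$ by $d_m^*(\rho,\sigma)$ in part B carry over directly from the corresponding claims of the theorem.

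The remaining point, and the only one needing a separate argument, is the sharpening of $d_m(\rho,\sigma)$ to $d_*(\rho,\sigma)-1$ in the case $m=n=1$. Here I would return to inequality (\ref{AFW-1}) inside the proof of Lemma \ref{g-ob}, reducing first to $f\in L^{m}_n(C,D\vert\,\S_0)$ as in the theorem proofs (the bound depends only on $C,D$ and on $\rho,\sigma$). This gives $f(\rho)-f(\sigma)\leq\varepsilon(f(\tau_+)-f(\tau_-))+a_f(\varepsilon)+b_f(\varepsilon)$ with $\tau_{\pm}=\Omega(\nu_{\pm})$ and $\nu_+\bot\nu_-$. Because $\nu_+\bot\nu_-$, $\varepsilon\nu_+\leq\mu_{\rho}$, and $\varepsilon\nu_-\leq\mu_{\sigma}$, the states $\tau_+$ and $\tau_-$ have disjoint supports, both contained in the minimal subspace containing $\supp\rho\cup\supp\sigma$, of dimension $d_*(\rho,\sigma)$; and since $\varepsilon>0$ forces $\rho\neq\sigma$, both $\tau_{\pm}$ are nonzero, so $\rank\tau_+,\rank\tau_-\leq d_*(\rho,\sigma)-1$. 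For $m=n=1$ inequality (\ref{Cm}) reads $-c_f^-S(\cdot)\leq f(\cdot)\leq c_f^+S(\cdot)$, whence $f(\tau_+)-f(\tau_-)\leq c_f^+\ln\rank\tau_+ + c_f^-\ln\rank\tau_-\leq C\ln(d_*(\rho,\sigma)-1)$; combined with $a_f(\varepsilon)+b_f(\varepsilon)=Dh(\varepsilon)$ this yields the one-sided bound, and symmetry in $\rho,\sigma$ gives the absolute value.

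I expect the main obstacle to be conceptual only in the $m=n=1$ refinement, namely correctly translating the orthogonality $\nu_+\bot\nu_-$ into the disjointness of the supports of $\tau_{\pm}$ and hence into the rank bound $d_*(\rho,\sigma)-1$; the rest is bookkeeping, essentially checking that the hypotheses of Theorem \ref{main-1} (including the reduction from $\widehat{L}^{m}_n$ to $L^{m}_n$ and the envelope variants) are genuinely inherited from the quasi-classical realization of the commuting pair.
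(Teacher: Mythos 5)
Your proposal is correct and follows essentially the same route as the paper: realizing the commuting pair in the diagonal quasi-classical set, checking that hypothesis (\ref{S-prop}) follows from (\ref{S-prop+++}) (the paper instead invokes Remark \ref{S-prop-r} and condition (\ref{S-prop+}), a cosmetic difference), and deriving the $m=n=1$ refinement from the orthogonality $\nu_+\,\bot\,\nu_-$, which forces $\rank\Omega(\mu),\rank\Omega(\nu)\leq d_*(\rho,\sigma)-1$ and hence $C_f\leq C\ln(d_*(\rho,\sigma)-1)$ exactly as in the paper's argument.
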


\begin{proof} Note that property (\ref{S-prop+++}) guarantees the validity of
condition  (\ref{S-prop+}) for any commuting states $\rho$ and $\sigma$ in $\S_0$.
So, by the observation before the corollary the main part of its first claim and the second claim  follow directly from Theorem \ref{main-1}
and  Remark \ref{S-prop-r}.


The claim concerning  the case $m=n=1\,$ is proved by using  Lemma \ref{g-ob}  with the settings mentioned before Corollary \ref{main-1-c}.
Indeed, it is not hard to see that in this case  the rank of the states $\Omega(\mu)$ and $\Omega(\nu)$
in the definition (\ref{C-f}) of $C_f(\rho,\sigma\vert\shs\varepsilon)$  is
\emph{strictly less } than $d_*(\rho,\sigma)$. By using this observation it is easy to show that
\begin{equation*}
\begin{array}{c}
C_f(\rho,\sigma\vert\shs\varepsilon)\leq\sup\left\{f(\varrho)-f(\varsigma)\,\left\vert\, \varrho,\varsigma\in\S(\H): \rank\varrho,\rank\varsigma \leq d_*(\rho,\sigma)-1\right.\right\}\\\\\leq C\ln(d_*(\rho,\sigma)-1),
\end{array}
\end{equation*}
where the last inequality follows from the definition of the class $\widehat{L}_1^{1}(C,D\vert\,\S_0)$. The same estimate holds for $C_f(\sigma,\rho\vert\shs\varepsilon)$
\end{proof}
\smallskip\pagebreak

Theorem \ref{main-2} implies the following\smallskip

\begin{corollary}\label{main-c} \emph{Let $\S_0$ be a convex subset of $\,\S(\H_{A_1...A_n})$ with property (\ref{S-prop+++}). Let $H_{\!A_1}$,...,$H_{\!A_m}$ be positive operators on the Hilbert spaces $\H_{A_1}$,....,$\H_{A_m}$ satisfying conditions (\ref{H-cond})
and (\ref{star}) and $F_{H_{m}}$ the function defined in (\ref{F-H-m}).}\smallskip

\noindent A) \emph{If $f$ is a function in $\widehat{L}_n^{m}(C,D\vert\,\S_0)$ then
\begin{equation}\label{main+c}
    \vert f(\rho)-f(\sigma)\vert \leq C\varepsilon F_{H_{m}}(mE/\varepsilon)+Dh(\varepsilon)
\end{equation}
for any  states $\rho$ and $\sigma$ in $\,\S_0$ such that $\,\sum_{k=1}^{m}\Tr H_{A_k}\rho_{A_k},\,\sum_{k=1}^{m}\Tr H_{A_k}\sigma_{A_k}\leq mE\,$,
$\,[\rho,\sigma]=0$ and
\begin{equation}\label{v-cond+2}
 \frac{1}{2}\|\rho-\sigma\|_1=\varepsilon.
\end{equation}
}\smallskip
\emph{Condition (\ref{v-cond+2}) can be replaced with the condition $\,\frac{1}{2}\|\rho-\sigma\|_1\leq\varepsilon$
provided that the r.h.s. of (\ref{main+c}) is replaced with one of the expressions
\begin{itemize}
  \item $\{C\varepsilon F_{H_{m}}(mE/\varepsilon)+Dh(\varepsilon)\}^{\uparrow}_{\varepsilon}$
  \item $C\varepsilon F_{H_{m}}(mE/\varepsilon)+Dh^{\uparrow}(\varepsilon)$,
\end{itemize}
where $\,\{...\}^{\uparrow}_{\varepsilon}$ is defined in (\ref{men-def+}) and $\,h^{\uparrow}$ is the function defined in (\ref{h+}).}
\smallskip

\noindent B) \emph{If $f$ is a nonnegative function in $\widehat{L}_n^{m}(C,D\vert\,\S_0)$ and $\rho$ is a state in $\,\S_0$ such that $\,E(\rho)\doteq(1/m)\sum_{k=1}^{m}\Tr H_{A_k}\rho_{A_k}<+\infty\,$  then
\begin{equation}\label{main++c}
    f(\rho)-f(\sigma)\leq C\varepsilon F_{H_{m}}(m(E(\rho)-E_{\varepsilon}(\rho))/\varepsilon)+Dh(\varepsilon)
\end{equation}
and
\begin{equation}\label{main+++c}
    f(\rho)-f(\sigma)\leq C\varepsilon F_{H_{m}}(mE(\rho)/\varepsilon)+Dh(\varepsilon)
\end{equation}
for any state $\sigma$ in $\,\S_0$ such that $\,[\rho,\sigma]=0$ and condition (\ref{v-cond+2}) holds,
where\footnote{$\,[\rho-\varepsilon I_{A_1...A_n}]_+$ is the positive part of the Hermitian operator $\,\rho-\varepsilon I_{A_1...A_n}$.}
$$
E_{\varepsilon}(\rho)=\frac{1}{m}\sum_{k=1}^{m}\Tr H_{A_k}\langle[\rho-\varepsilon I_{A_1...A_n}]_+\rangle_{A_k}.
$$
Inequality (\ref{main+++c}) is valid with $E(\rho)$ replaced by $\,E(\rho,\sigma)\doteq(1/m)\sum_{k=1}^{m}\Tr H_{A_k}([\rho-\sigma]_+)_{A_k}$, where $[\rho-\sigma]_+$ is the positive part of the Hermitian operator $\rho-\sigma$.}\smallskip

\emph{Condition (\ref{v-cond+2}) can be replaced with the condition $\,\frac{1}{2}\|\rho-\sigma\|_1\leq\varepsilon$
provided that the r.h.s. of (\ref{main++c}) (resp. of (\ref{main+++c})) is replaced with one of the expressions
\begin{itemize}
  \item $\{ C\varepsilon F_{H_{m}}(m(E(\rho)-E_{\varepsilon}(\rho))/\varepsilon)+Dh(\varepsilon)\}_{\varepsilon}^{\uparrow}$ (resp. $\{ C\varepsilon F_{H_{m}}((mE_*/\varepsilon)+Dh(\varepsilon)\}_{\varepsilon}^{\uparrow}$)
  \item $C\varepsilon F_{H_{m}}(m(E(\rho)-E_{\varepsilon}(\rho))/\varepsilon)+Dh^{\uparrow}(\varepsilon)_{\varepsilon}$ (resp. $C\varepsilon F_{H_{m}}((mE_*/\varepsilon)+Dh^{\uparrow}(\varepsilon)$),
\end{itemize}
where $\,\{...\}^{\uparrow}_{\varepsilon}$ is defined in (\ref{men-def+}), $\,h^{\uparrow}$ is the function defined in (\ref{h+}) and $E_*$ is  either  $\,E(\rho)$ or $E(\rho,\sigma)$.}\end{corollary}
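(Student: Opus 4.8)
The plan is to reduce the whole statement to Theorem \ref{main-2}, realising a pair of commuting states as elements of a suitably chosen set $\mathfrak{Q}_{X,\F,\tilde{\omega}}$, exactly as in the observation preceding the corollary and in the proof of Corollary \ref{main-1-c}. First I would fix commuting states $\rho$ and $\sigma$ in $\S_0$ satisfying the energy and distance hypotheses, pick a common eigenbasis $\{|k\rangle\}_{k\in\N}$ in which both are diagonal, and set $X=\N$, $\F=2^{\N}$ and $\tilde{\omega}(k)=|k\rangle\langle k|$. With this choice one has $\rho,\sigma\in\mathfrak{Q}_{X,\F,\tilde{\omega}}$, the representing measures $\mu_{\rho},\mu_{\sigma}$ are the probability distributions of the eigenvalues of $\rho$ and $\sigma$, and $\mathrm{TV}(\mu_{\rho},\mu_{\sigma})=\frac12\|\rho-\sigma\|_1=\varepsilon$, so that condition (\ref{v-cond+2}) becomes (\ref{v-cond+}) and the energy hypotheses match those of Theorem \ref{main-2} verbatim.

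Next I would supply the hypothesis needed to invoke Theorem \ref{main-2} on the (possibly proper) subset $\S_0$. Since $\S_0$ is only assumed to satisfy (\ref{S-prop+++}) rather than (\ref{S-prop}), I would appeal to Remark \ref{S-prop-r++}, which requires condition (\ref{S-prop+}) for $\rho$ and $\sigma$. The states $\tau_+,\tau_-,\omega_*$ built in the proof of Lemma \ref{g-ob} are diagonal in the fixed basis, hence commute with $\rho$ and $\sigma$, and satisfy $\varepsilon\tau_+\leq\rho$, $\varepsilon\tau_-\leq\sigma$ and $(1-\varepsilon)\omega_*\leq\rho$; applying (\ref{S-prop+++}) with $c=\varepsilon$ (resp. $c=1-\varepsilon$) places all three in $\S_0$, so (\ref{S-prop+}) holds. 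This is the same device that guarantees (\ref{S-prop+}) in the proof of Corollary \ref{main-1-c}.

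With these identifications, claim A and the bound (\ref{main+++c}) of claim B follow immediately from parts A and B of Theorem \ref{main-2} (through Remark \ref{S-prop-r++}), including the replacement of $E(\rho)$ by $E(\rho,\sigma)$, which is already part of the statement of Theorem \ref{main-2}B. The refined inequality (\ref{main++c}), with $E(\rho)$ replaced by $E(\rho)-E_{\varepsilon}(\rho)$, comes from the last assertion of Theorem \ref{main-2}B: since every state in the chosen $\mathfrak{Q}_{X,\F,\tilde{\omega}}$ is diagonal in the common basis, the whole set consists of commuting states, so that refinement is available. Finally, the passage from the equality $\frac12\|\rho-\sigma\|_1=\varepsilon$ to the inequality $\frac12\|\rho-\sigma\|_1\leq\varepsilon$ in the itemised lists is handled exactly as in Theorem \ref{main-2}, using the monotonicity of $\varepsilon\mapsto\varepsilon F_{H_m}(mE/\varepsilon)$ (a consequence of Lemma \ref{W-L} and concavity of $F_{H_m}$) and of $h^{\uparrow}$, with $E_*$ taken to be either $E(\rho)$ or $E(\rho,\sigma)$.

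The step I expect to require the most care is the justification of the sharpened energy bound (\ref{main++c}): it rests on the observation, valid only for commuting states, that $\varepsilon\varrho\leq\rho$ together with $[\varrho,\rho]=0$ and $\varrho\leq I_{\H}$ forces $\varepsilon\varrho\leq\rho\wedge\varepsilon I_{\H}=\rho-[\rho-\varepsilon I_{\H}]_+$. This is precisely what lowers the admissible mean energy of the competitor states $\varrho$ appearing in the supremum defining $C^+_f(\rho\,\vert\,\varepsilon)$ from $E(\rho)/\varepsilon$ to $(E(\rho)-E_{\varepsilon}(\rho))/\varepsilon$. Checking that this operator inequality propagates correctly through the energy functional $\varrho\mapsto\sum_{k=1}^m\Tr H_{A_k}\varrho_{A_k}$ and then through the concave, non-decreasing bound $F_{H_m}$ is the only genuinely non-formal point; everything else is a direct appeal to Theorem \ref{main-2}.
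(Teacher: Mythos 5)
Your proposal is correct and follows essentially the same route as the paper: the paper's proof likewise reduces the corollary to Theorem \ref{main-2} via the observation preceding Corollary \ref{main-1-c} (realising commuting states in a set $\mathfrak{Q}_{X,\F,\tilde{\omega}}$ built from a common eigenbasis) together with Remark \ref{S-prop-r++}, noting that property (\ref{S-prop+++}) yields condition (\ref{S-prop+}) for commuting states. Your additional detail on why $\tau_+$, $\tau_-$, $\omega_*$ land in $\S_0$ and on the commuting-states refinement giving (\ref{main++c}) simply spells out what the paper leaves implicit.
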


\begin{proof} By the observation before Corollary \ref{main-1-c} all the claims of this corollary follow from Theorem \ref{main-2}
and  Remark \ref{S-prop-r++}. It suffices to note that property (\ref{S-prop+++}) guarantees the validity of
condition  (\ref{S-prop+}) for any commuting states $\rho$ and $\sigma$ in $\S_0$.
\end{proof}\smallskip

\noindent\textbf{Note:} Since $E_{\varepsilon}(\rho)$ tends to $\,E(\rho)\doteq(1/m)\sum_{k=1}^{m}\Tr H_{A_k}\rho_{A_k}$ as $\,\varepsilon\to0$, the estimate (\ref{main++c}) is essentially  sharper than (\ref{main+++c}) provided that $\varepsilon\ll1$.\medskip

\begin{example}\label{c-1-e}  Let $f(\rho)=I(A_1\!:\!A_2)_{\rho}$ be the quantum mutual
information defined in (\ref{mi-d}). This is a function on the whole space $\S(\H_{A_1A_2})$  taking values in $[0,+\infty]$ and satisfying the inequalities (\ref{MI-LAA-1}) and (\ref{MI-LAA-2}) with possible value $+\infty$ in both sides. This and upper bound (\ref{MI-UB}) show that $I(A_1\!:\!A_2)$ belongs to the class
$L_2^1(2,2\vert \S(\H_{A_1A_2}))$.

Thus, Corollary \ref{main-1-c}B implies that
\begin{equation}\label{I-1-1}
I(A_1\!:\!A_2)_\rho-I(A_1\!:\!A_2)_\sigma\leq \{2\varepsilon \ln \rank\rho_{A_1}+2h(\varepsilon)\}^{\uparrow}_{\varepsilon}\leq 2\varepsilon \ln \rank\rho_{A_1}+2h^{\uparrow}(\varepsilon)
\end{equation}
for any commuting states $\rho$ and $\sigma$ in $\,\S(\H_{A_1A_2})$ such that $\frac{1}{2}\|\rho-\sigma\|_1\leq \varepsilon$,
where $\,\{...\}^{\uparrow}_{\varepsilon}$ is defined in (\ref{men-def+}) and $\,h^{\uparrow}$ is the function defined in (\ref{h+})

Corollary \ref{main-c}B implies that
\begin{equation}\label{I-1-2}
I(A_1\!:\!A_2)_\rho-I(A_1\!:\!A_2)_\sigma\leq 2\varepsilon F_{H}(E/\varepsilon)+2h^{\uparrow}(\varepsilon)
\end{equation}
for any state $\rho$ in $\,\S(\H_{A_1A_2})$ such that $\,E\doteq\Tr H\rho_{A_1}<+\infty$ and arbitrary state  $\sigma$ in $\,\S(\H_{A_1A_2})$ such that
$[\rho,\sigma]=0$ and $\frac{1}{2}\|\rho-\sigma\|_1\leq \varepsilon$, where $H$ is any positive operator on $\H_{A_1}$ satisfying conditions (\ref{H-cond})
and (\ref{star}) and $\,h^{\uparrow}$ is the function defined in (\ref{h+}). The semicontinuity bound (\ref{I-1-2}) can be refined by replacing $E$ with $E-\Tr H\langle[\rho-\varepsilon I_{A_1A_2}]_+\rangle_{A_1}$.

It follows from (\ref{I-1-2}) that
\begin{equation}\label{I-2-2}
\vert I(A_1\!:\!A_2)_\rho-I(A_1\!:\!A_2)_\sigma\vert \leq 2\varepsilon F_{H}(E/\varepsilon)+2h^{\uparrow}(\varepsilon)
\end{equation}
for any commuting states $\rho$ and $\sigma$ in $\,\S(\H_{A_1A_2})$ such that  $\,\Tr H\rho_{A_1},\Tr H\sigma_{A_1}\leq E\,$ and $\frac{1}{2}\|\rho-\sigma\|_1\leq \varepsilon$,
where $H$ is any positive operator on $\H_{A_1}$ satisfying conditions (\ref{H-cond})
and (\ref{star}).

Comparing continuity bound (\ref{I-2-2}) with the universal continuity bound for $I(A_1\!:\!A_2)$ under the energy constraint presented in \cite[Proposition 14]{QC}
we see an \emph{essential gain in accuracy} (not to mention mathematical simplicity). Of course, this gain is due to the fact that we restrict attention to commuting states.

The semicontinuity bounds (\ref{I-1-1}) and (\ref{I-1-2}) remain valid for the  quantum conditional mutual information
$I(A_1\!:\!A_2|A_3)$ (extended to the set of all states of infinite-dimensional system $A_1A_2A_3$ by the rule described in
\cite[Section 4.2]{QC}) because the function  $f(\rho)=I(A_1\!:\!A_2|A_3)_{\rho}$ is nonnegative and belongs to the class $L_3^1(2,2\vert \S(\H_{A_1A_2A_3}))$.\smallskip

By using  Corollaries  \ref{main-1-c}B and \ref{main-c}B one can obtain the semicontinuity bounds similar to  (\ref{I-1-1}) and (\ref{I-1-2}) for the
following correlation and entanglement measures in a bipartite quantum system $A_1A_2$:  the relative entropy of entanglement, the relative entropy of NPT entanglement, the Rains bound, the one way classical correlation with measured system $A_1$, the one way classical correlation with measured system $A_2$, the quantum discord with measured system $A_1$, the quantum discord with measured system $A_2$, since they belong, respectively, to the classes $L_2^1(1,1)$, $L_2^1(1,1)$, $L_2^1(1,1)$, $\widehat{L}_2^1(1,2)$, $\widehat{L}_2^1(1,2)$, $\widehat{L}_2^1(1,2)$, $\widehat{L}_2^1(2,2)$ (see \cite[Section IV, the proof of Proposition 39]{L&Sh-1},\cite[Section 4]{QC}).
\end{example}

\section{Applications to characteristics of quantum systems}

\subsection{The von Neumann entropy}


The aim of this section is to show that the universal results of Section 3 allow us to reprove
optimal semicontinuity bounds for the von Neumann entropy in  both finite-dimensional and infinite-dimensional settings  (proved before by very specific methods)
and to obtain several new results.

Note first that that direct application of Lemma \ref{g-ob} to the function $f=S$ shows that the inequality
\begin{equation}\label{S-1}
S(\rho)+\varepsilon S(\tau_-)\leq S(\sigma)+\varepsilon S(\tau_+)+h(\varepsilon)
\end{equation}
holds  for any commuting\footnote{It is proved recently in \cite{D++} (by applying a quite sophisticated technique) that the inequality
(\ref{S-1}) holds regardless of the condition $[\rho,\sigma]=0$.} states $\rho$ and $\sigma$ in $\S(\H)$ provided that
$S(\rho)<+\infty$ and $\varepsilon=\,\frac{1}{2}\|\rho-\sigma\|_1$, where
$\tau_\pm=(1/\varepsilon)[\rho-\sigma]_\pm$ and $h$ is the binary entropy. Indeed,
it suffices to note the von Neumann entropy satisfies inequalities (\ref{LAA-1}) and (\ref{LAA-2})
with $a_f=0$ and $b_f=h$ and to apply inequality (\ref{AFW-1+}) from the proof of Lemma \ref{g-ob} by using
the observation before Corollary \ref{main-1-c} in Section 3.2.

In \cite{FCB} it is shown how the inequality
(\ref{S-1}) (more precisely, a simple corollary of this inequality formulated in terms of probability distributions and the Shannon entropy) can be used to prove
the following theorem whose part A (resp. B) with  $m=1$ provides the optimal semicontinuity bound for the von Neumann entropy with the rank (resp. energy) constraint originally obtained in \cite{LCB} (resp. in \cite{BDJSH}) by  different methods.
\smallskip

\begin{theorem}\label{SCB-m} \cite{FCB}\footnote{The claim of part A of this theorem  with $m=1$ coincides with the claim of Proposition 2 in \cite{LCB}, while
the claim of part B with $m=1$ coincides with the claim of Theorem 1 in \cite{BDJSH}.}
\emph{Let $\rho$ be a state in $\S(\H)$ and $m\in\N$ be arbitrary.}\smallskip

A)  \emph{If $\,m<d\doteq\rank\rho<+\infty$  then
\begin{equation*}
   S(\rho)-S(\sigma)\leq \left\{\begin{array}{lr}
        \varepsilon\ln(d-m)+h(\varepsilon)&\textrm{if}\;\;  \varepsilon\leq 1-1/(d-m+1)\; \\
        \ln (d-m+1)& \textrm{if}\;\; \varepsilon\geq 1-1/(d-m+1),
        \end{array}\right.
\end{equation*}
for any  state $\sigma$ in $\S(\H)$ such that  $\,\frac{1}{2}\|\rho-\sigma\|_1\leq\varepsilon\in[0,1]$  and
\begin{equation}\label{mj++}
\sum_{i=1}^{r}\lambda^{\sigma}_i\leq\sum_{i=1}^{r}\lambda^{\rho}_i\quad \forall r=1,2,..., m-1,
\end{equation}
where  $\{\lambda^{\rho}_i\}_{i=1}^{+\infty}$ and $\{\lambda^{\sigma}_i\}_{i=1}^{+\infty}$ are the  sequences of eigenvalues of the states $\rho$ and $\sigma$ arranged in the non-increasing order and it is assumed that the condition (\ref{mj++}) holds trivially for $m=1$.}
\smallskip

B)  \emph{If $E\doteq\Tr H\rho<+\infty$ for some
positive operator  $H$ with  representation (\ref{H-form}) satisfying condition (\ref{H-cond}) then
\begin{equation}\label{S-CB-B}
 S(\rho)-S(\sigma)\leq\left\{\begin{array}{l}
       \!\varepsilon F_{H_m}(E_m/\varepsilon)+h(\varepsilon)\quad\;\, \textrm{if}\;\;  \varepsilon\in[0,a_{H^0_m}\!(E_m)]\\\\
        \!F_{H^0_m}(E_m)\qquad \qquad\qquad\textrm{if}\;\;  \varepsilon\in[a_{H^0_m}\!(E_m),1]
        \end{array}\right.
\end{equation}
for any  state $\sigma$ in $\S(\H)$ such that  $\,\frac{1}{2}\|\rho-\sigma\|_1\leq\varepsilon$ and condition (\ref{mj++}) holds, where $E_m\doteq E-\sum_{i=1}^mh_i\lambda^{\rho}_i$,
$a_{H^0_m}\!(E_m)=1-1/Z_{H^0_m}\!(E_m)$,  $F_{H_m}$, $F_{H^0_m}$  and $Z_{H^0_m}$ are  the functions\footnote{The method of finding  these functions  for a given  operator $H$ is described after Theorem 1 in \cite{FCB}.} defined in (\ref{F-def})  and (\ref{Z}) with $H$ replaced by the operators
\begin{equation}\label{H-rep+}
H_m\doteq \sum_{i=m+1}^{+\infty} h_{i}|\tau_i\rangle\langle \tau_i|,\quad\textit{ and }\quad H^0_m=0|\tau_1\rangle\langle \tau_1|+\sum_{i=m+1}^{+\infty} h_{i}|\tau_i\rangle\langle \tau_i|,
\end{equation}
with  the domains $\;\mathcal{D}(H_m)=\{\varphi\in\H_{\mathcal{T}_m}\,|\,\sum_{i=m+1}^{+\infty} h^2_{i}|\langle\tau_i|\varphi\rangle|^2<+\infty\}$ and
$\mathcal{D}(H^0_m)=\{\varphi\in\H_{\mathcal{T}_m}\oplus\H_{\tau_1}\,|\,\sum_{i=m+1}^{+\infty} h^2_{i}|\langle\tau_i|\varphi\rangle|^2<+\infty\}$,
$\H_{\mathcal{T}_m}$ is the linear span of the system $\mathcal{T}_m\doteq\left\{\tau_i\right\}_{i=m+1}^{+\infty}$, $\H_{\tau_1}$ is the linear span of the vector $\tau_1$.}
\end{theorem}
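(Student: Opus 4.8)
The plan is to reduce both parts to a statement about the eigenvalue distributions and to apply inequality (\ref{S-1}) in that reduced form, so I first make a reduction to commuting states. Let $\{\lambda^{\rho}_i\}$ and $\{\lambda^{\sigma}_i\}$ be the non-increasingly ordered spectra of $\rho$ and $\sigma$, and let $\sigma'$ be the state with the spectrum of $\sigma$ written, in the eigenbasis of $\rho$, in non-increasing order against $\{\lambda^{\rho}_i\}$. Then $\rho$ and $\sigma'$ commute, $S(\sigma')=S(\sigma)$, both $\Tr H\rho$ and $\rank\rho$ are untouched, condition (\ref{mj++}) is unchanged (it depends only on the two spectra), and by the Mirsky inequality (\ref{Mirsky-ineq+}) one has $\tfrac12\|\rho-\sigma'\|_1=\tfrac12\sum_i|\lambda^{\rho}_i-\lambda^{\sigma}_i|\le\tfrac12\|\rho-\sigma\|_1\le\varepsilon$. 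Because the right-hand sides of both asserted bounds are non-decreasing in $\varepsilon$ (they are precisely the non-decreasing envelopes of their first lines, see below), it suffices to prove each bound for the commuting pair $(\rho,\sigma')$ at its own value $\epsilon=\tfrac12\|\rho-\sigma'\|_1\le\varepsilon$. I therefore assume from now on that $\rho$ and $\sigma$ commute, are diagonal in a common basis, and that $p=\{\lambda^{\rho}_k\}$ and $q=\{\lambda^{\sigma}_k\}$ are non-increasing with $p$ supported on $d$ points; in particular $\tau_\pm=\epsilon^{-1}[\rho-\sigma]_\pm$ are diagonal and $[\rho-\sigma]_+\le\rho$.

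For part A I apply the classical form of (\ref{S-1}), namely $S(\rho)-S(\sigma)\le\epsilon\bigl(S(\tau_+)-S(\tau_-)\bigr)+h(\epsilon)$, and discard the non-negative term $S(\tau_-)$. Estimating $S(\tau_+)\le\ln\rank\tau_+$ alone is lossy, since $\rank\tau_+$ may be as large as $d$; the role of (\ref{mj++}) is exactly to exclude the configurations where this happens. The clean way to organise this is to fix $\rho$ and maximise $S(\rho)-S(\sigma)$, i.e. minimise $H(q)$, over the compact feasible set $\{q:\mathrm{TV}(p,q)\le\epsilon,\ \sum_{i\le r}q_i\le\sum_{i\le r}p_i\ \text{for}\ r<m\}$. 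An exchange/KKT analysis shows that at a minimiser the first $m-1$ majorisation constraints are active, so the top $m-1$ coordinates are frozen and contribute to neither $[\rho-\sigma]_+$ nor $[\rho-\sigma]_-$, while the remaining mass is arranged so that $\tau_-$ is a single point mass and $\tau_+$ is uniform on exactly $d-m$ coordinates. For this extremal $q$ one has $S(\tau_-)=0$ and $S(\tau_+)=\ln(d-m)$, and (\ref{S-1}) holds with equality, giving $S(\rho)-S(\sigma)=\epsilon\ln(d-m)+h(\epsilon)$; hence this quantity bounds $S(\rho)-S(\sigma)$ for every feasible $q$. Finally, $g(\epsilon)\doteq\epsilon\ln(d-m)+h(\epsilon)$ has $g'(\epsilon)=\ln(d-m)+\ln\frac{1-\epsilon}{\epsilon}$, so it increases up to $\epsilon_*=1-1/(d-m+1)$ and decreases afterwards with $g(\epsilon_*)=\ln(d-m+1)$; its non-decreasing envelope is exactly the two-line expression of part A, which also justifies the monotonicity used in the reduction.

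Part B runs along the same lines, with the cardinality estimate $S(\tau_+)\le\ln\rank\tau_+$ replaced by the energy estimate $S(\tau_+)\le F_{\bullet}(\Tr\,\bullet\,\tau_+)$ that comes directly from the definition (\ref{F-def}) of $F$ as the maximal entropy at a given mean energy. Using $\epsilon\,\Tr H\tau_+=\Tr H[\rho-\sigma]_+\le\Tr H\rho=E$ (valid because $[\rho-\sigma]_+\le\rho$ after the reduction) and subtracting the contribution of the $m$ lowest levels $h_1,\dots,h_m$ — the energetic counterpart of freezing the top $m-1$ eigenvalues in part A — produces the shifted energy $E_m=E-\sum_{i=1}^m h_i\lambda^{\rho}_i$ and the truncated operator $H_m$ of (\ref{H-rep+}). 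The operator $H^0_m$, which reinserts the ground vector $\tau_1$ at energy $0$, is what describes the extremal $\sigma$ in the large-$\epsilon$ plateau, exactly as $\ln(d-m+1)$ rather than $\ln(d-m)$ appears in part A. Monotonicity of $\epsilon\mapsto\epsilon F_{H_m}(E_m/\epsilon)$ follows from concavity of $F_{H_m}$ and Lemma \ref{W-L}, and the non-decreasing envelope of $\epsilon F_{H_m}(E_m/\epsilon)+h(\epsilon)$, whose break-point is $\epsilon=a_{H^0_m}(E_m)=1-1/Z_{H^0_m}(E_m)$ by the variational formulas (\ref{F-def})--(\ref{Z}), yields (\ref{S-CB-B}).

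The main obstacle in both parts is the extremal analysis: proving that the constrained minimiser of $S(\sigma)$ has the asserted structure (a point-mass $\tau_-$, a uniform $\tau_+$ on $d-m$ coordinates, and a frozen top block). In part A this is a finite convex-type optimisation in which the majorisation constraints enter through their active/inactive status; in part B the corresponding computation is the Gibbs variational problem, and the delicate point is the energy bookkeeping between $H_m$ and $H^0_m$ — deciding exactly how much energy is available to $\tau_+$ and how the ground level enters the large-$\epsilon$ plateau. I would therefore carry out part A in full and then specialise its scheme to the variational formulas (\ref{F-def})--(\ref{Z}) for part B. Having proved the bounds for the commuting pair $(\rho,\sigma')$, the Mirsky reduction together with the established monotonicity removes the commutativity assumption and yields the statement for arbitrary $\sigma$, so that the unconditional form of (\ref{S-1}) from \cite{D++} is not actually needed here.
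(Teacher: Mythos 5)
The paper does not actually prove Theorem \ref{SCB-m}: it is imported from \cite{FCB}, and the surrounding text only records that the proof there rests on a classical (Shannon-entropy) corollary of inequality (\ref{S-1}) obtained after a Mirsky-type reduction to probability distributions. Your opening reduction (replace $\sigma$ by the state $\sigma'$ diagonal in the eigenbasis of $\rho$ with matched non-increasing spectra, use (\ref{Mirsky-ineq+}) to control $\frac{1}{2}\|\rho-\sigma'\|_1$, note that $S(\sigma)$, $\rank\rho$, $\Tr H\rho$ and condition (\ref{mj++}) are unaffected, and invoke monotonicity in $\varepsilon$ of the right-hand sides) is correct and is exactly the intended first step. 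But everything after that is a description of what you expect the extremal configuration to be, not a proof, and you say so yourself (``I would therefore carry out part A in full\ldots''). The step you defer is the entire content of the theorem: inequality (\ref{S-1}) alone gives $S(\rho)-S(\sigma)\le\epsilon\bigl(S(\tau_+)-S(\tau_-)\bigr)+h(\epsilon)$, and the pointwise estimate $S(\tau_+)-S(\tau_-)\le\ln(d-m)$ is simply false for general feasible $\sigma$: take $m=1$, $\rho$ maximally mixed on a two-dimensional subspace and $\sigma=(1-\varepsilon)\rho+\varepsilon|e\rangle\langle e|$ with $e\perp\supp\rho$; then $\tau_+=\rho$ is uniform on $\supp\rho$, $\tau_-$ is a point mass, and $S(\tau_+)-S(\tau_-)=\ln d>\ln(d-1)$. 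Hence the desired bound cannot be obtained by estimating $S(\tau_+)-S(\tau_-)$ uniformly over feasible $\sigma$; one must first locate (or perturb towards) the maximizer of $S(\rho)-S(\sigma)$ and show that \emph{there} the relevant quantity is controlled --- and that identification is precisely the ``exchange/KKT analysis'' you leave out.

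Two of your asserted intermediate facts are also wrong as stated. First, for a fixed $\rho$ the maximum of $S(\rho)-S(\sigma)$ over the feasible set is in general strictly smaller than $\epsilon\ln(d-m)+h(\epsilon)$ (for $\rho$ maximally mixed on two points and $m=1$ it equals $\ln 2-h(\frac{1}{2}+\epsilon)<h(\epsilon)$ for $\epsilon<\frac{1}{2}$), so (\ref{S-1}) does not ``hold with equality'' at the per-$\rho$ extremizer; equality occurs only for special $\rho$, e.g.\ spectrum $(1-\epsilon,\epsilon/(d-1),\dots,\epsilon/(d-1))$ against a pure $\sigma$. Second, the entropy-minimizing $q$ at a fixed total-variation budget moves mass from the \emph{smallest} entries of $p$ onto one large entry, so $\tau_+$ is then proportional to the tail of $p$ rather than uniform; the uniform-$\tau_+$ picture describes the configuration extremal over \emph{all} $\rho$ of rank $d$, not the minimizer for a given $\rho$. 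None of this makes the theorem false, and your part-B bookkeeping ($[\rho-\sigma]_+\le\rho$ after the reduction, the passage from $F_H$ to $F_{H_m}$ and $E_m$, the role of $H^0_m$ and the break-point $a_{H^0_m}(E_m)$) is plausibly the right shape; but as written the proposal only establishes the weaker bounds $\varepsilon\ln d+h(\varepsilon)$ and $\varepsilon F_H(E/\varepsilon)+h(\varepsilon)$, and the sharpening to $d-m$, $H_m$, $E_m$ and the plateau values --- the actual assertion of Theorem \ref{SCB-m} --- is claimed rather than proved.
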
\smallskip

\begin{remark}\label{sv+}
The r.h.s. of (\ref{S-CB-B}) tends to zero as $\,\varepsilon\to0$ due to the equivalence of (\ref{H-cond}) and (\ref{H-cond-a}), since it is easy to see that the operator $H_m$ satisfies condition (\ref{H-cond}) (because $H$ satisfies this condition).
\end{remark}\smallskip

\begin{remark}\label{SCB-r} For any $\varepsilon\in(0,1]$ the r.h.s. of (\ref{S-CB-B}) can be replaced by
$$
\varepsilon F_{H^0_m}(E_m/\varepsilon)+h^{\uparrow}(\varepsilon).
$$
This can be shown by using Lemma \ref{W-L} and by noting  that $F_{H^m}(E)\leq F_{H^0_m}(E)$ for any $E\geq h_{m+1}$. If $m=1$ and the operator $H$ satisfies condition
(\ref{star}) (i.e. $h_1=0$)  then  the function $F_{H^0_m}$ in the above expression coincides with the function $F_{H}$, since $H^0_1=H$ in this case.
\end{remark}
\smallskip

The condition (\ref{mj++}) is interpreted in \cite{FCB} as \emph{$m$-partial majorization} of the state $\sigma$ by the state $\rho$. If the inequality
in (\ref{mj++}) holds for any $r$ (or, at least, for all $r<\rank\rho$ if $\rank\rho<+\infty$) then the state $\rho$ majorizes the state $\sigma$
in the conventional sense and, hence, the Schur concavity of the von Neumann  entropy implies that $S(\rho)\leq S(\sigma)$ \cite{M-2,N&Ch}. So, the claims of Theorem \ref{SCB-m}B with $m>1$ provides estimates  the degree of violation of the inequality $S(\rho)\leq S(\sigma)$ by states $\rho$ and $\sigma$ close to each other w.r.t. the trace norm for which
the majorization condition  is\emph{ only partially fulfilled}  (i.e. fulfilled in the sense of (\ref{mj++})).\medskip

The semicontinuty  bounds in Theorem \ref{SCB-m}B are applicable
to any state  $\rho$ in $\S(\H)$ with finite $S(\rho)$, since for any such state there is a positive operator $H$ on $\H$ satisfying conditions (\ref{H-cond}) and (\ref{star}) such that $\Tr H\rho<+\infty$ \cite[Proposition 4]{EC}. A simple lower semicontinuty bound  applicable
to any state  $\rho$ in $\S(\H)$ with finite $S(\rho)$ can be obtained by using Lemma \ref{b-lemma}A in Section 3. \smallskip

\begin{proposition}\label{S-SCB-2}\emph{If $\rho$ is  a state in $\S(\H)$ with finite entropy then
\begin{equation}\label{S-SCB-2+}
S(\rho)-S(\sigma)\leq \tilde{S}(\rho\wedge\varepsilon I_{\H})+h^{\uparrow}(\varepsilon)
\end{equation}
for any state $\sigma$ in $\,\S(\H)$ such that  $\,\frac{1}{2}\|\rho-\sigma\|_1\leq \varepsilon\in(0,1]$, where $\rho\wedge\varepsilon I_{\H}$ is the operator defined in (\ref{2-op}), $\tilde{S}$ is the extended von Neumann entropy defined in (\ref{S-ext}) and
$\,h^{\uparrow}$ is the function defined in (\ref{h+}).}\smallskip

\emph{The semicontinuity  bounds (\ref{S-SCB-2+}) is faithful: its right hand side tends to zero  as $\,\varepsilon\to0$.}

\end{proposition}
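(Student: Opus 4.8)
The plan is to apply Lemma \ref{b-lemma}A to the function $f=S$ on the set $\S_0=\S(\H)$. The von Neumann entropy is nonnegative, and its concavity together with inequality (\ref{w-k-ineq}) shows that it satisfies (\ref{LAA-1}) and (\ref{LAA-2}) with $a_f\equiv0$ and $b_f=h$; hence $a^{\uparrow}_f\equiv0$ and $b^{\uparrow}_f=h^{\uparrow}$. Since $\S_0=\S(\H)$ is the whole state space, condition (\ref{S-prop+-+}) holds trivially, and $f(\rho)=S(\rho)<+\infty$ by hypothesis. Thus Lemma \ref{b-lemma}A yields (\ref{S-SCB-2+}) for every state $\sigma$ that commutes with $\rho$ and satisfies $\frac{1}{2}\|\rho-\sigma\|_1\le\varepsilon\le1$. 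It remains to remove the commutativity restriction, which is not present in the statement of the proposition.

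The key step is a reduction to the commuting case via the Mirsky inequality (\ref{Mirsky-ineq+}). Let $\{\lambda^{\rho}_i\}$ and $\{\lambda^{\sigma}_i\}$ be the eigenvalue sequences of $\rho$ and $\sigma$ arranged in the non-increasing order, fix an orthonormal basis $\{|i\rangle\}$ of $\H$, and set $\rho_0=\sum_i\lambda^{\rho}_i|i\rangle\langle i|$ and $\sigma_0=\sum_i\lambda^{\sigma}_i|i\rangle\langle i|$. These states commute, and since $S$ and the extended entropy $\tilde{S}$ depend only on eigenvalues, we have $S(\rho_0)=S(\rho)$, $S(\sigma_0)=S(\sigma)$ and $\tilde{S}(\rho_0\wedge\varepsilon I_{\H})=\tilde{S}(\rho\wedge\varepsilon I_{\H})$ (the operator in (\ref{2-op}) has the eigenvalues $\min\{\lambda^{\rho}_i,\varepsilon\}$ in both cases). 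Moreover, Mirsky's inequality gives $\frac{1}{2}\|\rho_0-\sigma_0\|_1=\frac{1}{2}\sum_i|\lambda^{\rho}_i-\lambda^{\sigma}_i|\le\frac{1}{2}\|\rho-\sigma\|_1\le\varepsilon$. Applying the commuting bound already established to $\rho_0,\sigma_0$ and translating back through these equalities produces exactly (\ref{S-SCB-2+}) for the original pair $\rho,\sigma$.

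For faithfulness, $h^{\uparrow}(\varepsilon)\to0$ as $\varepsilon\to0$ is immediate from (\ref{h+}), so the remaining point is $\tilde{S}(\rho\wedge\varepsilon I_{\H})\to0$. Here I would use the subadditivity bound for $\tilde{S}$ from Section 2 applied to the decomposition $\rho=\rho\wedge\varepsilon I_{\H}+[\rho-\varepsilon I_{\H}]_+$: since $\tilde{S}(\rho)=S(\rho)$ and $\tilde{S}(\rho\wedge\varepsilon I_{\H})\ge0$, this gives $\tilde{S}([\rho-\varepsilon I_{\H}]_+)\le S(\rho)$ and $\tilde{S}(\rho\wedge\varepsilon I_{\H})\le S(\rho)-\tilde{S}([\rho-\varepsilon I_{\H}]_+)$. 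As $\varepsilon\to0$ the operator $[\rho-\varepsilon I_{\H}]_+$ converges to $\rho$ in the trace norm, so the lower semicontinuity of $\tilde{S}$ on $\T_+(\H)$ (which follows from the lower semicontinuity of $\omega\mapsto\Tr\eta(\omega)$ and the continuity of $\omega\mapsto\eta(\Tr\omega)$) yields $\liminf_{\varepsilon\to0}\tilde{S}([\rho-\varepsilon I_{\H}]_+)\ge S(\rho)$; combined with the upper bound this forces $\tilde{S}([\rho-\varepsilon I_{\H}]_+)\to S(\rho)$ and hence $\tilde{S}(\rho\wedge\varepsilon I_{\H})\to0$.

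I expect the main obstacle to be this faithfulness claim, since $\tilde{S}$ is not trace-norm continuous at the zero operator and one cannot simply invoke $\rho\wedge\varepsilon I_{\H}\to0$. The semicontinuity argument above circumvents the issue; alternatively one can estimate directly, using (\ref{S-ext}), $\tilde{S}(\rho\wedge\varepsilon I_{\H})=\sum_{i:\lambda^{\rho}_i\le\varepsilon}\eta(\lambda^{\rho}_i)+N_{\varepsilon}\eta(\varepsilon)-\eta(\Tr(\rho\wedge\varepsilon I_{\H}))$, where $N_{\varepsilon}$ is the number of eigenvalues $\lambda^{\rho}_i$ exceeding $\varepsilon$. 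The first sum and the last term vanish as $\varepsilon\to0$ because $S(\rho)=\sum_i\eta(\lambda^{\rho}_i)<+\infty$, while the term $N_{\varepsilon}\eta(\varepsilon)$ is controlled by splitting the eigenvalues at the intermediate scale $\sqrt{\varepsilon}$: those in $(\varepsilon,\sqrt{\varepsilon}\shs]$ each contribute at least $\eta(\varepsilon)$ to the convergent tail $\sum_{i:\lambda^{\rho}_i\le\sqrt{\varepsilon}}\eta(\lambda^{\rho}_i)$, and those exceeding $\sqrt{\varepsilon}$ number at most $\varepsilon^{-1/2}$, so that $N_{\varepsilon}\eta(\varepsilon)\to0$.
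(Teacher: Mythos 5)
Your proposal is correct and follows essentially the same route as the paper: apply Lemma \ref{b-lemma}A to $f=S$ with $a_f=0$, $b_f=h$, and then remove the commutativity restriction via the Mirsky inequality (\ref{Mirsky-ineq+}) applied to the diagonalized states built from the ordered eigenvalue sequences, which is exactly the argument the paper imports from Proposition 1 of \cite{LLB}. Your two arguments for the faithfulness claim (lower semicontinuity of $\tilde{S}$ combined with superadditivity, or the direct tail estimate with the $\sqrt{\varepsilon}$ splitting) are both valid and supply details the paper leaves to the cited reference.
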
\smallskip

To derive the main claim of this proposition from Lemma \ref{b-lemma}A it suffices to note the von Neumann entropy satisfies inequalities (\ref{LAA-1}) and (\ref{LAA-2})
with $a_f=0$ and $b_f=h$ and to repeat the arguments from the proof of Proposition 1 in \cite{LLB} (based on applying Mirsky's inequality (\ref{Mirsky-ineq+})). By the same way with the use of  Lemma \ref{b-lemma}B one can obtain the  following  local lower bound for the von Neumann entropy.
\smallskip

\begin{proposition}\label{S-LLB}
\emph{Let $\rho$ be an arbitrary state in $\S(\H)$. Then
\begin{equation}\label{S-LLB+}
S(\sigma)\geq \tilde{S}([\rho-\varepsilon I_{\H}]_+)-h^{\uparrow}(\varepsilon)
\end{equation}
for any state $\sigma$ in $\S(\H)$ such that  $\,\frac{1}{2}\|\rho-\sigma\|_1\leq \varepsilon\in(0,1]$, where $[\rho-\varepsilon I_{\H}]_+$ is the
positive part of the operator $\rho-\varepsilon I_{\H}$, $\tilde{S}(\cdot)$ is the extended von Neumann entropy defined in (\ref{S-ext}) and
$h^{\uparrow}(\varepsilon)$ is the function defined in (\ref{h+}).}\smallskip

\emph{The lower bounds (\ref{S-LLB+}) is faithful: its right hand side tends to $\,S(\rho)\leq+\infty\,$ as $\,\varepsilon\to0$.}
\end{proposition}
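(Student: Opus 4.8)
The plan is to apply Lemma \ref{b-lemma}B directly to $f=S$, after reducing the general case to the case of commuting states by means of Mirsky's inequality. First I would record the relevant parameters: the von Neumann entropy is concave and satisfies (\ref{w-k-ineq}), so it obeys inequalities (\ref{LAA-1}) and (\ref{LAA-2}) with $a_f\equiv0$ and $b_f=h$. With these choices one has $a^{\uparrow}_f\equiv0$ and $b^{\uparrow}_f=h^{\uparrow}$, the function defined in (\ref{h+}), while the term $a_f(1-r_\varepsilon)$ vanishes. Hence the right-hand side of (\ref{g-ob-c+}) collapses exactly to $\tilde{S}([\rho-\varepsilon I_{\H}]_+)-h^{\uparrow}(\varepsilon)$. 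I would also note that the hypothesis of Lemma \ref{b-lemma}B is automatically met for \emph{every} state $\rho$: since $\sum_i\lambda^{\rho}_i=1$, at most $1/\varepsilon$ eigenvalues exceed $\varepsilon$, so the operator $[\rho-\varepsilon I_{\H}]_+$ defined in (\ref{2-op}) has finite rank and $\tilde{S}([\rho-\varepsilon I_{\H}]_+)<+\infty$ for all $\varepsilon\in(0,1]$. This is what allows the bound to hold for an arbitrary $\rho$, even when $S(\rho)=+\infty$.

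The main point is to remove the commutativity assumption built into Lemma \ref{b-lemma}B but absent from the statement. Given an arbitrary $\sigma$ with $\frac{1}{2}\|\rho-\sigma\|_1\leq\varepsilon$, I would diagonalise $\rho=\sum_i\lambda^{\rho}_i|\varphi_i\rangle\langle\varphi_i|$ with non-increasing eigenvalues and introduce the companion state $\hat{\sigma}=\sum_i\lambda^{\sigma}_i|\varphi_i\rangle\langle\varphi_i|$ obtained by placing the non-increasingly ordered eigenvalues of $\sigma$ into the eigenbasis of $\rho$. Then $[\rho,\hat{\sigma}]=0$, and since both $S$ and $\tilde{S}([\rho-\varepsilon I_{\H}]_+)$ depend only on the spectra, we have $S(\hat{\sigma})=S(\sigma)$ while the target quantity is unchanged. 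By Mirsky's inequality (\ref{Mirsky-ineq+}),
\[
\frac{1}{2}\|\rho-\hat{\sigma}\|_1=\frac{1}{2}\sum_i|\lambda^{\rho}_i-\lambda^{\sigma}_i|\leq\frac{1}{2}\|\rho-\sigma\|_1\leq\varepsilon,
\]
so Lemma \ref{b-lemma}B (with $\S_0=\S(\H)$, which trivially satisfies (\ref{S-prop+-+})) applies to the commuting pair $\rho,\hat{\sigma}$ and yields $S(\sigma)=S(\hat{\sigma})\geq\tilde{S}([\rho-\varepsilon I_{\H}]_+)-h^{\uparrow}(\varepsilon)$, which is exactly (\ref{S-LLB+}).

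Finally I would verify faithfulness. As $\varepsilon\to0$ one has $h^{\uparrow}(\varepsilon)\to0$, and $[\rho-\varepsilon I_{\H}]_+=\rho-\rho\wedge\varepsilon I_{\H}\to\rho$ in the trace norm, so lower semicontinuity of $\tilde{S}$ gives $\liminf_{\varepsilon\to0}\tilde{S}([\rho-\varepsilon I_{\H}]_+)\geq\tilde{S}(\rho)=S(\rho)$. For the reverse estimate I would apply the superadditivity $\tilde{S}(\omega)+\tilde{S}(\theta)\leq\tilde{S}(\rho)$ recorded in the preliminaries with $\omega=[\rho-\varepsilon I_{\H}]_+$ and $\theta=\rho\wedge\varepsilon I_{\H}$ (so that $\omega+\theta=\rho$); together with $\tilde{S}\geq0$ this gives $\tilde{S}([\rho-\varepsilon I_{\H}]_+)\leq S(\rho)$ for every $\varepsilon$. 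Hence the right-hand side of (\ref{S-LLB+}) tends to $S(\rho)$ when $S(\rho)<+\infty$, and the lower semicontinuity bound alone forces it to diverge to $+\infty$ when $S(\rho)=+\infty$.

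The reduction in the second paragraph is the only genuinely delicate step; the remainder is a direct substitution into Lemma \ref{b-lemma}B followed by a standard semicontinuity argument. The subtlety there is that the passage from $\sigma$ to its spectral companion $\hat{\sigma}$ simultaneously preserves both spectral functionals and can only decrease the trace-norm distance to $\rho$ — which is precisely the content of Mirsky's inequality — so no accuracy is sacrificed in dropping the commutativity hypothesis.
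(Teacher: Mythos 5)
Your proposal is correct and follows essentially the same route as the paper: the paper derives Proposition \ref{S-LLB} by applying Lemma \ref{b-lemma}B to $f=S$ with $a_f=0$, $b_f=h$ and repeating the Mirsky-inequality reduction to commuting states from the proof of Proposition 1 in \cite{LLB}, which is exactly your argument. Your verification that $[\rho-\varepsilon I_{\H}]_+$ has finite rank (so the lemma's hypothesis holds for arbitrary $\rho$) and your faithfulness argument via lower semicontinuity and superadditivity of $\tilde{S}$ match the intended reasoning.
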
\smallskip

\begin{remark}\label{S-LLB+r}
The lower bounds (\ref{S-LLB+}) is universal: it is applicable to any state $\rho$ (including states with infinite entropy).
Another advantage of lower bound  (\ref{S-LLB+}) consists in the fact that its r.h.s. is easily calculated, since
\begin{equation*}
S([\rho-\varepsilon I_{\H}]_+)=\sum_{i\in I_\varepsilon} \eta(\lambda_i^{\rho}-\varepsilon)-\eta\!\left(\sum_{i\in I_\varepsilon}(\lambda_i^{\rho}-\varepsilon)\right),\quad \eta(x)=-x\ln x,
\end{equation*}
where $\{\lambda^{\rho}_i\}$ is the  sequences of eigenvalues of $\rho$ and $I_\varepsilon$ is the \emph{finite} set of all $i$ such that $\lambda_i^{\rho}>\varepsilon$.
\end{remark}

Using Proposition \ref{S-LLB} one can improve the examples  considered after Proposition 1 in \cite{LLB} (obtained by the initial (non-optimised) version
of the method described in Section 3).

\subsection{Energy-type functionals}

Let $H$ be a positive operator on a separable Hilbert space $\H$.  In this section we analyze  the function
$$
E_{H}(\varrho)=\Tr H\varrho
$$
on the set $\S(\H)$ defined according to the rule (\ref{H-fun}). If the operator $H$ is bounded then the function $E_{H}$ is bounded and uniformly continuous on $\S(\H)$. The optimal continuity bound
for this function is given be the inequality
\begin{equation*}
|E_H(\rho)-E_H(\sigma)|\leq\frac{1}{2}\shs\|\rho-\sigma\|_1(\|H\|-E_0),
\end{equation*}
where $E_0$ is the infimum of the spectrum of $H$ \cite[Example 2]{QC}.

If the operator $H$ is unbounded then the function $E_{H}$
is not continuous on $\S(\H)$ and takes values in $[0,+\infty]$. The definition of $E_H$ implies that this function is affine and lower
semicontinuous on $\S(\H)$. Using the results of Section 3 one can obtain faithful optimal semicontinuity
bounds for $E_H$ assuming that the operator $H$ has representation (\ref{H-form}).\smallskip

\begin{proposition}\label{E-H-SCB} A) \emph{Let $H$ be a positive operator on $\H$ with representation (\ref{H-form}) and $\rho$ a state in $\S(\H)$ such that
$\Tr H^a \rho<+\infty$ for some $a\geq1$. Then $E_H(\rho)<+\infty$ and
\begin{equation}\label{E-H-SCB-1}
\!E_H(\rho)-E_H(\sigma)\leq \varepsilon^{1-1/a} \sqrt[a]{\Tr H^a (\hat{\rho}\wedge\varepsilon I_H)}=\varepsilon^{1-1/a} \sqrt[a]{\Tr H^a \rho-\Tr H^a [\hat{\rho}-\varepsilon I_H]_+}
\end{equation}
for any state $\sigma$ in $\S(\H)$ such that $\,\frac{1}{2}\|\rho-\sigma\|_1\leq \varepsilon$, where
\begin{equation*}
\hat{\rho}=\sum_{k=1}^{+\infty} \langle\tau_k|\rho|\tau_k\rangle|\tau_k\rangle\langle\tau_k|
\end{equation*}
is a state in $\S(\H)$ and $\hat{\rho}\wedge\varepsilon I_H$  and $\,[\hat{\rho}-\varepsilon I_{\H}]_+$ are the operators defined in (\ref{2-op}).}\smallskip

\noindent B) \emph{The semicontinuity
bound (\ref{E-H-SCB-1}) is faithful: the r.h.s. of (\ref{E-H-SCB-1}) tends to zero as $\varepsilon\to 0$.}\smallskip

\noindent C) \emph{If the minimal eigenvalue of $H$ is equal to zero then the semicontinuity
bound (\ref{E-H-SCB-1}) is optimal: for any $\varepsilon\in(0,1]$ there exist states  $\rho_{\varepsilon}$ and $\,\sigma_{\varepsilon}$
in $\,\S(\H)$ such that $\frac{1}{2}\|\rho-\sigma\|_1\leq \varepsilon$ and the equality holds in (\ref{E-H-SCB-1}).}

\end{proposition}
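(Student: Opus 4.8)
The plan is to push everything onto the diagonal of $\rho$ and $\sigma$ in the eigenbasis $\{\tau_k\}$ of $H$, where the estimate reduces to a single weighted H\"older inequality. Writing $p_k\doteq\langle\tau_k|\rho|\tau_k\rangle$ and $q_k\doteq\langle\tau_k|\sigma|\tau_k\rangle$, I first record that the hypothesis $\Tr H^a\rho<+\infty$ with $a\ge1$ forces $\supp\rho$ into the closure of $\D(H^a)\subseteq\D(H)$, so by the rule (\ref{H-fun}) one has $E_H(\rho)=\sum_k h_k p_k$ and $\Tr H^a\rho=\sum_k h_k^a p_k$, the operator $\hat\rho=\sum_k p_k|\tau_k\rangle\langle\tau_k|$ is a genuine state, and $E_H(\rho)=E_H(\hat\rho)$. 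Finiteness of $E_H(\rho)$ is then immediate from H\"older,
\begin{equation*}
\sum_k h_k p_k\le\Big(\sum_k h_k^a p_k\Big)^{1/a}\Big(\sum_k p_k\Big)^{1-1/a}=(\Tr H^a\rho)^{1/a}<+\infty .
\end{equation*}

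For part A we may assume $E_H(\sigma)<+\infty$ (otherwise the left-hand side is $-\infty$), so $\{q_k\}$ is again a probability distribution. Since $\sum_k(p_k-q_k)=0$ and $h_k\ge0$, dropping the negative increments gives
\begin{equation*}
E_H(\rho)-E_H(\sigma)=\sum_k h_k(p_k-q_k)\le\sum_k h_k\, r_k,\qquad r_k\doteq[p_k-q_k]_+ .
\end{equation*}
The increments satisfy two elementary constraints. First, since pinching onto the eigenbasis $\{\tau_k\}$ contracts the trace norm, $\sum_k|p_k-q_k|\le\|\rho-\sigma\|_1\le2\varepsilon$, whence $\sum_k r_k=\tfrac12\sum_k|p_k-q_k|\le\varepsilon$. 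Second, $r_k\le p_k$ and $r_k\le\sum_j r_j\le\varepsilon$, so $r_k\le\min\{p_k,\varepsilon\}$, which are precisely the eigenvalues of $\hat\rho\wedge\varepsilon I_\H$. Splitting $h_k r_k=(h_k r_k^{1/a})\cdot r_k^{1-1/a}$ and applying H\"older with exponents $a$ and $a/(a-1)$ yields
\begin{equation*}
\sum_k h_k r_k\le\Big(\sum_k h_k^a r_k\Big)^{1/a}\Big(\sum_k r_k\Big)^{1-1/a}\le\varepsilon^{1-1/a}\Big(\sum_k h_k^a\min\{p_k,\varepsilon\}\Big)^{1/a},
\end{equation*}
and the last factor is $\sqrt[a]{\Tr H^a(\hat\rho\wedge\varepsilon I_\H)}$; the alternative form follows from $\hat\rho\wedge\varepsilon I_\H=\hat\rho-[\hat\rho-\varepsilon I_\H]_+$. (For $a=1$ this is exactly Lemma \ref{b-lemma}A applied to the commuting states $\hat\rho,\hat\sigma$, since $E_H$ is affine and hence satisfies (\ref{LAA-1})--(\ref{LAA-2}) with $a_f=b_f=0$.)

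For part B I would use dominated convergence: $\Tr H^a(\hat\rho\wedge\varepsilon I_\H)=\sum_k h_k^a\min\{p_k,\varepsilon\}\to0$ as $\varepsilon\to0^+$, each summand tending to $0$ and being dominated by the summable $h_k^a p_k$; for $a=1$ this is the whole right-hand side, and for $a>1$ the extra factor $\varepsilon^{1-1/a}\to0$ only reinforces the decay. For part C, assuming $h_1=0$, fix an index $j$ with $h_j>0$ (one exists because $H$ is unbounded) and take $\rho_\varepsilon=(1-\varepsilon)|\tau_1\rangle\langle\tau_1|+\varepsilon|\tau_j\rangle\langle\tau_j|$ and $\sigma_\varepsilon=|\tau_1\rangle\langle\tau_1|$. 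Then $\tfrac12\|\rho_\varepsilon-\sigma_\varepsilon\|_1=\varepsilon$ and $E_H(\rho_\varepsilon)-E_H(\sigma_\varepsilon)=\varepsilon h_j$, while $\hat\rho_\varepsilon=\rho_\varepsilon$ gives $\Tr H^a(\hat\rho_\varepsilon\wedge\varepsilon I_\H)=h_1^a\min\{1-\varepsilon,\varepsilon\}+h_j^a\varepsilon=h_j^a\varepsilon$, so the right-hand side of (\ref{E-H-SCB-1}) is $\varepsilon^{1-1/a}(h_j^a\varepsilon)^{1/a}=\varepsilon h_j$, i.e. equality holds; this also shows why $h_1=0$ is indispensable, since a positive ground energy would leave the non-vanishing term $h_1^a\min\{1-\varepsilon,\varepsilon\}$. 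I expect the only real subtlety to be arranging the single H\"older split so that the two separate constraints on $r_k$ (the total mass bound $\sum_k r_k\le\varepsilon$ and the pointwise bound $r_k\le\min\{p_k,\varepsilon\}$) produce simultaneously the prefactor $\varepsilon^{1-1/a}$ and the truncated weight $\min\{p_k,\varepsilon\}$; the passage to the diagonal and the domain bookkeeping under (\ref{H-fun}) are routine once one notes that pinching contracts the trace norm.
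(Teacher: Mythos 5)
Your proof is correct and follows essentially the same route as the paper's: both pinch onto the eigenbasis of $H$ (using contractivity of the pinching channel for the trace norm), reduce to the positive part of the diagonal difference, which is dominated by the eigenvalues $\min\{p_k,\varepsilon\}$ of $\hat{\rho}\wedge\varepsilon I_{\H}$, and conclude with the power-mean/H\"older inequality $\Tr H\tau\leq\sqrt[a]{\Tr H^a\tau}$ --- your explicit sequence manipulation with $r_k=[p_k-q_k]_+$ is just the unpacked form of the paper's appeal to the decomposition $\hat{\rho}=\epsilon\tau_++(1-\epsilon)\omega_*$ from Lemma \ref{b-lemma}A. Your optimality example in part C ($\rho_\varepsilon$ mixing $\tau_1$ and $\tau_j$ versus the paper's $\sigma_\varepsilon$ mixing $\tau_1$ into a pure $\rho_\varepsilon=|\tau_k\rangle\langle\tau_k|$) is a harmless variant that verifies equality just as well.
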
\smallskip

\emph{Proof.} A)  The finiteness of $E_H(\rho)\doteq\Tr H\rho$ follows from the inequality
\begin{equation}\label{s-ineq}
\Tr H\rho\leq \sqrt[a]{\Tr H^a \rho}
\end{equation}
valid for any state $\rho$ in $\S(\H)$ (with possible value $+\infty$ in one or both sides). It is  clear that $\mathcal{D}(H^a)\subseteq\mathcal{D}(H)$. So, the last inequality is a direct corollary of the concavity of the function $\sqrt[a]{x}$, since for any state $\rho$ with finite $\Tr H^a\rho$ we have
$$
\Tr H\rho=\sum_{k=1}^{+\infty}  h_k p_k\quad \textrm{and} \quad \Tr H^a\rho=\sum_{k=1}^{+\infty}h^a_k p_k, \quad p_k=\langle\tau_k|\rho|\tau_k\rangle.
$$

Let $\sigma$ be any state in $\S(\H)$ such that $\textstyle\frac{1}{2}\|\rho-\sigma\|_1\leq\varepsilon$.
We may assume that  $\,E_H(\sigma)<+\infty$, since otherwise (\ref{E-H-SCB-1}) holds trivially.
Then  $\supp\sigma$ belongs to the subspace $\H_\mathcal{T}$ defined after (\ref{H-form}). Since the same
is true for the state $\rho$ by the condition $E_H(\rho)<+\infty$, the operators
\begin{equation*}
\hat{\rho}=\sum_{k=1}^{+\infty} \langle\tau_k|\rho|\tau_k\rangle|\tau_k\rangle\langle\tau_k|\quad \textrm{and}\quad
\hat{\sigma}=\sum_{k=1}^{+\infty} \langle\tau_k|\sigma|\tau_k\rangle|\tau_k\rangle\langle\tau_k|
\end{equation*}
are states in $\S(\H)$ such that $E_H(\hat{\rho})=E_H(\rho)$ and $E_H(\hat{\sigma})=E_H(\sigma)$.

Since $\hat{\rho}=\Pi(\rho)$ and $\hat{\sigma}=\Pi(\sigma)$, where $\Pi$ is the  channel $\,\varrho\mapsto \sum_{k=1}^{+\infty} \langle\tau_k|\varrho|\tau_k\rangle|\tau_k\rangle\langle\tau_k|\,$ from $\T(\H_\mathcal{T})$ to itself, we have
$$
\epsilon\doteq\frac{1}{2}\|\hat{\rho}-\hat{\sigma}\|_1\leq\frac{1}{2}\|\rho-\sigma\|_1\leq \varepsilon.
$$
So, since the states $\hat{\rho}$ and $\hat{\sigma}$ commute and $E_H$ is an affine nonnegative function on $\S(\H)$, by
applying the arguments from the proof of Lemma \ref{b-lemma}A in Section 3
with $\S_0=\S(\H)$ to the function $f=E_H$  (the inequalities (\ref{imp-ineq})-(\ref{imp-ineq+})) we obtain
$$
E_H(\hat{\rho})-E_H(\hat{\sigma})\leq \varepsilon\Tr H\tau_+\leq\varepsilon\sqrt[a]{\Tr H^a\tau_+}\leq\varepsilon^{1-1/a}\sqrt[a]{\Tr H^a (\hat{\rho}\wedge\varepsilon I_H)},
$$
where $\tau_+=(1/\epsilon)[\hat{\rho}-\hat{\sigma}]_+$ and the second inequality follows  from the inequality (\ref{s-ineq})
applied to the state $\tau_+$. To complete the proof of (\ref{E-H-SCB-1}) it suffices to note that $E_H(\hat{\rho})=E_H(\rho)$, $E_H(\hat{\sigma})=E_H(\sigma)$ and $\Tr H^a \rho= \Tr H^a\hat{\rho}=\Tr H^a(\hat{\rho}\wedge\varepsilon I_H)+\Tr H^a [\hat{\rho}-\varepsilon I_H]_+$.\smallskip

Claim B) is obvious. To prove claim C) consider the states $\rho_{\varepsilon}=|\tau_k\rangle\langle\tau_k|$, $k>1$, and $\,\sigma_{\varepsilon}=\varepsilon|\tau_1\rangle\langle\tau_1|+(1-\varepsilon)\rho_{\varepsilon}$. Then  $\frac{1}{2}\|\rho_{\varepsilon}-\sigma_{\varepsilon}\|_1=\varepsilon$
and the assumption $h_1=0$ implies
$$
E_H(\rho_{\varepsilon})-E_H(\sigma_{\varepsilon})=h_k\varepsilon=\varepsilon^{1-1/a} \sqrt[a]{\Tr H^a (\rho_{\varepsilon}\wedge\varepsilon I_H)}=\varepsilon^{1-1/a} \sqrt[a]{\Tr H^a (\hat{\rho}_{\varepsilon}\wedge\varepsilon I_H)}.\;\Box
$$
\smallskip

\begin{remark}\label{E-H-SCB-r} For any $a>1$ the semicontinuity
bound (\ref{E-H-SCB-1})  can be  replaced by the simple semicontinuity
bound
\begin{equation*}
E_H(\rho)-E_H(\sigma)\leq\varepsilon^{1-1/a} \sqrt[a]{\Tr H^a \rho}
\end{equation*}
valid for any states $\rho$ and $\sigma$ in $\S(\H)$ such that $\frac{1}{2}\|\rho-\sigma\|_1\leq \varepsilon$ and $\Tr H^a \rho<+\infty$.

The advantage the semicontinuity
bound (\ref{E-H-SCB-1}) with $a\gg 1$ (in comparison with the same bound with $a$ close to $1$) is a higher rate of convergence to zero of its  right-hand side.
The price for this advantage is the requirement of finiteness of $\Tr H^a \rho$ (which is stronger the greater the parameter $a$).
$\Box$\medskip
\end{remark}

It is well known that the function $E_H$ is continuous on the compact\footnote{The compactness of the set $\C_{H^a\!\!,E}$ follows from the Lemma in \cite{H-c-w-c}.} set
$$
\C_{H^a\!\!,E}\doteq\left\{\rho\in\S(\H)\,|\,\Tr H^a\rho\leq E\right\},\quad E>0,
$$
for any $a>1$. Proposition
\ref{E-H-SCB} and Remark \ref{E-H-SCB-r} imply the simple continuity bound for $E_H$ on $\C_{H^a\!\!,E}$ presented in the following\smallskip

\begin{corollary}\label{E-H-CB} \emph{Let $H$ be a positive operator on $\H$ with representation (\ref{H-form}) and $a>1$ be arbitrary. Then
\begin{equation}\label{E-H-CB+}
|E_H(\rho)-E_H(\sigma)|\leq \varepsilon^{1-1/a} \sqrt[a]{E}
\end{equation}
for any states $\rho$ and $\sigma$ in $\C_{H^a\!\!,E}$ such that $\,\frac{1}{2}\|\rho-\sigma\|_1\leq \varepsilon$.}
\end{corollary}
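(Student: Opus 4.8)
I want to prove Corollary \ref{E-H-CB}: for $\rho,\sigma\in\C_{H^a\!,E}$ (i.e. $\Tr H^a\rho\leq E$ and $\Tr H^a\sigma\leq E$) with $a>1$ and $\frac{1}{2}\|\rho-\sigma\|_1\leq\varepsilon$, the symmetric bound $|E_H(\rho)-E_H(\sigma)|\leq\varepsilon^{1-1/a}\sqrt[a]{E}$ holds.

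The plan is to reduce this to Proposition \ref{E-H-SCB} together with the simplified one-sided bound recorded in Remark \ref{E-H-SCB-r}. Since $\rho\in\C_{H^a\!,E}$ means $\Tr H^a\rho\leq E<+\infty$, the hypothesis of Proposition \ref{E-H-SCB}A is met with this value of $a$, and the remark tells me that
\begin{equation*}
E_H(\rho)-E_H(\sigma)\leq\varepsilon^{1-1/a}\sqrt[a]{\Tr H^a\rho}\leq\varepsilon^{1-1/a}\sqrt[a]{E},
\end{equation*}
where the last step uses monotonicity of $\sqrt[a]{\cdot}$ and $\Tr H^a\rho\leq E$. This is exactly one of the two inequalities I need.

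For the reverse direction I would simply interchange the roles of $\rho$ and $\sigma$. The situation is completely symmetric: $\sigma$ also lies in $\C_{H^a\!,E}$, so $\Tr H^a\sigma\leq E<+\infty$, the trace-norm distance $\frac{1}{2}\|\sigma-\rho\|_1=\frac{1}{2}\|\rho-\sigma\|_1\leq\varepsilon$ is unchanged, and applying the same bound with $\rho$ and $\sigma$ swapped yields
\begin{equation*}
E_H(\sigma)-E_H(\rho)\leq\varepsilon^{1-1/a}\sqrt[a]{\Tr H^a\sigma}\leq\varepsilon^{1-1/a}\sqrt[a]{E}.
\end{equation*}
Combining the two one-sided estimates gives the absolute value and completes the argument.

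There is essentially no obstacle here, which is the point of organising the material this way: all the real work was done in Proposition \ref{E-H-SCB} (the commutation-to-dephasing reduction via the channel $\Pi$, the application of Lemma \ref{b-lemma}A, and inequality (\ref{s-ineq})). The only subtlety worth checking is that the finiteness guarantees propagate correctly: the bound I invoke from Remark \ref{E-H-SCB-r} is valid precisely because $\Tr H^a\rho<+\infty$ (respectively $\Tr H^a\sigma<+\infty$), which membership in $\C_{H^a\!,E}$ supplies for both states. Note also that continuity of $E_H$ on the compact set $\C_{H^a\!,E}$ is not strictly needed for the inequality itself — it is the qualitative backdrop — since the estimate is derived directly from the semicontinuity bound; one only needs the replacement of $\Tr H^a\rho$ by its upper bound $E$, which is the mild loss that turns the sharp (state-dependent) semicontinuity bound into a uniform continuity bound on the energy shell.
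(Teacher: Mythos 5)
Your proof is correct and matches the paper's own (implicit) derivation exactly: the paper states that the corollary follows from Proposition \ref{E-H-SCB} and Remark \ref{E-H-SCB-r}, which is precisely the one-sided bound $E_H(\rho)-E_H(\sigma)\leq\varepsilon^{1-1/a}\sqrt[a]{\Tr H^a\rho}\leq\varepsilon^{1-1/a}\sqrt[a]{E}$ followed by symmetrization in $\rho$ and $\sigma$. Nothing further is needed.
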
\smallskip

The non-convergence to zero of the right-hand side of (\ref{E-H-CB+}) with $a\leq 1$ is not surprising, since it is easy to show that the function $E_H$ is not continuous on the  set $\C_{H^a\!\!,E}$ for any $a\leq 1$.

\subsection{Quantum relative entropy}

Continuity bounds for the quantum relative entropy (defined in (\ref{URE-def})) in the finite dimensional settings are
obtained in \cite{RE-CB-1,RE-CB-2} by applying the special modification of the Alicki-Fannes-Winter method proposed therein.

In this section, we use the results of the previous sections to obtain continuity and semicontinuity bounds for the
function $\varrho\mapsto D(\varrho\shs\|\omega)$ on the set of states of an infinite-dimensional quantum system,
where $\omega$ is a given state of this system.

If $\omega$ is a finite rank state then the
function $\varrho\mapsto D(\varrho\shs\|\omega)$ is uniformly continuous on the set of states $\varrho$ such that
$\supp \varrho\subseteq\supp\omega$. Continuity bound for this function (depending on the minimal eigenvalue of $\omega$) is presented in Corollary
5.9 in \cite{RE-CB-1}.

If $\omega$ is an infinite rank state then the
function $\varrho\mapsto D(\varrho\shs\|\omega)$ is not continuous on the set of states $\varrho$ such that
$\supp \varrho\subseteq\supp\omega$ and it takes the value $+\infty$ at some states of this set. Moreover,
this function is not continuous\footnote{This can be shown by considering the sequence of states $\varrho_n=(1-1/h_k)|\phi_1\rangle\langle\phi_1|+(1/h_k)|\phi_k\rangle\langle\phi_k|$, where $h_k=-\ln \lambda^{\omega}_k$, provided that $\omega=\sum_{k=1}^{+\infty}\lambda^{\omega}_k |\phi_k\rangle\langle\phi_k|$ is a spectral representation of
$\omega$.}  on the set
$$
\{\varrho\in\S(\H)\,|\,\Tr\varrho\shs(-\ln\omega)\leq E\}
$$
for any $E>0$ on which $D(\varrho\shs\|\omega)\leq E$ (this follows from the
representation (\ref{re-exp})). \smallskip

However, the
function $\varrho\mapsto D(\varrho\|\omega)$ is uniformly continuous on the set
$$
\{\varrho\in\S(\H)\,|\,\Tr\varrho\shs(-\ln\omega)^a\leq E\}
$$
for any $a>1$ and $E>0$. A simple continuity bound for this function  is given by the following\smallskip

\begin{proposition}\label{RE-CB}
\emph{Let $\omega$ be a faithful  state in $\S(\H)$ and
$H=c(-\ln\omega+\ln\lambda^{\omega}_1I_{\H})\,$ be a positive densely defined operator on $\H$ satisfying condition (\ref{star}), where $c>0$ and $\lambda^{\omega}_1$ is the maximal eigenvalue of $\,\omega$. Let $a>1$ be arbitrary.  Then
\begin{equation}\label{RE-CB+}
|D(\rho\shs\|\shs\omega)-D(\sigma\|\shs\omega)|\leq (1/c)\varepsilon^{1-1/a} \sqrt[a]{E}+\varepsilon F_{H}(\sqrt[a]{E/\varepsilon}) +h^{\uparrow}(\varepsilon)
\end{equation}
for any states $\rho$ and $\sigma$ in $\S(\H)$ such that $\frac{1}{2}\|\rho-\sigma\|_1\leq \varepsilon$ and
$\Tr H^a\rho,\Tr H^a\sigma \leq E$, where $F_{H}$ is the function defined in (\ref{F-def}) and
$h^{\uparrow}$ is the function defined in (\ref{h+}).}
\end{proposition}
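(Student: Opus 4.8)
The plan is to split the relative-entropy difference into an energy part and an entropy part and to bound each by a result already available. Since $H=c(-\ln\omega+\ln\lambda^{\omega}_1 I_{\H})$ we have $-\ln\omega=(1/c)H-\ln\lambda^{\omega}_1 I_{\H}$, hence $\Tr\varrho(-\ln\omega)=(1/c)\Tr H\varrho-\ln\lambda^{\omega}_1$ whenever $\Tr H\varrho<+\infty$. First I would secure finiteness of all terms: the hypothesis $\Tr H^a\rho\le E$ with $a>1$ gives, by inequality (\ref{s-ineq}), $\Tr H\rho\le\sqrt[a]{\Tr H^a\rho}\le\sqrt[a]{E}<+\infty$ and likewise for $\sigma$, so that $S(\rho)\le F_{H}(\Tr H\rho)<+\infty$ and $S(\sigma)<+\infty$ (the asserted bound being vacuous where $F_H$ is infinite, so finiteness may be assumed). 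Then formula (\ref{re-exp}) applies to both states, the $\ln\lambda^{\omega}_1$ terms cancel, and
\[ D(\rho\|\omega)-D(\sigma\|\omega)=\tfrac{1}{c}\big(\Tr H\rho-\Tr H\sigma\big)-\big(S(\rho)-S(\sigma)\big), \]
whence $|D(\rho\|\omega)-D(\sigma\|\omega)|\le\tfrac{1}{c}|\Tr H\rho-\Tr H\sigma|+|S(\rho)-S(\sigma)|$.

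For the energy part I would apply Corollary \ref{E-H-CB} directly: as $\rho,\sigma\in\C_{H^a\!\!,E}$ and $\tfrac12\|\rho-\sigma\|_1\le\varepsilon$, it yields $|\Tr H\rho-\Tr H\sigma|\le\varepsilon^{1-1/a}\sqrt[a]{E}$, producing the first term $(1/c)\varepsilon^{1-1/a}\sqrt[a]{E}$ of (\ref{RE-CB+}). This step requires no commutativity, since the continuity bound for $E_H$ is proved in Proposition \ref{E-H-SCB} through the $H$-diagonalizing pinching, which leaves $\Tr H\cdot$ unchanged.

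The entropy part must supply $\varepsilon F_H(\sqrt[a]{E/\varepsilon})+h^{\uparrow}(\varepsilon)$. Writing $\epsilon=\tfrac12\|\rho-\sigma\|_1\le\varepsilon$ and $\tau_{\pm}=\epsilon^{-1}[\rho-\sigma]_{\pm}$, inequality (\ref{S-1}) (valid for general states by \cite{D++}) together with $S\ge0$ gives $S(\rho)-S(\sigma)\le\epsilon S(\tau_+)+h(\epsilon)$, and by the symmetric choice $S(\sigma)-S(\rho)\le\epsilon S(\tau_-)+h(\epsilon)$, hence $|S(\rho)-S(\sigma)|\le\epsilon\max\{S(\tau_+),S(\tau_-)\}+h(\epsilon)$. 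The crucial estimate is $S(\tau_{\pm})\le F_{H}(\Tr H\tau_{\pm})\le F_{H}(\sqrt[a]{E/\epsilon})$, where the first inequality is the definition (\ref{F-def}) and the second combines (\ref{s-ineq}) for the state $\tau_{\pm}$ with $\Tr H^a\tau_{\pm}=\epsilon^{-1}\Tr H^a[\rho-\sigma]_{\pm}\le\epsilon^{-1}E$. This yields $|S(\rho)-S(\sigma)|\le\epsilon F_H(\sqrt[a]{E/\epsilon})+h(\epsilon)$.

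To pass from $\epsilon$ to $\varepsilon$ I would note that $y\mapsto F_H(\sqrt[a]{y})$ is nonnegative and concave (the concave nondecreasing $F_H$ composed with the concave map $y\mapsto y^{1/a}$, $a>1$), so by Lemma \ref{W-L} the map $\epsilon\mapsto\epsilon F_H(\sqrt[a]{E/\epsilon})$ is nondecreasing; combined with $h(\epsilon)\le h^{\uparrow}(\varepsilon)$ this gives $|S(\rho)-S(\sigma)|\le\varepsilon F_H(\sqrt[a]{E/\varepsilon})+h^{\uparrow}(\varepsilon)$, and adding the energy estimate produces (\ref{RE-CB+}). The step I expect to be the main obstacle is the operator-level bound $\Tr H^a[\rho-\sigma]_{\pm}\le E$ inside the entropy estimate. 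For commuting $\rho,\sigma$ it is immediate, since $[\rho-\sigma]_+$ is then diagonal with entries dominated by those of $\rho$ (and $[\rho-\sigma]_-$ by those of $\sigma$), giving $\Tr H^a[\rho-\sigma]_+\le\Tr H^a\rho\le E$; for genuinely non-commuting states the positive part of $\rho-\sigma$ need not satisfy $\Tr H^a[\rho-\sigma]_+\le\Tr H^a\rho$, and controlling this quantity while keeping the sharp argument $\sqrt[a]{E/\varepsilon}$ is the delicate point, where one leans on the common $H$-eigenbasis structure and on the general-state validity of (\ref{S-1}) established in \cite{D++}.
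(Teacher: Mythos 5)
Your overall architecture is the same as the paper's: split $D(\varrho\|\shs\omega)=(1/c)\Tr H\varrho-\ln\lambda^{\omega}_1-S(\varrho)$ via (\ref{re-exp}), bound the energy difference by Corollary \ref{E-H-CB}, and bound the entropy difference by an energy-constrained entropy continuity bound. The energy part of your argument is correct and is exactly what the paper does.

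The gap is in the entropy part. Your estimate hinges on $\Tr H^a\tau_{\pm}\leq E/\epsilon$, i.e.\ on $\Tr H^a[\rho-\sigma]_+\leq\Tr H^a\rho$, which you justify via $[\rho-\sigma]_+\leq\rho$. That operator inequality holds only when $\rho$ and $\sigma$ commute (already for two non-commuting pure qubit states $\rho-[\rho-\sigma]_+$ has a negative eigenvalue), and the proposition carries no commutativity hypothesis. You correctly flag this as "the delicate point," but appealing to "the common $H$-eigenbasis structure" does not repair it: pinching in the eigenbasis of $H$ preserves $\Tr H^a(\cdot)$ but changes the entropies (it increases them), so it cannot be used to reduce the entropy difference of $\rho,\sigma$ to that of their pinched versions, and the general-state validity of (\ref{S-1}) from \cite{D++} does not by itself control the energy of $\tau_+$. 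The paper closes this step differently: it first verifies that $H^a$ satisfies the Gibbs condition (\ref{H-cond}) using $F_{H^a}(E)\leq F_H(\sqrt[a]{E})\leq pE^{1/a}+q=o(E)$ and the equivalence of (\ref{H-cond}) and (\ref{H-cond-a}), and then invokes Theorem \ref{SCB-m}B together with Remark \ref{SCB-r} \emph{for the operator $H^a$}, which yields $|S(\rho)-S(\sigma)|\leq\varepsilon F_{H^a}(E/\varepsilon)+h^{\uparrow}(\varepsilon)\leq\varepsilon F_H(\sqrt[a]{E/\varepsilon})+h^{\uparrow}(\varepsilon)$ for \emph{arbitrary} states satisfying $\Tr H^a\rho,\Tr H^a\sigma\leq E$ and $\frac{1}{2}\|\rho-\sigma\|_1\leq\varepsilon$; the hard work for non-commuting states is contained in that cited theorem (from \cite{FCB,BDJSH}), not in the $\tau_{\pm}$ computation. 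As written, your proof establishes (\ref{RE-CB+}) only for commuting $\rho$ and $\sigma$.
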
\smallskip

\emph{Proof.} Note first that that the positive operator $H^a$ satisfies the Gibbs condition (\ref{H-cond}).
Indeed, the inequality (\ref{s-ineq}) and the definition (\ref{F-def}) imply that
\begin{equation}\label{F-ineq}
F_{H^a}(E)\leq F_{H}(\sqrt[a]{E})\qquad \forall E>0.
\end{equation}
By Proposition 1 in \cite{EC} there exist positive numbers $p$ and $q$ such that
$F_{H}(E)\leq pE+q$ for all $E>0$. Thus, $F_{H^a}(E)=o(E)$ as $E\to+\infty$ and the required claim follows from the  equivalence of (\ref{H-cond}) and (\ref{H-cond-a}).

Since the operator $H^a$ satisfies the  condition (\ref{H-cond}), Theorem \ref{SCB-m}B and Remark \ref{SCB-r} in Section 4.1  imply (due to the assumptions
 $\frac{1}{2}\|\rho-\sigma\|_1\leq \varepsilon$ and
$\Tr H^a\rho,\Tr H^a\sigma \leq E$)
the finiteness of $S(\rho)$ and $S(\sigma)$ and the inequality
\begin{equation}\label{S-CB}
|S(\rho)-S(\sigma)|\leq\varepsilon F_{H^a}(E/\varepsilon)+h^{\uparrow}(\varepsilon)\leq\varepsilon F_{H}(\sqrt[a]{E/\varepsilon})+h^{\uparrow}(\varepsilon),
\end{equation}
where the second inequality follows from (\ref{F-ineq}).

Since representation (\ref{re-exp}) implies that $D(\varrho\shs\|\omega)=(1/c)\Tr H\varrho-\ln\lambda^{\omega}_1-S(\varrho)$, $\varrho=\rho,\sigma$,  inequality (\ref{RE-CB+}) follows from
(\ref{S-CB}) and  Corollary \ref{E-H-CB}. $\Box$\smallskip

\begin{corollary}\label{RE-CB-c} \emph{Let $H$ be a positive densely defined operator satisfying condition (\ref{star}) and $\beta>0$ such that $\Tr e^{-\beta H}<+\infty$. Let $\,\omega=e^{-\beta H}/\Tr e^{-\beta H}$ and $a>1$ be arbitrary. Then
\begin{equation}\label{RE-CB-c+}
|D(\rho\shs\|\shs\omega)-D(\sigma\|\shs\omega)|\leq \beta\varepsilon^{1-1/a} \sqrt[a]{E}+\varepsilon F_{H}(\sqrt[a]{E/\varepsilon}) +h^{\uparrow}(\varepsilon)
\end{equation}
for any states $\rho$ and $\sigma$ in $\S(\H)$ such that $\frac{1}{2}\|\rho-\sigma\|_1\leq \varepsilon$ and
$\Tr H^a\rho,\Tr H^a\sigma \leq E$, where  $F_{H}$ is the function defined in (\ref{F-def}) and $\,h^{\uparrow}$ is the function defined in (\ref{h+}).}
\end{corollary}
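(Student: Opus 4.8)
The plan is to obtain this corollary as a direct specialization of Proposition \ref{RE-CB}, the only real task being to identify the correct value of the free parameter $c$ appearing there. First I would record the relation between the given operator $H$ and the Gibbs state $\omega=e^{-\beta H}/Z$, where $Z=\Tr e^{-\beta H}$. Taking logarithms gives $-\ln\omega=\beta H+(\ln Z)I_{\H}$, so that
\begin{equation*}
H=\frac{1}{\beta}\bigl(-\ln\omega-(\ln Z)I_{\H}\bigr).
\end{equation*}

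Next I would compute the maximal eigenvalue of $\omega$ needed to match the form of the operator in Proposition \ref{RE-CB}. Since $H$ satisfies condition (\ref{star}), its minimal eigenvalue is $h_1=0$; the eigenvalues of $\omega$ being $e^{-\beta h_k}/Z$, the maximal one is $\lambda^{\omega}_1=e^{0}/Z=1/Z$, whence $\ln\lambda^{\omega}_1=-\ln Z$. Substituting this into the definition $H=c(-\ln\omega+\ln\lambda^{\omega}_1 I_{\H})$ from Proposition \ref{RE-CB} with the choice $c=1/\beta$, the two $\ln Z$ contributions cancel and I recover exactly the given operator $H$. Thus the given $H$ is precisely the operator associated with $\omega$ in the sense of Proposition \ref{RE-CB}, realized with $c=1/\beta$.

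It remains only to verify that the hypotheses of Proposition \ref{RE-CB} are met and to read off the constant. The state $\omega$ is faithful because $e^{-\beta H}$ has strictly positive spectrum, while $H$ is positive, densely defined, and satisfies (\ref{star}) by assumption, and $a>1$. Applying Proposition \ref{RE-CB} with $c=1/\beta$ turns the leading term $(1/c)\varepsilon^{1-1/a}\sqrt[a]{E}$ into $\beta\varepsilon^{1-1/a}\sqrt[a]{E}$, while leaving the two remaining terms $\varepsilon F_{H}(\sqrt[a]{E/\varepsilon})+h^{\uparrow}(\varepsilon)$ unchanged; this is exactly inequality (\ref{RE-CB-c+}). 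Since the whole argument is a pure reparametrisation, there is no genuine obstacle: the single point demanding care is the identification $\lambda^{\omega}_1=1/Z$ together with the ensuing cancellation of the additive scalar, which is precisely what allows the same function $F_{H}$ to govern both bounds.
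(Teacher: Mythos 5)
Your proof is correct and is exactly the intended derivation: the paper states this corollary immediately after Proposition \ref{RE-CB} without a separate proof, precisely because it is the specialization $c=1/\beta$, using $-\ln\omega=\beta H+(\ln Z)I_{\H}$ and $\lambda^{\omega}_1=1/Z$ (which follows from $h_1=0$ being an attained eigenvalue, since the Gibbs condition forces the discrete representation (\ref{H-form})). Nothing further is needed.
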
\smallskip

\begin{remark}\label{RE-CB-r} The continuity bounds (\ref{RE-CB+}) and (\ref{RE-CB-c+}) can be improved by replacing the term $\varepsilon F_{H}(\sqrt[a]{E/\varepsilon}) +h^{\uparrow}(\varepsilon)$ with the term $ \varepsilon F_{H_1}(\sqrt[a]{E/\varepsilon})+h(\varepsilon)$ for  all  $\varepsilon\leq 1-1/Z_{H^a}\!(E)$, where $H_1$ is the operator defined in (\ref{H-rep+}) with $m=1$ and $Z_{H^a}$ is the function defined in (\ref{Z}) with $H$ replaced by $H^a$. This follows from the proof of Proposition \ref{RE-CB}. If the parameter $Z_{H^a}(E)$ is difficult to determine  then one can use Remark 4 in \cite{FCB} with $m=1$ to show that for any $\varepsilon\in(0,1]$ the term $\varepsilon F_{H}(\sqrt[a]{E/\varepsilon}) +h^{\uparrow}(\varepsilon)$ can be replaced by the expression
$$
\max_{x\in[0,\shs\min\{\varepsilon,E/h^a_{2}\}]}\left\{xF_{H_1}(\sqrt[a]{E/x})+h(x)\right\},
$$
where it is assumed that the operator $H$ has representation (\ref{H-form}) (one can prove that this maximum is equal to  $ \varepsilon F_{H_1}(\sqrt[a]{E/\varepsilon})+h(\varepsilon)$ for all sufficiently small $\varepsilon$).

\end{remark}\smallskip

\begin{example}\label{1-mode} Assume that $\H$ is a Hilbert space describing a one-mode quantum oscillator (see \cite[Ch.12]{H-SCI}) and  $H=\hat{N}\doteq a^{\dag}a\,$ is  the number operator on $\H$ with the representation
\begin{equation}\label{N-m}
\hat{N}=\sum_{k=1}^{+\infty}(k-1)|\tau_k\rangle\langle \tau_k|,
\end{equation}
where $\{\tau_k\}_{k=1}^{+\infty}$ is the Fock basis in $\H$. It is easy to see that the operator $\hat{N}$ satisfies conditions (\ref{H-cond}) and (\ref{star}) and that
$\beta_{\hat{N}}(E)=\ln(1+1/E)$, $F_{\hat{N}}(E)=g(E)$ for any $E>0$ and $F_{\hat{N}_1}(E)=Eh(1/E)$ for any $E>1$, where $g$ is the function defined in (\ref{g-def}), $h$ is the binary entropy and $\hat{N}_1$ is the operator defined in (\ref{H-rep+}) with $H=\hat{N}$ and $m=1$ ($\hat{N}_1$ is the operator with the representation (\ref{N-m}) in which the summation is over all $k\in\N\cup[2,+\infty)$). For given $E>0$ let
\begin{equation*}
\gamma_{\hat{N}}(E)\doteq\frac{e^{-\beta_{\hat{N}}(E)\hat{N}}}{\Tr e^{-\beta_{\hat{N}}(E)\hat{N}}}=\frac{1}{E+1}\sum_{k=1}^{+\infty}\left[\frac{E}{E+1}\right]^{k-1}|\tau_k\rangle\langle \tau_k|
\end{equation*}
be the Gibbs state corresponding to the number of quanta $E$.

By Corollary \ref{RE-CB-c}  for a given arbitrary $a>1$ we have
\begin{equation}\label{RE-CB-c++}
|D(\rho\shs\|\shs\gamma_{\hat{N}}(E))-D(\sigma\|\shs\gamma_{\hat{N}}(E))|\leq \varepsilon^{1-1/a} \sqrt[a]{E}\ln(1+1/E)+\varepsilon g(\sqrt[a]{E/\varepsilon}) +h^{\uparrow}(\varepsilon)
\end{equation}
for any states $\rho$ and $\sigma$ in $\S(\H)$ such that $\frac{1}{2}\|\rho-\sigma\|_1\leq \varepsilon$ and
$\Tr \hat{N}^a\rho,\Tr \hat{N}^a\sigma \leq E$.

According to Remark \ref{RE-CB-r} the continuity bound (\ref{RE-CB-c++}) can be improved by replacing the term $\varepsilon g(\sqrt[a]{E/\varepsilon})+h^{\uparrow}(\varepsilon)$ with the term $\varepsilon^{1-1/a}\sqrt[a]{E} h(\sqrt[a]{\varepsilon/E})+h(\varepsilon)$ for all $\varepsilon\leq 1-1/Z_{\hat{N}^a}(E)$. As a result, we obtain the inequality
\begin{equation*}
|D(\rho\shs\|\shs\gamma_{\hat{N}}(E))-D(\sigma\|\shs\gamma_{\hat{N}}(E))|\leq \varepsilon^{1-1/a} \sqrt[a]{E}\left(\ln(1+1/E)+h(\sqrt[a]{\varepsilon/E})\right)+h(\varepsilon)
\end{equation*}
valid for any states $\rho$ and $\sigma$ in $\S(\H)$ such that $\frac{1}{2}\|\rho-\sigma\|_1\leq \varepsilon\leq 1-1/Z_{\hat{N}^a}(E)$ and
$\Tr \hat{N}^a\rho,\Tr \hat{N}^a\sigma \leq E$. Because of the absence of a simple expression for $\beta_{\hat{N}^a}(E)$ the parameter $Z_{\hat{N}^a}(E)$ is difficult to determine
(although it can be  found  numerically). Another way is to replace the term $\varepsilon g(\sqrt[a]{E/\varepsilon})+h^{\uparrow}(\varepsilon)$ in (\ref{RE-CB-c++})
by the expression
$$
\max_{x\in[0,\shs\min\{\varepsilon,E\}]}\left\{x^{1-1/a}\sqrt[a]{E} h(\sqrt[a]{x/E})+h(x)\right\}
$$
for any $\varepsilon\in(0,1]$. This expression coincides with $\,\varepsilon^{1-1/a}\sqrt[a]{E} h(\sqrt[a]{\varepsilon/E})+h(\varepsilon)\,$ for all sufficiently small $\varepsilon$, but to find the region of coincidence  we have to solve a  transcendental equation. $\Box$
\end{example}\medskip

It is known that the function $\varrho\mapsto D(\varrho\shs\|\omega)$ is continuous on the set
of all states $\varrho$ such that $c\varrho\leq\omega$, where $c\in(0,1)$ and $\,\omega$ is an arbitrary state \cite[Example 1]{REC}.
To obtain a simple continuity bound for this function valid for commuting states one can use the following\smallskip

\begin{proposition}\label{RE-SCB}
\emph{Let $\,\omega$ be an arbitrary state in $\,\S(\H)$ and $\,c\in(0,1)$. Then
\begin{equation}\label{RE-SCB+}
D(\rho\shs\|\shs\omega)-D(\sigma\|\shs\omega)\leq (1/c)\eta^{\uparrow}(c\varepsilon)+h^{\uparrow}(\varepsilon)
\end{equation}
for any states $\rho$ and $\sigma$ in $\S(\H)$ such that $\frac{1}{2}\|\rho-\sigma\|_1\leq \varepsilon$,
$c\rho\leq\omega$ and either $[\rho,\sigma]=0$ or $[\rho,\omega]=0$, where $h^{\uparrow}$ is the function defined in (\ref{h+}) and}
\begin{equation}\label{eta+}
\eta^{\uparrow}(x)=\left\{\begin{array}{l}
        \!\!-x\ln x\;\; \textrm{if}\;\;  x\in(0,1/e]\\
        \;\;1/e\;\quad \textrm{if}\;\;  x\in(1/e,1]
        \end{array}\right.\!\!\!,\qquad \eta^{\uparrow}(0)=0.
\end{equation}

\emph{If $\frac{1}{2}\|\rho-\sigma\|_1=\varepsilon$ and $\,[\rho,\sigma]=0$  then (\ref{RE-SCB+}) holds with $\eta^{\uparrow}(c\varepsilon)$ and $h^{\uparrow}(\varepsilon)$
replaced, respectively, by $\eta(c\varepsilon)$ and $h(\varepsilon)$, where $\eta(x)=-x\ln x$ and $h$ is the binary entropy.}
\end{proposition}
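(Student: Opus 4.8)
The plan is to recognise $f(\varrho)\doteq D(\varrho\shs\|\shs\omega)$ as a nonnegative LAA function and to run the commuting-state construction from the proof of Lemma \ref{b-lemma}A, but to estimate the ``excess'' term $\epsilon f(\tau_+)$ \emph{by hand} using the constraint $c\rho\leq\omega$ instead of through the generic bound of that lemma. First I would record the LAA structure of $f$ on $\S_0=\S(\H)$. Joint convexity of the relative entropy in its first argument is exactly inequality (\ref{LAA-2}) with $b_f=0$, while the identity
\[
pD(\rho\|\omega)+(1-p)D(\sigma\|\omega)-D(p\rho+(1-p)\sigma\|\omega)=S(p\rho+(1-p)\sigma)-pS(\rho)-(1-p)S(\sigma),
\]
combined with (\ref{w-k-ineq}), yields inequality (\ref{LAA-1}) with $a_f=h$. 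Thus $f$ is a nonnegative LAA function with $a_f=h$ and $b_f=0$. I would also note that the hypothesis $c\rho\leq\omega$ forces $D(\rho\|\omega)\leq-\ln c<+\infty$ (so we are in the finite-value branch), and that we may assume $D(\sigma\|\omega)<+\infty$ and $\varepsilon>0$, since otherwise the claim is trivial.

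The second ingredient is the elementary bound $D(\varrho\|\omega)\leq-\ln\lambda$ whenever $\lambda\varrho\leq\omega$ with $\lambda>0$: such an inequality gives $\supp\varrho\subseteq\supp\omega$, and operator monotonicity of the logarithm yields $\ln\omega\geq\ln(\lambda\varrho)=\ln\lambda\cdot I+\ln\varrho$ on $\supp\varrho$, whence $D(\varrho\|\omega)=\Tr\varrho(\ln\varrho-\ln\omega)\leq-\ln\lambda$.

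For the case $[\rho,\sigma]=0$ I would set $\epsilon\doteq\frac12\|\rho-\sigma\|_1\leq\varepsilon$ and, following the proof of Lemma \ref{b-lemma}A, form the commuting states $\tau_\pm$ and $\omega_*$. The construction gives $\epsilon\tau_+=[\rho-\sigma]_+\leq\rho$ and inequality (\ref{imp-ineq}), which with $a_f=h$, $b_f=0$ and $f(\tau_-)\geq0$ reduces to $f(\rho)-f(\sigma)\leq\epsilon f(\tau_+)+h(\epsilon)$. Since $\epsilon\tau_+\leq\rho$ and $c\rho\leq\omega$, we have $c\epsilon\tau_+\leq\omega$, so the log-bound of the previous paragraph gives $f(\tau_+)=D(\tau_+\|\omega)\leq-\ln(c\epsilon)$, and therefore $\epsilon f(\tau_+)\leq-\epsilon\ln(c\epsilon)=\tfrac1c\eta(c\epsilon)$. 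This yields
\[
D(\rho\|\omega)-D(\sigma\|\omega)\leq\tfrac1c\eta(c\epsilon)+h(\epsilon).
\]
When $\frac12\|\rho-\sigma\|_1=\varepsilon$ we have $\epsilon=\varepsilon$ and this is exactly the sharp claim; in general, since $c\epsilon\leq c\varepsilon$ and $\epsilon\leq\varepsilon$ while $\eta^{\uparrow}$ and $h^{\uparrow}$ are the nondecreasing envelopes of $\eta$ and $h$, the right-hand side is bounded by $\tfrac1c\eta^{\uparrow}(c\varepsilon)+h^{\uparrow}(\varepsilon)$, which is (\ref{RE-SCB+}).

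Finally, the case $[\rho,\omega]=0$ reduces to the previous one: letting $\Pi$ be the pinching onto the eigenbasis of $\omega$, one has $\Pi(\omega)=\omega$, $\Pi(\rho)=\rho$, and $\hat\sigma\doteq\Pi(\sigma)$ commutes with $\omega$ and hence with $\rho$; monotonicity of the relative entropy under $\Pi$ gives $D(\hat\sigma\|\omega)\leq D(\sigma\|\omega)$, the pinching contracts the trace norm so that $\frac12\|\rho-\hat\sigma\|_1\leq\varepsilon$, and $c\rho\leq\omega$ is untouched, so applying the commuting case to $(\rho,\hat\sigma)$ and using monotonicity of $\eta^{\uparrow},h^{\uparrow}$ gives (\ref{RE-SCB+}). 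The point that requires the most care is obtaining the single copy of $h^{\uparrow}$ (rather than the $2h^{\uparrow}$ produced by a black-box use of Lemma \ref{b-lemma}A): this forces me to avoid re-applying the $a_f$-inequality to $\epsilon f(\tau_+)$ and instead bound $D(\tau_+\|\omega)$ directly through $c\rho\leq\omega$, matching it precisely to $\tfrac1c\eta(c\epsilon)$; the non-monotonicity of $\eta$ and $h$ is then absorbed cleanly by the envelopes, and everything else is routine bookkeeping.
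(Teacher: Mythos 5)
Your proof is correct and follows essentially the same route as the paper: the same decomposition $\rho=\epsilon\tau_++(1-\epsilon)\omega_*$ with the almost-concavity bound $a_f=h$ and convexity $b_f=0$, the same reduction of the case $[\rho,\omega]=0$ to the commuting case via pinching, and the same final estimate $\epsilon D(\tau_+\|\shs\omega)\leq-\epsilon\ln(c\epsilon)=(1/c)\eta(c\epsilon)$. The only cosmetic difference is that you obtain this last estimate from the standard fact $D(\varrho\|\shs\omega)\leq-\ln\lambda$ whenever $\lambda\varrho\leq\omega$ (i.e.\ $D\leq D_{\max}$, whose operator-monotonicity justification needs a little care on supports), whereas the paper derives the identical bound through Lindblad's extension of the relative entropy and the identities (\ref{D-mul}), (\ref{D-c-id}) and (\ref{re-ineq}).
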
\smallskip

\emph{Proof.} We will use Lindblad's extension of the quantum relative entropy to positive operators $\rho$ and
$\sigma$ in $\mathfrak{T}(\mathcal{H})$  defined as
\begin{equation*}
D(\rho\shs\|\shs\sigma)=\sum_i\langle\varphi_i|\,\rho\ln\rho-\rho\ln\sigma+\sigma-\rho\,|\varphi_i\rangle,
\end{equation*}
where $\{\varphi_i\}$ is the orthonormal basis of
eigenvectors of the operator  $\rho$ and it is assumed that $\,D(0\|\shs\sigma)=\Tr\sigma\,$ and
$\,D(\rho\shs\|\sigma)=+\infty\,$ if $\,\mathrm{supp}\rho\shs$ is not
contained in $\shs\mathrm{supp}\shs\sigma$ (in particular, if $\rho\neq0$ and $\sigma=0$)
\cite{L-2}. If the extended von Neumann entropy $\tilde{S}(\rho)$ of $\rho$ (defined in (\ref{S-ext})) is finite
then
\begin{equation}\label{re-exp+}
D(\rho\shs\|\shs\sigma)=\Tr\rho(-\ln\sigma)-\tilde{S}(\rho)-\eta(\Tr\rho)+\Tr\sigma-\Tr\rho,
\end{equation}
where $\Tr\rho(-\ln\sigma)$ is defined according to the rule (\ref{H-fun}) and $\eta(x)=-x\ln x$. \smallskip

The function $(\rho,\sigma)\mapsto D(\rho\shs\|\shs\sigma)$ is nonnegative lower semicontinuous and jointly convex on
$\T_+(\H)\times\T_+(\H)$. We will exploit the following properties of this function:
\begin{itemize}
  \item for any $\rho,\sigma\in\T_+(\H)$ and $c\geq0$ the following equalities  hold (with possible values $+\infty$ in both sides):\footnote{Here and in what follows we assume
  that $0\times D(\rho\shs\|\shs \sigma)=0$ in any cases (including the case $D(\rho\shs\|\shs \sigma)=+\infty$) and that $0\ln0=0$.}
  \begin{equation}\label{D-mul}
  D(c\rho\shs\|\shs c\sigma)=cD(\rho\shs\|\shs \sigma),\qquad\qquad\qquad\qquad\quad\;
  \end{equation}
  \begin{equation}\label{D-c-id}
  D(\rho\shs\|\shs c\sigma)=D(\rho\shs\|\shs\sigma)-\Tr\rho\ln c+(c-1)\Tr\sigma;
  \end{equation}
  \item for any $\rho,\sigma$ and $\omega$ in $\T_+(\H)$ the following inequality holds (with possible values $+\infty$ in one or both sides)
  \begin{equation}\label{re-ineq}
  D(\rho\shs\|\shs\sigma+\omega)\leq D(\rho\shs\|\shs\sigma)+\Tr\omega.
  \end{equation}
 \end{itemize}

If the extended von Neumann entropy of the operators $\rho$, $\sigma$  and $\omega$ is finite then inequality (\ref{re-ineq}) is easily proved by using representation (\ref{re-exp+}) and the operator monotonicity of the logarithm. In the general case, this inequality can be proved by approximation using Lemma 4 in \cite{L-2}.

Note that the condition $c\rho\leq\omega$ implies $D(\rho\shs\|\shs\omega)<+\infty$  (due to (\ref{D-c-id}) and (\ref{re-ineq})).

Consider first the case $[\rho,\sigma]=0$. Since the r.h.s. of (\ref{RE-SCB+}) is a non-decreasing function of $\varepsilon$, we may assume that $\frac{1}{2}\|\rho-\sigma\|_1=\varepsilon$.   The function
$\varrho\mapsto D(\varrho\shs\|\shs\omega)$ is convex and satisfies inequality (\ref{LAA-1})
with $a_f=h$ (by Proposition 5.24 in \cite{O&P}). So, since the states $\rho$ and $\sigma$ commute, by
applying the arguments from the proof of Lemma \ref{g-ob} in Section 3
with $\S_0=\S(\H)$ to this function  (the relations in (\ref{omega-star}) and below) one can  show that
\begin{equation}\label{two-ineq}
\varepsilon\tau_+\leq\rho\qquad
\textrm{and}\qquad D(\rho\shs\|\shs\omega)-D(\sigma\shs\|\shs\omega)\leq \varepsilon D(\tau_+\shs\|\shs\omega)+h(\varepsilon),
\end{equation}
where $\tau_+=(1/\varepsilon)[\rho-\sigma]_+$. By applying (\ref{D-mul}), (\ref{D-c-id}) and (\ref{re-ineq}) we obtain
$$
\begin{array}{rl}
\varepsilon D(\tau_+\shs\|\shs\omega)\, = & \!\!(1/c)D(c\varepsilon\tau_+\shs\|\shs c\varepsilon\omega)\\\\= &\!\!
(1/c)(D(c\varepsilon\tau_+\shs\|\shs\omega)-c\varepsilon\ln(c\varepsilon)+(c\varepsilon-1))\\\\ \leq &\!\!
(1/c)(D(c\varepsilon\tau_+\shs\|\shs c\varepsilon\tau_+)+\Tr(\omega-c\varepsilon\tau_+)+\eta( c\varepsilon)+(c\varepsilon-1))\\\\= &\!\!
(1/c)(1-c\varepsilon+\eta(c\varepsilon)+(c\varepsilon-1))=(1/c)\eta(c\varepsilon)\leq(1/c)\eta^{\uparrow}(c\varepsilon),
\end{array}
$$
where we have used that $\omega-c\varepsilon\tau_+$ is a positive operator (due to the assumption $c\rho\leq\omega$ and the first inequality in (\ref{two-ineq})).
Thus, the second inequality in (\ref{two-ineq}) implies (\ref{RE-SCB+}).
\smallskip

Consider the case $[\rho,\omega]=0$. We may assume that  $\,\mathrm{supp}\sigma\subseteq\mathrm{supp}\shs\omega$, since otherwise
$D(\sigma\shs\|\shs\omega)=+\infty$ and (\ref{RE-SCB+}) holds trivially. So, as the  condition $c\rho\leq\omega$
implies $\,\mathrm{supp}\rho\subseteq\mathrm{supp}\shs\omega$, we may assume that $\omega$ is a faithful state.
Let $\{\phi_k\}_{k=1}^{+\infty}$ be a  basis in $\H$ in which both states  $\rho$ and $\omega$ are diagonisable and
$\Pi$ be the  channel $\,\varrho\mapsto \sum_{k=1}^{+\infty} \langle\phi_k|\varrho|\phi_k\rangle|\phi_k\rangle\langle\phi_k|\,$ from $\T(\H)$ to itself.
Then, $\Pi(\rho)=\rho$, $\Pi(\omega)=\omega$ and hence
$$
\textstyle\frac{1}{2}\|\rho-\Pi(\sigma)\|_1\leq\frac{1}{2}\|\rho-\sigma\|_1\leq \varepsilon.
$$
By monotonicity of the relative entropy we have
$$
D(\rho\shs\|\shs\omega)-D(\sigma\|\shs\omega)\leq D(\rho\shs\|\shs\omega)-D(\Pi(\sigma)\|\shs\omega)\leq (1/c)\eta^{\uparrow}(c\varepsilon)+h^{\uparrow}(\varepsilon),
$$
where the second inequality follows from the proved part of the proposition, since
the states $\rho$ and $\Pi(\sigma)$ commute.
\smallskip

The last claim of the proposition follows from the above proof. $\Box$
\smallskip

\begin{corollary}\label{RE-SCB-c}
\emph{Let $\omega$ be an arbitrary state in $\S(\H)$ and $c\in(0,1)$. Then
\begin{equation}\label{RE-SCB+c}
|D(\rho\|\shs\omega)-D(\sigma\|\shs\omega)|\leq (1/c)\eta^{\uparrow}(c\varepsilon)+h^{\uparrow}(\varepsilon)
\end{equation}
for any states $\rho$ and $\sigma$ in $\S(\H)$ such that $\frac{1}{2}\|\rho-\sigma\|_1\leq \varepsilon$,
$c\rho,c\sigma\leq\omega$ and $[\rho,\sigma]=0$, where $\eta^{\uparrow}$ and $h^{\uparrow}$ are the functions defined, respectively, in (\ref{eta+}) and (\ref{h+}).}\smallskip

\emph{If $\,\frac{1}{2}\|\rho-\sigma\|_1=\varepsilon$  then (\ref{RE-SCB+c}) holds with $\eta^{\uparrow}(c\varepsilon)$ and $h^{\uparrow}(\varepsilon)$
replaced, respectively, by $\eta(c\varepsilon)$ and $h(\varepsilon)$, where $\eta(x)=-x\ln x$ and $h$ is the binary entropy.}
\end{corollary}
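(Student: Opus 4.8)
The plan is to deduce the two-sided estimate (\ref{RE-SCB+c}) directly from the one-sided semicontinuity bound of Proposition \ref{RE-SCB}, applied twice with the roles of $\rho$ and $\sigma$ interchanged. The point that makes this work is that every hypothesis of the corollary is symmetric under this interchange: both $c\rho\leq\omega$ and $c\sigma\leq\omega$ are assumed, the condition $[\rho,\sigma]=0$ is the same as $[\sigma,\rho]=0$, and $\frac{1}{2}\|\rho-\sigma\|_1\leq\varepsilon$ is unaffected by swapping the two states.

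Concretely, I would first apply Proposition \ref{RE-SCB} verbatim, using $c\rho\leq\omega$ and $[\rho,\sigma]=0$, to obtain
\[
D(\rho\shs\|\shs\omega)-D(\sigma\shs\|\shs\omega)\leq(1/c)\eta^{\uparrow}(c\varepsilon)+h^{\uparrow}(\varepsilon).
\]
Then I would apply the same proposition a second time with $\rho$ and $\sigma$ exchanged; here the needed hypothesis is $c\sigma\leq\omega$ (supplied by the corollary) together with $[\sigma,\rho]=0$, which gives
\[
D(\sigma\shs\|\shs\omega)-D(\rho\shs\|\shs\omega)\leq(1/c)\eta^{\uparrow}(c\varepsilon)+h^{\uparrow}(\varepsilon).
\]
Combining these two inequalities yields (\ref{RE-SCB+c}).

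For the sharper statement under the equality $\frac{1}{2}\|\rho-\sigma\|_1=\varepsilon$, I would run exactly the same two applications but invoke the final claim of Proposition \ref{RE-SCB}, which under this equality (and the commutation hypothesis) replaces $\eta^{\uparrow}(c\varepsilon)$ and $h^{\uparrow}(\varepsilon)$ by $\eta(c\varepsilon)$ and $h(\varepsilon)$. Since both the equality and the commutation condition are again symmetric in $\rho$ and $\sigma$, the two directions combine in the same way. I do not expect any genuine obstacle: the whole argument is a symmetrization of an already-proved one-sided bound, and the only thing worth checking carefully is that the second application legitimately rests on $c\sigma\leq\omega$ rather than on $c\rho\leq\omega$, which is precisely why the corollary's hypotheses include both operator inequalities.
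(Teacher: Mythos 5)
Your proposal is correct and matches the intended derivation: the paper presents Corollary \ref{RE-SCB-c} as an immediate consequence of Proposition \ref{RE-SCB}, obtained exactly by the symmetrization you describe (two applications with the roles of $\rho$ and $\sigma$ interchanged, each justified by one of the two hypotheses $c\rho\leq\omega$ and $c\sigma\leq\omega$ together with the symmetric condition $[\rho,\sigma]=0$). The same remark applies to the sharper claim under the equality $\frac{1}{2}\|\rho-\sigma\|_1=\varepsilon$ via the last claim of Proposition \ref{RE-SCB}.
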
\smallskip

\begin{remark}\label{RE-SCB-r} The (semi)continuity bounds (\ref{RE-SCB+}) and (\ref{RE-SCB+c}) can be improved by replacing their right hand sides by
the expression $$
\{(1/c)\eta(c\varepsilon)+h(\varepsilon)\}^{\uparrow}_{\varepsilon}\doteq \sup_{\delta\in(0,\varepsilon]}[(1/c)\eta(c\delta)+h(\delta)].
$$
This follows from the proof of Proposition \ref{RE-SCB}.
\end{remark}\medskip

\subsection{Quantum conditional entropy}

\subsubsection{Continuity bounds for commuting states}

It is mentioned in Section 3.2 that the (extended) quantum conditional entropy $S(A\vert B)$ (defined by the formula in (\ref{ce-ext})
on the set $\S_1(\H_{AB})\doteq\left\{\rho\in\mathfrak{S}(\mathcal{H}_{AB})\,\vert\,S(\rho_{A})<+\infty\shs\right\}$)  belongs to the class $L_2^1(2,1)\doteq L_2^1(2,1\vert\,\S_1(\H_{AB}))$ (in the settings $A_1=A, A_2=B$). Thus, Corollaries \ref{main-1-c}A
and \ref{main-c}A in Section 3.2 imply the following
\smallskip\pagebreak

\begin{proposition}\label{QCE-GC}  \emph{Let $AB$ be an infinite-dimensional bipartite quantum system.}

\noindent A)\emph{ The inequality
\begin{equation}\label{QCE-GC+}
    \vert S(A\vert B)_{\rho}-S(A\vert B)_{\sigma}\vert \leq \{2\varepsilon \ln d+h(\varepsilon)\}_{\varepsilon}^{\uparrow}\leq2\varepsilon \ln d+h^{\uparrow}(\varepsilon)
\end{equation}
holds for any commuting states $\rho$ and  $\sigma$ in $\,\S(\H_{AB})$ such that $\,\rank\rho_{A}, \rank\sigma_A\leq d<+\infty$ and
$\frac{1}{2}\|\rho-\sigma\|_1\leq\varepsilon$, where $\,\{...\}^{\uparrow}_{\varepsilon}$ is defined in (\ref{men-def+}) and $\,h^{\uparrow}$ is the function defined in (\ref{h+}).}\smallskip

\noindent B)  \emph{Let $H$ be a positive operator on the Hilbert space $\H_A$ satisfying conditions (\ref{H-cond})
and (\ref{star}). Then  the inequality
\begin{equation}\label{QCE-GC++}
    \vert S(A\vert B)_{\rho}-S(A\vert B)_{\sigma}\vert \leq 2\varepsilon F_{H}(E/\varepsilon)+h^{\uparrow}(\varepsilon)
\end{equation}
holds for any commuting states $\rho$ and  $\sigma$ in $\,\S(\H_{AB})$ such that $\,\Tr H\rho_{A},\,\Tr H\sigma_{A}\leq E\,$
and $\frac{1}{2}\|\rho-\sigma\|_1\leq\varepsilon$, where $F_{H}$ and $\,h^{\uparrow}$ are the functions defined, respectively,  in (\ref{F-def}) and  (\ref{h+}).}
\end{proposition}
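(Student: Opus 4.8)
The plan is to observe that the extended quantum conditional entropy $S(A\vert B)$, viewed under the identification $A_1=A$, $A_2=B$, lies in the class $L_2^1(2,1)=L_2^1(2,1\vert\,\S_1(\H_{AB}))$, and then to read off both inequalities directly from Corollaries \ref{main-1-c}A and \ref{main-c}A with the parameters $C=2$, $D=1$ and $m=1$, $n=2$. First I would record the class membership (already noted in Section 3.2): concavity of $S(A\vert B)$ yields (\ref{LAA-1}) with $a_f=0$, so $d_f^-=0$; inequality (\ref{ce-LAA-2}) yields (\ref{LAA-2}) with $b_f=h$, so $d_f^+=1$; and the bound (\ref{ce-ub}) yields the two-sided estimate (\ref{Cm}) with $m=1$ and $c_f^-=c_f^+=1$. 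Hence $C=c_f^-+c_f^+=2$ and $D=d_f^-+d_f^+=1$, all inequalities being understood with possible infinite values on $\S_1(\H_{AB})$ as the infinite-dimensional definition of the class permits.

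Next I would check that $\S_0=\S_1(\H_{AB})$ satisfies hypothesis (\ref{S-prop+++}) of the corollaries. If $\varrho\in\S_1(\H_{AB})$ and $c\varsigma\leq\varrho$ for some $c>0$, then taking the partial trace over $B$ gives $c\varsigma_A\leq\varrho_A$; by the min-max characterisation of eigenvalues the ordered eigenvalues of $\varsigma_A$ are bounded by $1/c$ times those of $\varrho_A$, and a direct estimate of $\sum_i\eta(\cdot)$ then shows that $S(\varrho_A)<+\infty$ forces $S(\varsigma_A)<+\infty$, i.e. $\varsigma\in\S_1(\H_{AB})$. Thus (\ref{S-prop+++}) holds for $\S_1(\H_{AB})$. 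Note that in both parts the stated constraints ($\rank\rho_A\leq d$ in A, $\Tr H\rho_A\leq E$ with $H$ satisfying the Gibbs condition in B) already force $S(\rho_A),S(\sigma_A)<+\infty$, so $\rho$ and $\sigma$ indeed belong to the domain $\S_1(\H_{AB})$.

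With these two facts in place, Part A is immediate from Corollary \ref{main-1-c}A: here $d_1(\rho,\sigma)=\max\{\rank\rho_A,\rank\sigma_A\}\leq d$, and the version of that corollary valid under $\frac{1}{2}\|\rho-\sigma\|_1\leq\varepsilon$ (with the right-hand side replaced by its non-decreasing envelope and by the $h^{\uparrow}$-variant) gives exactly (\ref{QCE-GC+}). Part B follows in the same way from Corollary \ref{main-c}A: for $m=1$ the operator $H_m$ reduces to $H_{A_1}=H$ (by the convention $H_1=H_{A_1}$), so that $F_{H_m}=F_H$ and $mE=E$, turning $C\varepsilon F_{H_m}(mE/\varepsilon)+Dh^{\uparrow}(\varepsilon)$ into $2\varepsilon F_H(E/\varepsilon)+h^{\uparrow}(\varepsilon)$, which is (\ref{QCE-GC++}).

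The content of the proposition is thus entirely packaged in the general results, and I expect no substantial obstacle. The only genuinely delicate point is the verification of (\ref{S-prop+++}) for $\S_1(\H_{AB})$ together with the care needed to read the class inequalities with possible infinite values; the rest is a substitution of the parameters $C=2$, $D=1$, $m=1$ into the two corollaries.
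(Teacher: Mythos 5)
Your proposal is correct and follows essentially the same route as the paper: the paper likewise notes that $S(A\vert B)$ belongs to $L_2^1(2,1\vert\,\S_1(\H_{AB}))$ (via concavity, (\ref{ce-ub}) and (\ref{ce-LAA-2})) and then reads off both inequalities from Corollaries \ref{main-1-c}A and \ref{main-c}A with $C=2$, $D=1$, $m=1$. Your explicit verification of property (\ref{S-prop+++}) for $\S_1(\H_{AB})$ is a detail the paper leaves implicit, but it is correct and does not change the argument.
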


\smallskip

Continuity bound (\ref{QCE-GC+}) should be compared with Winter's continuity bound (cf.\cite{W-CB})
\begin{equation}\label{CE-W-CB}
    \vert S(A\vert B)_{\rho}-S(A\vert B)_{\sigma}\vert \leq 2\varepsilon \ln d+g(\varepsilon)
\end{equation}
valid for \emph{arbitrary} states $\rho$ and  $\sigma$ in $\,\S(\H_{AB})$ such that $\,\supp\rho_{A}, \supp\sigma_A\subseteq \H_d$ and
$\frac{1}{2}\|\rho-\sigma\|_1\leq\varepsilon$, where $\H_d$ is any $d$-dimensional subspace of $\H_A$ and $g$ is the function defined in (\ref{g-def}).\smallskip

It is clear that the r.h.s. of (\ref{QCE-GC+})  is strictly less than the r.h.s. of  (\ref{CE-W-CB}) for any nonzero $\varepsilon$. Note also
that the condition $\,\supp\rho_{A}, \supp\sigma_A\subseteq \H_d$ is stronger than the condition $\,\rank\rho_{A}, \rank\sigma_A\leq d$.

Continuity bound (\ref{QCE-GC++}) demonstrates a more significant advantage (both in accuracy and mathematical simplicity) compared to the universal continuity bounds
for the quantum conditional entropy with energy-type constraint presented in \cite[Meta-Lemma 17]{W-CB} and \cite[Proposition 5]{QC}. Of course, this advantage is due to the fact that we restrict attention to commuting states.

\subsubsection{Semicontinuity bounds for quantum-classical states}

In this subsection we use the advanced version of the AFW-method in the quasi-classical settings to improve the semicontinuity bounds and local lower bounds for the  quantum conditional entropy (QCE) on the set of
quantum-classical states of a bipartite system obtained by the original version of this method in \cite{LCB,LLB}.

States $\rho$ and $\sigma$ of bipartite quantum system $AB$ are called quantum-classical (q-c) if they have the form
\begin{equation}\label{qc-states}
\rho=\sum_{k} p_k\, \rho_k\otimes \vert k\rangle\langle k\vert \quad \textrm{and}\quad \sigma=\sum_{k} q_k\, \sigma_k\otimes \vert k\rangle\langle k\vert,
\end{equation}
where $\{p_k,\rho_k\}$ and $\{q_k,\sigma_k\}$ are ensembles of states in $\S(\H_A)$ and $\{\vert k\rangle\}$ a fixed orthonormal basis in $\H_B$.
By using definition (\ref{ce-ext}) one can show (see the proof of Corollary 3 in \cite{Wilde-CB}) that
\begin{equation}\label{ce-rep}
S(A\vert B)_{\rho}=\sum_kp_kS(\rho_k)\quad \textrm{and} \quad S(A\vert B)_{\sigma}=\sum_kq_kS(\sigma_k)
\end{equation}
provided that $\,S(\rho_A), S(\sigma_A)<+\infty$.
The expressions in (\ref{ce-rep}) allow us to define the QCE on the set $\S_{\mathrm{qc}}(\H_{AB})$ of all q-c states
having the form (\ref{qc-states}) (including the q-c states $\rho$ with $\,S(\rho_A)=+\infty\,$ for which definition (\ref{ce-ext}) does not work).
It is easy to show that the QCE defined in such a way on the convex set $\S_{\mathrm{qc}}(\H_{AB})$ satisfies
inequalities (\ref{LAA-1}) and (\ref{LAA-2}) with $a_f=0$ and $b_f=h$ with possible values $+\infty$ in both sides.
Using this and the  upper bound (\ref{ce-ub}) we see that $S(A\vert B)$ is a nonnegative function on
the convex set $\S_{\mathrm{qc}}(\H_{AB})$  belonging to the class $L^1_2(1,1\vert \S_{\mathrm{qc}}(\H_{AB}))$ (in terms of Section 3.2).
So, by repeating the arguments from the proof of Proposition 3 in \cite{LCB}  (based on the Mirsky inequality (\ref{Mirsky-ineq+})) and the  upper bound (\ref{ce-ub}) one can prove the following improved version of this proposition.\smallskip

\begin{proposition}\label{qce-qc-1} \emph{Let $AB$ be an infinite-dimensional bipartite quantum system.}\smallskip

\noindent A) \emph{If $\rho$ is a state in $\S_{\mathrm{qc}}(\H_{AB})$ such that $\,\rank\rho_A$ is finite then
\begin{equation}\label{qce-qc-1+}
   S(A\vert B)_{\rho}-S(A\vert B)_{\sigma}\leq\{\varepsilon\ln(\rank\rho_A)+h(\varepsilon)\}_{\varepsilon}^{\uparrow}\leq\varepsilon\ln(\rank\rho_A)+h^{\uparrow}(\varepsilon)
\end{equation}
for any state $\sigma$ in $\S_{\mathrm{qc}}(\H_{AB})$ such that $\,\frac{1}{2}\|\rho-\sigma\|_1\leq\varepsilon$, where $\,\{...\}^{\uparrow}_{\varepsilon}$ is defined in (\ref{men-def+}) and $\,h^{\uparrow}$ is the function defined in (\ref{h+}).}
\smallskip

\noindent B) \emph{Let $H$ be a positive operator on $\H_A$  satisfying conditions (\ref{H-cond}) and (\ref{star}). If $\rho$ is a  state in $\,\S_{\mathrm{qc}}(\H_{AB})$  such that $E\doteq\Tr H\rho_A<+\infty$ then
\begin{equation}\label{qce-qc-2}
   S(A\vert B)_{\rho}-S(A\vert B)_{\sigma}\leq \varepsilon F_H((E-E_{H,\shs\varepsilon}(\rho))/\varepsilon)+h^{\uparrow}(\varepsilon)\leq \varepsilon F_H(E/\varepsilon)+h^{\uparrow}(\varepsilon)
\end{equation}
for any  state $\sigma$ in $\S_{\mathrm{qc}}(\H_{AB})$ such that $\,\frac{1}{2}\|\rho-\sigma\|_1\leq\varepsilon$, where
\begin{equation*}
E_{H,\shs\varepsilon}(\rho)\doteq \sum_{k}\Tr H[p_k\rho_k-\varepsilon I_{A}]_+,
\end{equation*}
$[p_k\rho_k-\varepsilon  I_{A}]_+$ is the positive part of the Hermitian operator $\,p_k\rho_k-\varepsilon  I_{A}$ and  $h^{\uparrow}$ is the function defined in (\ref{h+}).}\smallskip

\emph{Inequality (\ref{qce-qc-1+}) and both inequalities in (\ref{qce-qc-2}) are faithful semicontinuity  bounds for the QCE on $\S_{\mathrm{qc}}(\H_{AB})$: their right hand sides tend to zero  as $\,\varepsilon\to0$. Moreover, for each $E>0$ and $\varepsilon\in(0,1]$ there exist  states $\rho$ and $\sigma$ in $\S_{\mathrm{qc}}(\H_{AB})$ such that}
$$
S(A\vert B)_{\rho}-S(A\vert B)_{\sigma}>\varepsilon F_H(E/\varepsilon),\quad \Tr H\rho_A,\Tr H\sigma_A\leq E\quad \textit{and} \quad \textstyle\frac{1}{2}\|\rho-\sigma\|_1\leq \varepsilon.
$$
\end{proposition}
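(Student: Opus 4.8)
The plan is to run the Alicki--Fannes--Winter decomposition of the proof of Lemma \ref{g-ob} (equivalently Lemma \ref{b-lemma}) directly on the block-diagonal q-c states and to extract the rank/energy data block by block, so that the refinement $E_{H,\varepsilon}(\rho)$ appears from two elementary operator inequalities rather than from the commuting-state clause of Theorem \ref{main-2}B. Throughout I set $\delta=\frac12\|\rho-\sigma\|_1\le\varepsilon$ and assume $S(A\vert B)_\sigma<+\infty$ (otherwise the assertions are trivial). Writing $\rho=\sum_k p_k\rho_k\otimes|k\rangle\langle k|$ and $\sigma=\sum_k q_k\sigma_k\otimes|k\rangle\langle k|$, the operator $\rho-\sigma$ is block diagonal, so $[\rho-\sigma]_\pm=\sum_k[p_k\rho_k-q_k\sigma_k]_\pm\otimes|k\rangle\langle k|$ and $\delta=\sum_k\Tr[p_k\rho_k-q_k\sigma_k]_+$. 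Put $\tau_\pm=\delta^{-1}[\rho-\sigma]_\pm$ and $\omega_*=(1-\delta)^{-1}(\rho-[\rho-\sigma]_+)$; since $[\rho-\sigma]_+\le\rho$ these are q-c states in $\S_{\mathrm{qc}}(\H_{AB})$ and $\rho=\delta\tau_++(1-\delta)\omega_*$, $\sigma=\delta\tau_-+(1-\delta)\omega_*$.

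Applying inequalities (\ref{LAA-1}), (\ref{LAA-2}) for $S(A\vert B)$ on $\S_{\mathrm{qc}}(\H_{AB})$ (with $a_f=0$, $b_f=h$) to these two decompositions and subtracting — the finiteness of $S(A\vert B)_\sigma$ forces $S(A\vert B)_{\omega_*}$ and $S(A\vert B)_{\tau_-}$ finite, so the common term $(1-\delta)S(A\vert B)_{\omega_*}$ cancels — gives $S(A\vert B)_\rho-S(A\vert B)_\sigma\le\delta\,S(A\vert B)_{\tau_+}+h(\delta)$, where the nonnegative term $\delta\,S(A\vert B)_{\tau_-}$ has been dropped. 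By (\ref{ce-ub}), $S(A\vert B)_{\tau_+}\le S((\tau_+)_A)$, so everything reduces to the entropy of the single state $(\tau_+)_A=\delta^{-1}([\rho-\sigma]_+)_A$. For part A I use $([\rho-\sigma]_+)_A\le\rho_A$, whence $\rank(\tau_+)_A\le\rank\rho_A$ and $S((\tau_+)_A)\le\ln\rank\rho_A$; thus $S(A\vert B)_\rho-S(A\vert B)_\sigma\le\delta\ln\rank\rho_A+h(\delta)$, and since this holds for the actual $\delta\le\varepsilon$, taking the supremum over $[0,\varepsilon]$ yields the first bound in (\ref{qce-qc-1+}) and the envelope $h^{\uparrow}$ gives the second.

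For part B the crux is a sharp energy estimate. Diagonalising each block as $p_k\rho_k=\sum_j p_k\lambda_{kj}|\psi_{kj}\rangle\langle\psi_{kj}|$ and combining the two elementary facts that $[p_k\rho_k-q_k\sigma_k]_+\le p_k\rho_k$ and that a diagonal matrix element of a positive operator never exceeds its trace, I obtain $\langle\psi_{kj}|[p_k\rho_k-q_k\sigma_k]_+|\psi_{kj}\rangle\le\min\{p_k\lambda_{kj},\delta\}$; summing against $\langle\psi_{kj}|H|\psi_{kj}\rangle$ over $k,j$ gives $\Tr H([\rho-\sigma]_+)_A\le\sum_{k,j}\min\{p_k\lambda_{kj},\delta\}\langle\psi_{kj}|H|\psi_{kj}\rangle=E-E_{H,\delta}(\rho)$. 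Hence $\Tr H(\tau_+)_A\le(E-E_{H,\delta}(\rho))/\delta$ and, by definition of $F_H$, $S((\tau_+)_A)\le F_H((E-E_{H,\delta}(\rho))/\delta)$, so $S(A\vert B)_\rho-S(A\vert B)_\sigma\le G(\delta)+h(\delta)$ with $G(\delta)\doteq\delta F_H((E-E_{H,\delta}(\rho))/\delta)$. I then show $G$ is nondecreasing: setting $\Psi(\delta)=E-E_{H,\delta}(\rho)=\sum_{k,j}\min\{p_k\lambda_{kj},\delta\}\langle\psi_{kj}|H|\psi_{kj}\rangle$, which is nondecreasing in $\delta$, Lemma \ref{W-L} gives $\delta_1F_H(\Psi(\delta_1)/\delta_1)\le\delta_2F_H(\Psi(\delta_1)/\delta_2)\le\delta_2F_H(\Psi(\delta_2)/\delta_2)$ for $\delta_1<\delta_2$. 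Therefore $G(\delta)+h(\delta)\le G(\varepsilon)+h^{\uparrow}(\varepsilon)$, which is the first inequality in (\ref{qce-qc-2}); the second follows from $E_{H,\varepsilon}(\rho)\ge0$ and monotonicity of $F_H$. Faithfulness is then immediate: $h^{\uparrow}(\varepsilon)\to0$, $\varepsilon\ln\rank\rho_A\to0$, and $\varepsilon F_H(E/\varepsilon)\to0$ by the equivalence of (\ref{H-cond}) and (\ref{H-cond-a}). For optimality I take $\rho=\varepsilon\gamma_H(E/\varepsilon)\otimes|1\rangle\langle1|+(1-\varepsilon)|\tau_1\rangle\langle\tau_1|\otimes|2\rangle\langle2|$ and $\sigma=|\tau_1\rangle\langle\tau_1|\otimes|2\rangle\langle2|$, for which $\frac12\|\rho-\sigma\|_1=\varepsilon$, $\Tr H\rho_A=E$, $\Tr H\sigma_A=0$ and $S(A\vert B)_\rho-S(A\vert B)_\sigma=\varepsilon F_H(E/\varepsilon)$, a strict inequality being produced by an arbitrarily small perturbation.

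I expect the main obstacle to be precisely the refined energy estimate of part B: both the pointwise bound $\Tr H([\rho-\sigma]_+)_A\le E-E_{H,\delta}(\rho)$ and the monotonicity of $G$. This is the step where the non-commutativity of $\rho_k$ and $\sigma_k$ would wreck a direct appeal to the commuting-state clause of Theorem \ref{main-2}B (the operator meet with $\varepsilon I$ is not a greatest lower bound), and it survives only because the two one-sided inequalities $[p_k\rho_k-q_k\sigma_k]_+\le p_k\rho_k$ and $\langle\psi_{kj}|[\,\cdot\,]_+|\psi_{kj}\rangle\le\Tr[\,\cdot\,]_+$ together reproduce exactly the eigenvalue truncation $\min\{p_k\lambda_{kj},\delta\}$ that defines $E_{H,\delta}(\rho)$.
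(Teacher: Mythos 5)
There is a genuine gap, and it sits exactly where you locate the crux of your argument: the operator inequality $[p_k\rho_k-q_k\sigma_k]_+\leq p_k\rho_k$ (equivalently $[\rho-\sigma]_+\leq\rho$) is \emph{false} when $\rho_k$ and $\sigma_k$ do not commute, and nothing in the definition of $\S_{\mathrm{qc}}(\H_{AB})$ forces them to commute (only the $B$-parts are diagonal in a common basis). A $2\times2$ counterexample: $A=\mathrm{diag}(1,0)$, $B=\frac12\left(\begin{smallmatrix}1&1\\1&1\end{smallmatrix}\right)$; then $A-B$ has eigenvalues $\pm1/\sqrt2$ and $\langle e_2|[A-B]_+|e_2\rangle>0=\langle e_2|A|e_2\rangle$. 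The general fact is only that $[A-B]_+$ dominates $A-B$, not that it is dominated by $A$. This one false inequality breaks your proof in four places: (i) $\omega_*=(1-\delta)^{-1}(\rho-[\rho-\sigma]_+)$ need not be positive, so $\rho=\delta\tau_++(1-\delta)\omega_*$ is not a decomposition into states and the LAA inequalities cannot be applied to it; (ii) $\supp([\rho-\sigma]_+)_A$ need not lie in $\supp\rho_A$, so the rank bound in part A fails; (iii) the diagonal bound $\langle\psi_{kj}|[p_k\rho_k-q_k\sigma_k]_+|\psi_{kj}\rangle\leq p_k\lambda_{kj}$ is precisely the diagonal of the false inequality, so the refined energy estimate of part B collapses to the useless $\leq\delta$; and (iv) even the crude bound $\Tr H([\rho-\sigma]_+)_A\leq\Tr H\rho_A$ fails, so not even the second inequality in (\ref{qce-qc-2}) survives. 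Your closing remark that the argument ``survives because'' of these two one-sided inequalities is exactly backwards: this is the point where non-commutativity wrecks the direct operator-level Jordan decomposition, which is why \cite{D++} needs a ``quite sophisticated technique'' to remove the commutativity hypothesis from (\ref{S-1}) even for the plain entropy.

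The paper's route avoids this entirely. By (\ref{ce-rep}) the QCE of a q-c state depends only on the blockwise spectra, so one may replace $\sigma$ by $\hat\sigma=\sum_{k,j}\mu_{kj}|\psi_{kj}\rangle\langle\psi_{kj}|\otimes|k\rangle\langle k|$, carrying the eigenvalues $\mu_{kj}$ of $q_k\sigma_k$ (non-increasing in $j$) on the eigenvectors $\psi_{kj}$ of $p_k\rho_k$: this leaves $S(A\vert B)_\sigma$ unchanged, and the blockwise Mirsky inequality (\ref{Mirsky-ineq+}) gives $\frac12\|\rho-\hat\sigma\|_1\leq\frac12\|\rho-\sigma\|_1$. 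Now $\rho$ and $\hat\sigma$ commute, $[\rho-\hat\sigma]_+\leq\rho\wedge\delta I$ holds blockwise and eigenvalue-wise, and your entire chain --- including the truncation $\min\{p_k\lambda_{kj},\delta\}$ producing $E-E_{H,\delta}(\rho)$ and the monotonicity argument via Lemma \ref{W-L} --- goes through verbatim. Separately, your optimality example only achieves \emph{equality} $S(A\vert B)_\rho-S(A\vert B)_\sigma=\varepsilon F_H(E/\varepsilon)$, whereas the proposition asserts strict inequality; ``an arbitrarily small perturbation'' is not an argument here, since the perturbed state must still satisfy $\Tr H\rho_A\leq E$, a constraint your example already saturates.
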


\noindent\textbf{Note:} The  quantity $E_{H,\shs\varepsilon}(\rho)$
monotonously tends to $E\doteq\Tr H\rho_A\,$  as $\,\varepsilon\to 0$.
So, the first estimate in (\ref{qce-qc-2}) may be essentially  sharper than the second one for small $\,\varepsilon$.
\smallskip

The last claim of  Proposition \ref{qce-qc-1}
shows that the semicontinuity bound  (\ref{qce-qc-2}) is \emph{close to optimal}. It is  applicable
to any state  $\rho$ in $\S_{\mathrm{qc}}(\H_{AB})$ with finite $S(\rho_A)$, since for any such state there is a positive operator $H$ on $\H_A$ satisfying conditions (\ref{H-cond}) and (\ref{star}) such that $\Tr H\rho_A<+\infty$ \cite[Proposition 4]{EC}. A simple semicontinuty bound  for the QCE on $\S_{\mathrm{qc}}(\H_{AB})$ applicable
to any state  $\rho$ in $\S_{\mathrm{qc}}(\H_{AB})$ with finite $S(\rho_A)$ can be obtained by using Lemma \ref{b-lemma}A in Section 3. \smallskip

\begin{proposition}\label{CE-SCB-2}
\emph{Let $\rho$ be a state in $\,\S_{\mathrm{qc}}(\H_{AB})$ such that $\,S(A|B)_{\rho}<+\infty$. Then
\begin{equation}\label{CE-SCB-2+}
S(A\vert B)_{\rho}-S(A\vert B)_{\sigma}\leq \tilde{S}(A|B)_{\rho\wedge\varepsilon}+h^{\uparrow}(\varepsilon)
\end{equation}
for any state $\sigma$ in $\S_{\mathrm{qc}}(\H_{AB})$ such that $\,\frac{1}{2}\|\rho-\sigma\|_1\leq\varepsilon\in(0,1]$, where $\tilde{S}(A|B)$ is the homogeneous extension of the QCE to the cone generated by the set $\S_{\mathrm{qc}}(\H_{AB})$
of all q-c states  defined according to the rule (\ref{G-ext})), $h^{\uparrow}(\varepsilon)$ is the function defined in (\ref{h+}) and
$$
\rho\wedge\varepsilon\doteq\sum_k \left((p_k\rho_k)\wedge\varepsilon I_{A}\right)\otimes|k\rangle\langle k|
$$
\centerline{(here $(p_k\rho_k)\wedge\varepsilon I_{A}$ is the operator in $\T(\H_A)$ defined in (\ref{2-op})).}}\smallskip

\emph{The semicontinuity bound (\ref{CE-SCB-2+}) is faithful: its right hand side  tends to zero  as $\,\varepsilon\to0$.}
\end{proposition}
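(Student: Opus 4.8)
The plan is to apply Lemma \ref{b-lemma}A to the function $f=S(A\vert B)$ on the convex set $\S_0=\S_{\mathrm{qc}}(\H_{AB})$. As noted before the proposition, the representation (\ref{ce-rep}) together with the upper bound (\ref{ce-ub}) show that $f$ is nonnegative on $\S_{\mathrm{qc}}(\H_{AB})$ and satisfies inequalities (\ref{LAA-1}) and (\ref{LAA-2}) with $a_f=0$ and $b_f=h$, i.e. $f\in L^1_2(1,1\vert\S_{\mathrm{qc}}(\H_{AB}))$. Hence $a^{\uparrow}_f\equiv0$ and $b^{\uparrow}_f=h^{\uparrow}$, and since $f(\rho)=S(A\vert B)_{\rho}<+\infty$ by assumption, the right-hand side of the bound in Lemma \ref{b-lemma}A reduces to $\tilde{f}(\rho\wedge\varepsilon I_{\H_{AB}})+h^{\uparrow}(\varepsilon)$.

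First I would identify this homogeneous extension with the quantity in the statement. Since $\rho=\sum_k(p_k\rho_k)\otimes\vert k\rangle\langle k\vert$ is block-diagonal with respect to the basis $\{\vert k\rangle\}$, the operator $\rho\wedge\varepsilon I_{\H_{AB}}$ defined in (\ref{2-op}) coincides with $\rho\wedge\varepsilon=\sum_k\big((p_k\rho_k)\wedge\varepsilon I_A\big)\otimes\vert k\rangle\langle k\vert$, because the spectral truncation acts within each block. Using $\tilde{S}(p_k\rho_k)=p_kS(\rho_k)$ and the homogeneity of the extension (\ref{G-ext}), one gets $\tilde{f}(\rho\wedge\varepsilon I_{\H_{AB}})=\tilde{S}(A\vert B)_{\rho\wedge\varepsilon}=\sum_k\tilde{S}\big((p_k\rho_k)\wedge\varepsilon I_A\big)$, which is exactly the term appearing in (\ref{CE-SCB-2+}); it is finite by Remark \ref{ttt}.

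The main obstacle is that Lemma \ref{b-lemma}A requires $[\rho,\sigma]=0$, whereas two generic q-c states need not commute (the blocks $p_k\rho_k$ and $q_k\sigma_k$ may fail to commute). The plan is to remove this restriction by a block-wise Mirsky reduction, in the same spirit as the passage from commuting states to arbitrary states in Proposition \ref{S-SCB-2}. We may assume $S(A\vert B)_{\sigma}<+\infty$, since otherwise the left-hand side of (\ref{CE-SCB-2+}) equals $-\infty$. In each block I would replace $q_k\sigma_k$ by the positive operator $S'_k$ having the same eigenvalues but diagonal in the eigenbasis of $p_k\rho_k$ (eigenvalues of both operators listed in non-increasing order); the resulting $\sigma'=\sum_k S'_k\otimes\vert k\rangle\langle k\vert$ is again a q-c state with $\Tr S'_k=q_k$, it commutes with $\rho$, and $S(A\vert B)_{\sigma'}=S(A\vert B)_{\sigma}$ because the block entropies depend only on the eigenvalues. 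By the Mirsky inequality (\ref{Mirsky-ineq+}) applied in each block, $\|p_k\rho_k-S'_k\|_1\le\|p_k\rho_k-q_k\sigma_k\|_1$, whence $\frac{1}{2}\|\rho-\sigma'\|_1\le\frac{1}{2}\|\rho-\sigma\|_1\le\varepsilon$. Choosing a common eigenbasis of the commuting q-c states $\rho,\sigma'$ that respects the blocks $\{\vert k\rangle\}$, the auxiliary states $\tau_\pm,\omega_*$ built in the proof of Lemma \ref{b-lemma}A are diagonal in this basis and therefore lie in $\S_{\mathrm{qc}}(\H_{AB})$; thus that proof applies with $\S_0=\S_{\mathrm{qc}}(\H_{AB})$ even without verifying (\ref{S-prop+-+}) in full. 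Applying Lemma \ref{b-lemma}A to $\rho$ and $\sigma'$ and substituting $S(A\vert B)_{\sigma'}=S(A\vert B)_{\sigma}$ yields (\ref{CE-SCB-2+}).

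Finally I would verify faithfulness. As $\varepsilon\to0$ one has $h^{\uparrow}(\varepsilon)\to0$, while $\tilde{S}(A\vert B)_{\rho\wedge\varepsilon}=\sum_k\tilde{S}\big((p_k\rho_k)\wedge\varepsilon I_A\big)\to0$ by the same argument establishing faithfulness of the extended von Neumann entropy bound in Proposition \ref{S-SCB-2}, applied term by term; the interchange of the limit with the summation over $k$ is justified by the finiteness of $\sum_kp_kS(\rho_k)=S(A\vert B)_{\rho}$, which bounds all partial sums via the superadditivity of $\tilde{S}$ noted after (\ref{S-ext}). I expect the only genuinely delicate point to be the reduction to commuting states together with the check that the surrogate $\sigma'$ is a legitimate q-c state preserving the conditional entropy; the remaining steps are routine.
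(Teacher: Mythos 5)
Your proposal is correct and follows essentially the same route as the paper: the paper's proof simply defers to the argument of Proposition 5 in \cite{LLB} combined with Lemma \ref{b-lemma}A, and that argument is exactly your block-wise Mirsky reduction to a commuting q-c surrogate $\sigma'$ followed by an application of Lemma \ref{b-lemma}A with $a_f=0$, $b_f=h$. Your additional checks (that $\tau_\pm,\omega_*$ stay in $\S_{\mathrm{qc}}(\H_{AB})$, the identification $\tilde f(\rho\wedge\varepsilon I_{\H_{AB}})=\sum_k\tilde S((p_k\rho_k)\wedge\varepsilon I_A)$, and the dominated-convergence argument for faithfulness) are all sound.
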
\smallskip

\emph{Proof.} It suffices to repeat the arguments from the proof of the second claim of Proposition 5 in \cite{LLB} with the use
of Lemma \ref{b-lemma}A in Section 3 (instead of Lemma 2A in \cite{LLB}). $\Box$ \smallskip

By using expression (\ref{ce-rep}) Proposition \ref{CE-SCB-2} can be rewritten as follows\smallskip

\begin{corollary}\label{CE-LLB-c}  \emph{Let $\{p_k,\rho_k\}$ be an  ensemble of
states in $\S(\H)$ such that $\,\sum_kp_kS(\rho_k)<+\infty\,$ then
\begin{equation}\label{CE-LB-3++c}
\sum_kp_kS(\rho_k)-\sum_kq_kS(\sigma_k)\leq\sum_k\tilde{S}((p_k\rho_k)\wedge\varepsilon I_{\H})+h^{\uparrow}(\varepsilon)
\end{equation}
for any ensemble $\{q_k,\sigma_k\}$ of states in $\S(\H)$ such that  $\frac{1}{2}\sum_k\|p_k\rho_k-q_k\sigma_k\|_1\leq\varepsilon$,
where $\tilde{S}$  is the extended von Neumann  entropy defined in (\ref{S-ext}) and $\,h^{\uparrow}$ is the function defined in (\ref{h+}).}
\end{corollary}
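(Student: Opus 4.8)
The plan is to recognize (\ref{CE-LB-3++c}) as a literal re-encoding of Proposition \ref{CE-SCB-2} through the dictionary (\ref{ce-rep}) between the QCE of a quantum–classical state and the averaged entropy of its ensemble. First I would introduce an auxiliary system $B$ with a fixed orthonormal basis $\{|k\rangle\}$ and, identifying the single-copy space $\H$ with $\H_A$, attach to the two ensembles the q-c states
\[
\rho=\sum_k p_k\,\rho_k\otimes|k\rangle\langle k|,\qquad \sigma=\sum_k q_k\,\sigma_k\otimes|k\rangle\langle k|
\]
in $\S_{\mathrm{qc}}(\H_{AB})$. By the defining formula (\ref{ce-rep}) for the QCE on $\S_{\mathrm{qc}}(\H_{AB})$ one has $S(A|B)_\rho=\sum_k p_k S(\rho_k)$ and $S(A|B)_\sigma=\sum_k q_k S(\sigma_k)$; in particular the hypothesis $\sum_k p_k S(\rho_k)<+\infty$ is exactly the finiteness assumption $S(A|B)_\rho<+\infty$ required by Proposition \ref{CE-SCB-2}.

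Next I would convert the metric hypothesis. Because the summands $p_k\rho_k-q_k\sigma_k$ are supported on the mutually orthogonal subspaces $\H_A\otimes|k\rangle$, the trace norm is additive over blocks, so $\|\rho-\sigma\|_1=\sum_k\|p_k\rho_k-q_k\sigma_k\|_1$, and the assumption $\frac{1}{2}\sum_k\|p_k\rho_k-q_k\sigma_k\|_1\leq\varepsilon$ becomes precisely $\frac{1}{2}\|\rho-\sigma\|_1\leq\varepsilon$. Proposition \ref{CE-SCB-2} then applies and yields
\[
S(A|B)_\rho-S(A|B)_\sigma\leq\tilde{S}(A|B)_{\rho\wedge\varepsilon}+h^{\uparrow}(\varepsilon),
\]
where $\rho\wedge\varepsilon=\sum_k\big((p_k\rho_k)\wedge\varepsilon I_{\H}\big)\otimes|k\rangle\langle k|$.

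The only step needing care, and hence the main (modest) obstacle, is identifying the term $\tilde{S}(A|B)_{\rho\wedge\varepsilon}$ with the advertised sum $\sum_k\tilde{S}\big((p_k\rho_k)\wedge\varepsilon I_{\H}\big)$. Here I would unfold the homogeneous extension (\ref{G-ext}): setting $t=\Tr(\rho\wedge\varepsilon)=\sum_k\Tr\big((p_k\rho_k)\wedge\varepsilon I_{\H}\big)$ and assuming $t>0$ (the case $t=0$ being trivial, as then both sides vanish), the normalized operator $(\rho\wedge\varepsilon)/t$ is again a q-c state with weights $\tilde p_k=t^{-1}\Tr\big((p_k\rho_k)\wedge\varepsilon I_{\H}\big)$ and components $\tilde\rho_k=\big((p_k\rho_k)\wedge\varepsilon I_{\H}\big)/\Tr\big((p_k\rho_k)\wedge\varepsilon I_{\H}\big)$. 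Applying (\ref{ce-rep}) to it and multiplying back by $t$ gives
\[
\tilde{S}(A|B)_{\rho\wedge\varepsilon}=t\sum_k\tilde p_k S(\tilde\rho_k)=\sum_k\Tr\big((p_k\rho_k)\wedge\varepsilon I_{\H}\big)\,S(\tilde\rho_k)=\sum_k\tilde{S}\big((p_k\rho_k)\wedge\varepsilon I_{\H}\big),
\]
where the last equality is just the definition (\ref{S-ext}) of the extended von Neumann entropy of each block (blocks of zero trace contributing $0$ via $\tilde{S}(0)=0$). Substituting this identity together with the two QCE values from (\ref{ce-rep}) into the inequality obtained from Proposition \ref{CE-SCB-2} produces exactly (\ref{CE-LB-3++c}), completing the argument.
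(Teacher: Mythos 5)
Your proposal is correct and follows exactly the route the paper intends: the paper derives this corollary by "rewriting" Proposition \ref{CE-SCB-2} via the identification (\ref{ce-rep}), which is precisely your construction of the q-c states, the block-additivity of the trace norm, and the unfolding of the homogeneous extension $\tilde{S}(A|B)_{\rho\wedge\varepsilon}$ into $\sum_k\tilde{S}\bigl((p_k\rho_k)\wedge\varepsilon I_{\H}\bigr)$. You have merely spelled out the details that the paper leaves implicit.
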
 \smallskip

The semicontinuity  bound (\ref{CE-LB-3++c}) can be treated as a generalization of the  semicontinuity  bound (\ref{S-SCB-2+}) for the von Neumann entropy.
\smallskip

Lemma \ref{b-lemma}B in Section 3 implies the following  local lower bound for the QCE on the set $\S_{\mathrm{qc}}(\H_{AB})$ of all q-c states having the form (\ref{qc-states}).\smallskip

\begin{proposition}\label{CE-LLB}
\emph{Let $\rho$ be an arbitrary state in $\S_{\mathrm{qc}}(\H_{AB})$. Then
\begin{equation}\label{CE-LLB+}
S(A\vert B)_{\sigma}\geq \tilde{S}(A|B)_{\rho\ominus\varepsilon}-h^{\uparrow}(\varepsilon)
\end{equation}
for any  state $\sigma$ in $\S_{\mathrm{qc}}(\H_{AB})$ such that  $\,\frac{1}{2}\|\rho-\sigma\|_1\leq \varepsilon\in(0,1]$,
where $\tilde{S}(A|B)$ is the homogeneous extension of the QCE to the cone generated by the set $\S_{\mathrm{qc}}(\H_{AB})$
of all q-c states defined according to the rule (\ref{G-ext}), $h^{\uparrow}$ is the function defined in (\ref{h+}) and
$$
\rho\ominus\varepsilon\doteq\sum_k [p_k\rho_k-\varepsilon I_{A}]_+\otimes|k\rangle\langle k|
$$
\centerline{(here $[p_k\rho_k-\varepsilon I_{A}]_+$  is the
positive part of the Hermitian  operator $\,p_k\rho_k-\varepsilon I_{A}$).}}\smallskip

\emph{The lower bounds (\ref{CE-LLB+}) is faithful: its right hand sides tend to $\,S(A\vert B)_{\rho}\leq+\infty\,$ as $\,\varepsilon\to0$.}
\end{proposition}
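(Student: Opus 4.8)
The plan is to apply Lemma \ref{b-lemma}B to the function $f=S(A\vert B)$ on the convex set $\S_0=\S_{\mathrm{qc}}(\H_{AB})$, after reducing the general case to the case of commuting states by means of the Mirsky inequality (\ref{Mirsky-ineq+}). Recall that, as noted before Proposition \ref{qce-qc-1}, the QCE is a nonnegative function on $\S_{\mathrm{qc}}(\H_{AB})$ belonging to the class $L^1_2(1,1\vert\,\S_{\mathrm{qc}}(\H_{AB}))$, i.e. it satisfies inequalities (\ref{LAA-1}) and (\ref{LAA-2}) with $a_f=0$ and $b_f=h$. Consequently the terms $2a^{\uparrow}_f(\varepsilon)$ and $a_f(1-r_\varepsilon)$ in the bound (\ref{g-ob-c+}) vanish and $b^{\uparrow}_f=h^{\uparrow}$, so that Lemma \ref{b-lemma}B will produce exactly the right-hand side of (\ref{CE-LLB+}) once the operator $[\rho-\varepsilon I_{\H_{AB}}]_+$ is identified with $\rho\ominus\varepsilon$. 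This identification is immediate: since $I_{\H_{AB}}=\sum_k I_A\otimes\vert k\rangle\langle k\vert$ and $\rho$ is block-diagonal, we have $[\rho-\varepsilon I_{\H_{AB}}]_+=\sum_k[p_k\rho_k-\varepsilon I_A]_+\otimes\vert k\rangle\langle k\vert=\rho\ominus\varepsilon$, and the homogeneous extension gives $\tilde S(A\vert B)_{\rho\ominus\varepsilon}=\sum_k\tilde S([p_k\rho_k-\varepsilon I_A]_+)$.

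First I would dispose of the hypotheses of Lemma \ref{b-lemma}B. The operator $\rho\ominus\varepsilon$ has finite rank for every $\varepsilon>0$ (only finitely many eigenvalues of the operators $p_k\rho_k$ exceed $\varepsilon$, and their total number does not exceed $1/\varepsilon$), so $\tilde S(A\vert B)_{\rho\ominus\varepsilon}<+\infty$ for all $\varepsilon\in(0,1]$; this is precisely the finiteness condition needed in Lemma \ref{b-lemma}B, and it holds even when $S(A\vert B)_\rho=+\infty$. I would also check that $\S_{\mathrm{qc}}(\H_{AB})$ satisfies condition (\ref{S-prop+-+}), which follows from the fact that any operator commuting with and dominated by a block-diagonal operator is itself block-diagonal in the basis $\{\vert k\rangle\}$.

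The main step is the reduction to commuting states. Given $\sigma=\sum_k q_k\sigma_k\otimes\vert k\rangle\langle k\vert$ with $\frac12\|\rho-\sigma\|_1\leq\varepsilon$, I would rotate each component $\sigma_k$ by a unitary $U_k$ on $\H_A$ so that $\sigma'_k\doteq U_k\sigma_k U_k^*$ is diagonal in the eigenbasis of $\rho_k$ with its eigenvalues arranged in the same (non-increasing) order as those of $\rho_k$, and set $\sigma'=\sum_k q_k\sigma'_k\otimes\vert k\rangle\langle k\vert$. Then $\sigma'\in\S_{\mathrm{qc}}(\H_{AB})$ commutes with $\rho$, the unitary invariance of the von Neumann entropy and (\ref{ce-rep}) give $S(A\vert B)_{\sigma'}=\sum_k q_k S(\sigma'_k)=\sum_k q_k S(\sigma_k)=S(A\vert B)_\sigma$, and the Mirsky inequality (\ref{Mirsky-ineq+}) applied blockwise yields $\|p_k\rho_k-q_k\sigma'_k\|_1=\sum_i\vert\lambda^{p_k\rho_k}_i-\lambda^{q_k\sigma_k}_i\vert\leq\|p_k\rho_k-q_k\sigma_k\|_1$; summing over $k$ and using orthogonality of the blocks gives $\frac12\|\rho-\sigma'\|_1\leq\frac12\|\rho-\sigma\|_1\leq\varepsilon$. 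Applying Lemma \ref{b-lemma}B to the commuting pair $\rho,\sigma'$ now gives $S(A\vert B)_\sigma=S(A\vert B)_{\sigma'}\geq\tilde S(A\vert B)_{\rho\ominus\varepsilon}-h^{\uparrow}(\varepsilon)$, which is (\ref{CE-LLB+}).

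Finally, for faithfulness I would let $\varepsilon\to0$: since $[p_k\rho_k-\varepsilon I_A]_+$ increases to $p_k\rho_k$ in each block, the faithfulness already established for the von Neumann entropy in Proposition \ref{S-LLB} (the convergence $\tilde S([\cdot-\varepsilon I]_+)\to\tilde S(\cdot)$) applied blockwise, together with monotone convergence in the sum over $k$, gives $\tilde S(A\vert B)_{\rho\ominus\varepsilon}\to\sum_k p_k S(\rho_k)=S(A\vert B)_\rho\leq+\infty$, while $h^{\uparrow}(\varepsilon)\to0$. I expect the only delicate point to be the blockwise Mirsky reduction — in particular verifying that the eigenvalue-matching rotation simultaneously preserves the QCE and does not increase the trace distance — together with the verification that Lemma \ref{b-lemma}B is genuinely applicable when $S(A\vert B)_\rho=+\infty$, where one must lean on the finiteness of $\tilde S(A\vert B)_{\rho\ominus\varepsilon}$ rather than on $f(\rho)<+\infty$.
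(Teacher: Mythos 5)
Your proposal is correct and follows essentially the same route as the paper, which simply refers to the proof of Proposition 5 in \cite{LLB} upgraded with Lemma \ref{b-lemma}B: the blockwise Mirsky reduction to a commuting q-c pair $\rho,\sigma'$, the identification $[\rho-\varepsilon I_{AB}]_+=\rho\ominus\varepsilon$, and the use of the finiteness of $\tilde S(A|B)_{\rho\ominus\varepsilon}$ (rather than of $S(A|B)_{\rho}$) to invoke part B are exactly the intended steps. One side remark: your justification of condition (\ref{S-prop+-+}) for $\S_{\mathrm{qc}}(\H_{AB})$ --- that any operator commuting with and dominated by a block-diagonal operator is itself block-diagonal --- is false as stated: if $\varrho$ has equal nonzero eigenvalues in two different blocks, a projector onto a cross-block superposition inside that eigenspace commutes with $\varrho$ and satisfies $c\varsigma\leq\varrho$ without being q-c. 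This does not damage the argument, since (in the spirit of Remark \ref{S-prop-r}) it suffices that the specific operators $\tau_{\pm}$ and $\omega_*$ constructed from the commuting pair $\rho,\sigma'$ lie in $\S_{\mathrm{qc}}(\H_{AB})$, which holds because $[\rho-\sigma']_{\pm}$ are block-diagonal whenever $\rho$ and $\sigma'$ are; but the blanket claim should be replaced by this weaker, sufficient verification.
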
\smallskip

\emph{Proof.} It suffices to repeat the arguments from the proof of the first  claim of Proposition 5 in \cite{LLB} with the use
of Lemma \ref{b-lemma}B in Section 3 (instead of Lemma 2B in \cite{LLB}). $\Box$ \smallskip

\begin{remark}\label{CE-LLB-r}
The lower bound (\ref{CE-LLB+}) is universal: it is applicable to any q-c state $\rho$
(including q-c states at which QCE is equal to $+\infty$). Another advantage of the lower bound (\ref{CE-LLB+})
consists in the fact that its r.h.s.  is easily calculated, since the series in the r.h.s. of (\ref{CE-LLB+}) contains a finite number of nonzero summands and
all the operators  $[p_k\rho_k-\varepsilon I_{\H}]_+$ have finite rank.
\end{remark}\smallskip

By using expression (\ref{ce-rep}) Proposition \ref{CE-LLB} can be rewritten as follows\smallskip

\begin{corollary}\label{CE-LLB-ctt}  \emph{Let $\{p_k,\rho_k\}$ be an arbitrary ensemble of
states in $\S(\H)$. Then
\begin{equation}\label{S-LLB++}
\sum_kq_kS(\sigma_k)\geq \sum_k\tilde{S}([p_k\rho_k-\varepsilon I_{\H}]_+)-h^{\uparrow}(\varepsilon)
\end{equation}
for any ensemble $\{q_k,\sigma_k\}$ of states in $\S(\H)$ such that  $\frac{1}{2}\sum_k\|p_k\rho_k-q_k\sigma_k\|_1\leq\varepsilon$,
where $\tilde{S}$  is the extended von Neumann  entropy defined in (\ref{S-ext}) and $h^{\uparrow}$ is the function defined in (\ref{h+}).}
\end{corollary}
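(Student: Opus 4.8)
The plan is to realize the two ensembles as quantum-classical states and then read off the corollary as a verbatim translation of Proposition~\ref{CE-LLB}. First I would fix an auxiliary space $\H_B$ with orthonormal basis $\{\vert k\rangle\}$ and set
$$
\rho=\sum_{k} p_k\, \rho_k\otimes \vert k\rangle\langle k\vert \quad \textrm{and}\quad \sigma=\sum_{k} q_k\, \sigma_k\otimes \vert k\rangle\langle k\vert,
$$
which are states in $\S_{\mathrm{qc}}(\H_{AB})$ of the form (\ref{qc-states}).

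Next I would verify the hypothesis of Proposition~\ref{CE-LLB}. Since the summands $(p_k\rho_k-q_k\sigma_k)\otimes\vert k\rangle\langle k\vert$ live on the pairwise orthogonal subspaces $\H_A\otimes\mathbb{C}\vert k\rangle$, the Hermitian operator $\rho-\sigma$ is block-diagonal, so its trace norm splits as the sum of the trace norms of the blocks, giving
$$
\textstyle\frac{1}{2}\|\rho-\sigma\|_1=\frac{1}{2}\sum_k\|p_k\rho_k-q_k\sigma_k\|_1\leq\varepsilon.
$$
Hence Proposition~\ref{CE-LLB} applies and yields $S(A\vert B)_{\sigma}\geq \tilde{S}(A|B)_{\rho\ominus\varepsilon}-h^{\uparrow}(\varepsilon)$, with $\rho\ominus\varepsilon=\sum_k [p_k\rho_k-\varepsilon I_{A}]_+\otimes\vert k\rangle\langle k\vert$.

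It then remains to translate both sides. The left side is immediate from the defining expression (\ref{ce-rep}) of the QCE on $\S_{\mathrm{qc}}(\H_{AB})$, namely $S(A\vert B)_{\sigma}=\sum_k q_k S(\sigma_k)$, which holds on all of $\S_{\mathrm{qc}}(\H_{AB})$ by the convention adopted after (\ref{ce-rep}) even when $S(\sigma_A)=+\infty$. For the right side I would compute the homogeneous extension of the QCE on the positive operator $\omega\doteq\rho\ominus\varepsilon$ via its definition (\ref{G-ext}): writing $\omega_k\doteq[p_k\rho_k-\varepsilon I_{A}]_+$ and $t_k\doteq\Tr\omega_k$, the normalized operator $\omega/\Tr\omega$ is the q-c state with ensemble $\{t_k/\Tr\omega,\,\omega_k/t_k\}$, so (\ref{ce-rep}) together with (\ref{G-ext}) gives
$$
\tilde{S}(A|B)_{\rho\ominus\varepsilon}=(\Tr\omega)\sum_k \frac{t_k}{\Tr\omega}\, S(\omega_k/t_k)=\sum_k t_k\, S(\omega_k/t_k)=\sum_k\tilde{S}([p_k\rho_k-\varepsilon I_{\H}]_+),
$$
the last step being the definition (\ref{S-ext}) of $\tilde{S}$ applied blockwise. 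Substituting the two identities into the inequality from Proposition~\ref{CE-LLB} produces (\ref{S-LLB++}).

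Since the whole argument is a dictionary translation through the identification $\{p_k,\rho_k\}\leftrightarrow\rho$, no genuine obstacle arises; the only two points requiring a line of justification are the additivity of the trace norm over the orthogonal blocks and the blockwise evaluation of $\tilde{S}(A|B)$, both of which are forced by the orthogonality of $\{\vert k\rangle\}$ together with the definitions (\ref{G-ext}), (\ref{S-ext}) and (\ref{ce-rep}).
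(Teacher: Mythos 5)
Your proposal is correct and follows exactly the route the paper intends: the paper introduces Corollary \ref{CE-LLB-ctt} with the words ``By using expression (\ref{ce-rep}) Proposition \ref{CE-LLB} can be rewritten as follows,'' and your dictionary translation (building the q-c states, splitting the trace norm over the orthogonal blocks, and evaluating $S(A\vert B)$ and its homogeneous extension blockwise via (\ref{ce-rep}), (\ref{G-ext}) and (\ref{S-ext})) is precisely that rewriting, with the two small verifications spelled out correctly.
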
 \smallskip

The lower bound (\ref{S-LLB++}) can be treated as a generalization of the lower bound (\ref{S-LLB+}) for the von Neumann entropy.

\subsection{Entanglement of Formation}

In this section we show how the semicontinuity bounds for the Entanglement of Formation obtained in \cite{LCB} can be  improved
by using the results of Section 4.4.

The Entanglement of
Formation (EoF) is one of the basic entanglement measures in bipartite quantum systems \cite{Bennett,4H,P&V}. In a finite-dimensional bipartite system $AB$ the EoF is defined as the convex roof extension to the set $\S(\H_{AB})$ of the function $\omega\mapsto S(\omega_{A})$ on the set $\mathrm{ext}\shs\S(\H_{AB})$ of pure states in $\S(\H_{AB})$ , i.e.
\begin{equation*}
  E_{F}(\omega)=\inf_{\sum_kp_k\omega_k=\omega}\sum_k p_kS([\omega_k]_{A}),
\end{equation*}
where the infimum (which is always attained) is over all finite ensembles $\{p_k, \omega_k\}$ of pure states in $\S(\H_{AB})$ with the average state $\omega$ \cite{Bennett}.
In this case $E_F$ is a uniformly continuous function on $\S(\H_{AB})$  possessing basic properties of an entanglement measure \cite{Nielsen,4H,P&V}.

If both systems $A$ and $B$ are infinite-dimensional then  there are two versions $E_F^d$ and $E_F^c$ of the EoF defined, respectively, by means of discrete and continuous convex roof extensions
\begin{equation}\label{E_F-def-d}
E_F^d(\omega)=\!\inf_{\sum_k\!p_k\omega_k=\omega}\sum_kp_kS([\omega_k]_A),\quad
\end{equation}
\begin{equation}\label{E_F-def-c}
E_F^c(\omega)=\!\inf_{\int\omega'\mu(d\omega')=\omega}\int\! S(\omega'_A)\mu(d\omega'),
\end{equation}
where the  infimum in (\ref{E_F-def-d}) is over all countable ensembles $\{p_k, \omega_k\}$ of pure states in $\S(\H_{AB})$ with the average state $\omega$ and the  infimum in (\ref{E_F-def-c}) is over all Borel probability measures on the set of pure states in $\S(\H_{AB})$ with the barycenter $\omega$ (the last infimum is always attained). It follows from the definitions that $E_F^d(\omega)\geq E_F^c(\omega)$, while it is known that $E_F^d(\omega)=E_F^c(\omega)$ for any state $\omega$ in $\S(\H_{AB})$ such that $\min\{S(\omega_{A}),S(\omega_{B})\}<+\infty$ \cite[Section 4.4]{QC}. The conjecture of coincidence of $E_F^d$ and $E_F^c$ on $\S(\H_{AB})$ is an interesting open question.

By repeating  the arguments from the proof of Proposition 4 in \cite{LCB} with the use of Proposition \ref{qce-qc-1}  (instead of Proposition 3 in \cite{LCB})
one can obtain the semicontinuity bounds for the EoF presented in the following proposition (improved version of Proposition 4 in \cite{LCB}).\smallskip

\begin{proposition}\label{EF-SCB} \emph{Let $AB$ be an infinite-dimensional bipartite quantum system.}\smallskip

\noindent A) \emph{If $\rho$ is a state in $\S(\H_{AB})$ such that $\rank\rho_A$ is finite then
\begin{equation*}
   E^*_F(\rho)-E^*_F(\sigma)\leq \{\delta\ln(\rank\rho_A)+h(\delta)\}_{\delta}^{\uparrow}\leq\delta\ln(\rank\rho_A)+h^{\uparrow}(\delta),\quad E^*_F=E_F^d,E_F^c,
\end{equation*}
for any state $\sigma$ in $\S(\H_{AB})$ such that $\,\frac{1}{2}\|\rho-\sigma\|_1\leq \varepsilon\leq1$,  where $\delta=\sqrt{\varepsilon(2-\varepsilon)}$,  $\,\{...\}^{\uparrow}_{\delta}$ is defined in (\ref{men-def+}) and $\,h^{\uparrow}(\delta)$ is the function defined in (\ref{h+}).}
\smallskip

\noindent B) \emph{If $\rho$ is a state in $\S(\H_{AB})$ such that $E\doteq\Tr H\rho_A<+\infty$, where $H$ is a positive operator on $\H_A$  satisfying conditions (\ref{H-cond}) and (\ref{star}), then
\begin{equation}\label{EF-SCB-B}
   E^*_F(\rho)-E^*_F(\sigma)\leq \delta F_H(E/\delta)+h^{\uparrow}(\delta),\quad E^*_F=E_F^d,E_F^c,
\end{equation}
for any state $\sigma$ in $\S(\H_{AB})$ such that $\,\frac{1}{2}\|\rho-\sigma\|_1\leq \varepsilon\leq1$, where $\,\delta=\sqrt{\varepsilon(2-\varepsilon)}$ and
$h^{\uparrow}(\varepsilon)$ is the function defined in (\ref{h+}).}
\end{proposition}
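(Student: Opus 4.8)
The plan is to reduce both semicontinuity bounds for the EoF to the corresponding bounds for the quantum conditional entropy on quantum-classical states established in Proposition \ref{qce-qc-1}. The bridge is the elementary observation that if $\{p_k,\omega_k\}$ is any ensemble of pure states in $\S(\H_{AB})$ with average $\omega$, then the associated q-c state on the system $AR$ (with $R$ a classical register)
\[
\theta=\sum_k p_k\,[\omega_k]_A\otimes|k\rangle\langle k|
\]
satisfies $S(A\vert R)_{\theta}=\sum_k p_k S([\omega_k]_A)$ by (\ref{ce-rep}), while its $A$-marginal is $\theta_A=\sum_k p_k[\omega_k]_A=\omega_A$. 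Hence $E^*_F(\omega)=\inf_{\theta}S(A\vert R)_{\theta}$, the infimum being over all such q-c states arising from pure-state ensembles (resp. measures) of $\omega$, and crucially $\rank\theta_A=\rank\omega_A$ and $\Tr H\theta_A=\Tr H\omega_A$ for every such $\theta$. Thus the rank- and energy-constraints on $\rho$ transfer verbatim to the first argument of the QCE bound.

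Next I would produce two q-c states that are close in trace norm. Starting from an optimal ensemble (or, for $E_F^c$, an optimal measure) $\{q_k,\phi_k\}$ for $\sigma$, I realize it by measuring, in a fixed basis $\{|k\rangle\}$, the purifying system $C$ of some purification $\tilde\sigma$ of $\sigma$ on $\H_{ABC}$. By Uhlmann's theorem I then choose a purification $\tilde\rho$ of $\rho$ on the same space with $|\langle\tilde\rho\vert\tilde\sigma\rangle|$ equal to the fidelity $F(\rho,\sigma)$; the Fuchs--van de Graaf inequality $F(\rho,\sigma)\geq1-\varepsilon$ together with the pure-state identity $\tfrac12\|\tilde\rho-\tilde\sigma\|_1=\sqrt{1-F(\rho,\sigma)^2}$ yields
\[
\tfrac12\|\tilde\rho-\tilde\sigma\|_1\leq\sqrt{1-(1-\varepsilon)^2}=\sqrt{\varepsilon(2-\varepsilon)}=\delta .
\]
Measuring $C$ in the basis $\{|k\rangle\}$ and tracing out $B$ is a single channel $\Phi$ from $ABC$ to $AR$; applying it to $\tilde\rho$ and $\tilde\sigma$ produces the q-c states $\theta_\rho=\sum_kp_k[\omega_k]_A\otimes|k\rangle\langle k|$ and $\theta_\sigma=\sum_kq_k[\phi_k]_A\otimes|k\rangle\langle k|$, and monotonicity of the trace norm under $\Phi$ gives $\tfrac12\|\theta_\rho-\theta_\sigma\|_1\leq\delta$.

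With this setup the two bounds become immediate. Since $\{p_k,\omega_k\}$ is some ensemble of $\rho$ one has $E^*_F(\rho)\leq S(A\vert R)_{\theta_\rho}$, while the optimality of $\{q_k,\phi_k\}$ gives $S(A\vert R)_{\theta_\sigma}=E^*_F(\sigma)$; hence $E^*_F(\rho)-E^*_F(\sigma)\leq S(A\vert R)_{\theta_\rho}-S(A\vert R)_{\theta_\sigma}$. I then apply Proposition \ref{qce-qc-1} to $\theta_\rho$ (whose rank and energy equal those of $\rho_A$) and $\theta_\sigma$, with $\varepsilon$ replaced by $\delta$: part A delivers $\{\delta\ln(\rank\rho_A)+h(\delta)\}^{\uparrow}_{\delta}\leq\delta\ln(\rank\rho_A)+h^{\uparrow}(\delta)$, and part B delivers $\delta F_H(E/\delta)+h^{\uparrow}(\delta)$, which are exactly the asserted right-hand sides of (\ref{EF-SCB-B}).

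The main obstacle I expect lies in the continuous version $E_F^c$, where the optimal decomposition is a probability measure on pure states and the register $R$ becomes a general measurable space: one must realize this measure as a (continuous) measurement of a purification and verify that the resulting object is a legitimate element of the quasi-classical framework of Section \ref{sec3}, so that Proposition \ref{qce-qc-1} (or its measure-theoretic analogue) still applies and $\tfrac12\|\theta_\rho-\theta_\sigma\|_1\leq\delta$ is preserved under the measurement channel. The remaining steps---the marginal identity $\theta_A=\rho_A$, the attainment of the optimal decomposition of $\sigma$, and the finiteness of $S(A\vert R)_{\theta_\rho}$ (which follows since $\supp[\omega_k]_A\subseteq\supp\rho_A$ forces $S([\omega_k]_A)\leq\ln\rank\rho_A$)---are routine and carry over directly from the proof of Proposition 4 in \cite{LCB}.
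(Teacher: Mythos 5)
Your proposal is correct and follows essentially the same route as the paper, whose proof consists precisely in repeating the purification--Uhlmann--measurement argument of Proposition 4 in \cite{LCB} (yielding two q-c states at trace distance at most $\delta=\sqrt{\varepsilon(2-\varepsilon)}$ with $A$-marginals $\rho_A$ and $\sigma_A$) and then invoking the improved semicontinuity bounds of Proposition \ref{qce-qc-1} in place of Proposition 3 of \cite{LCB}. The two caveats you flag --- handling the continuous convex roof $E_F^c$ via the measure-theoretic version of the quasi-classical framework, and (one you did not mention) using near-optimal rather than optimal ensembles for $E_F^d$, whose infimum need not be attained in infinite dimensions --- are exactly the technical points absorbed by the reference to \cite{LCB}.
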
\smallskip

\begin{remark}\label{fidelity}
In both  semicontinuity bounds for the EoF presented in Proposition \ref{EF-SCB}
 the condition $\,\frac{1}{2}\|\rho-\sigma\|_1\leq \varepsilon$, where
$\varepsilon$ is such that $\,\delta=\sqrt{\varepsilon(2-\varepsilon)}$, can be replaced by the
condition  $\,F(\rho,\sigma)\geq 1-\delta^2$, where  $F(\rho,\sigma)\doteq\|\sqrt{\rho}\sqrt{\sigma}\|_1^2$
is the fidelity of $\rho$ and $\sigma$ \cite{H-SCI,Wilde}.  Such replacement  makes these semicontinuity bounds essentially sharper in some cases.
\end{remark}\medskip

An obvious application of the semicontinuity bounds in Proposition \ref{EF-SCB}  is the ability to obtain continuity
bounds for the Entanglement of Formation (the estimates of $|E^*_F(\rho)-E^*_F(\sigma)|$) with different constrains on the states $\rho$ and $\sigma$.
However, there are tasks where one-sided estimates of $E^*_F(\rho)-E^*_F(\sigma)$  are needed in the absence of any  constraints on the state $\sigma$.
An example of such a task is the proof of the converse part of the claim of Theorem 7 in \cite{Lami-new} where the initial (non-optimised) version of the  semicontinuity bound (\ref{EF-SCB-B}) presented in Proposition 4 in \cite{LCB} is used essentially. Theorem 7 in \cite{Lami-new} states the coincidence of the Entanglement Cost with the regularized  Entanglement of Formation for any state $\omega$ of an infinite-dimensional bipartite quantum system $AB$ such that $\,\min\{S(\omega_{A}),S(\omega_{B})\}<+\infty$.

\section{Other applications}

\subsection{Characteristics of discrete random variables}

Using the advanced version of the Alicki-Fannes-Winter technique in the quasi-classical settings considered in Section 3
one can improve the continuity, semicontinuity bounds and local lower bounds for characteristics  of discrete random variables
presented in \cite[Section 4.4]{LCB},\cite[Section 5]{LLB}. Indeed, by applying Corollaries 1 and 2 in Section 3.2 (instead of Corollaries 1 and 2 in \cite{LCB}) one can prove the
advanced versions of all the continuity and semicontinuity bounds under constrains of different types presented in \cite[Section 4.4]{LCB} in which the term $g(\varepsilon)$ in the right hand sides is replaced by
the term $h^{\uparrow}(\varepsilon)$ defined in (\ref{h+}). The same replacement can be done in all the local lower bounds  presented in \cite[Section 5]{LLB}. This can be shown by applying the advanced version of Lemma 3 in \cite{LLB} obtained  with the use of Lemma 3 in Section 3.1 of this article.

\subsection{Characteristics of classical states of quantum oscillators}

Using the result from Section 3
one can improve the continuity and semicontinuity bounds for characteristics of classical states of a multi-mode quantum oscillator
presented in \cite[Section 4.5]{LCB}. Indeed, by applying Theorem \ref{main-2} in Section 3.2 (instead of Theorem 2 in \cite{LCB}) one can prove the
advanced version of Proposition 7 in \cite{LCB} in which the term $g(\varepsilon)$ in the in the right hand sides of both inequalities is replaced by
the term $h^{\uparrow}(\varepsilon)$ defined in (\ref{h+}). Accordingly, the same replacement can be done in all the inequalities in Example 3 in \cite{LCB}.
\bigskip\bigskip

I am grateful to A.S.Holevo and to the participants of his seminar  "Quantum Probability, Statistics, Information" (the Steklov  Mathematical Institute) for useful discussion.
Special thanks to the authors of the recent article \cite{D++}, who suggested to me a trick that was significantly used in Section 3 of this article.\medskip

\end{document}